\newcommand{\Tr}{\mathop{\bf Tr}}
\newcommand{\ca}[1]{\mathcal{#1}}
\newcommand{\bb}[1]{\mathbb{#1}}
\newcommand{\rank}{\text{rank}}
\newcommand{\vect}{\mbox{\bf vec}}
\newcommand{\argmin}{\mathop{\rm arg\,min}}
\newcommand{\argmax}{\mathop{\rm arg\,max}}
\newcommand{\dom}{\mathop{\bf dom}} % domain
\newcommand{\leqnomode}{\tagsleft@true}
\newcommand{\reqnomode}{\tagsleft@false}
\newtheorem{theorem}{Theorem}[section]
\newtheorem{lemma}[theorem]{Lemma}
\newtheorem{proposition}[theorem]{Proposition}
\newtheorem{remark}[theorem]{Remark}
\newtheorem*{assumption*}{\assumptionnumber}
\providecommand{\assumptionnumber}{}
\newenvironment{assumption}[1]
 {%
  \renewcommand{\assumptionnumber}{Assumption #1}%
  \begin{assumption*}%
  \protected@edef\@currentlabel{#1}%
 }
 {%
  \end{assumption*}
 }
\begin{document}
\pagestyle{fancy}
\title{Global Convergence of Policy Gradient for Sequential \\Zero-Sum Linear Quadratic Dynamic Games}
\author{Jingjing Bu \and Lillian J. Ratliff \and Mehran Mesbahi\thanks{The authors are with the University of Washington, Seattle, WA, 98195; Emails: \{bu,ratliffl,mesbahi\}@uw.edu}}
\date{}
\maketitle

\begin{abstract}
  We propose projection-free sequential algorithms for linear-quadratic dynamics games. These policy gradient based algorithms are akin to Stackelberg leadership model and can be extended to model-free settings. We show that if the ``leader'' performs natural gradient descent/ascent, then the proposed algorithm has a global sublinear convergence to the Nash equilibrium. Moreover, if the leader adopts a quasi-Newton policy, the algorithm enjoys a global quadratic convergence. Along the way, we examine and clarify the intricacies of adopting sequential policy updates for LQ games, namely, issues pertaining to stabilization, indefinite cost structure, and circumventing projection steps.
\end{abstract}

{\bf Keywords:} Dynamic LQ games; stabilizing policies; sequential algorithms
\section{Introduction}
\label{sec:intro}

Linear-quadratic (LQ) dynamic and differential games exemplify situations where
two players influence an underlying linear dynamics in order to respectively, minimize and maximize a given quadratic cost on the state and the control over an infinite time-horizon.\footnote{We will adopt the convention of referring to the continuous time scenario as differential games. Moreover, in this paper, we focus on infinite horizon LQ games without a discount factor.} This LQ game setup has a rich history in system and control theory, not only due to its wide range of applications but also since it directly extends its popular one player twin, the celebrated linear quadratic regular (LQR) problem~\cite{Bernhard2013-aa,Engwerda2005-np,Zhang2005-pe}.
LQR on the other hand, is one of the foundations of modern system theory~\cite{willems1971least, Anderson2007-mn}. This partially stems from the fact that the elegant analysis of minimizing a quadratic (infinite horizon) cost over an infinite dimensional function space leads to a solution that is in the constant feedback form, that can be obtained via solving the \emph{algebraic Riccati equation} (ARE)~\cite{Lancaster1995algebraic}. 
As such, solving LQR and its variants have often been approached from the perspective
of exploiting the structure of ARE~\cite{bini2012numerical}.
The ARE facilitates solving for the so-called cost-to-go (encoded by a positive semi-definite matrix), that
can subsequently be used to characterize the optimal state feedback gain.
This general point of view has also influenced the ``data-driven'' approaches for solving
the generic LQR problem. For instance, in the value-iteration for 
reinforcement learning (RL)---e.g., $Q$ learning---one aims to first estimate 
the cost-to-go at a given time instance and through this estimate, update the state feedback gain.
In recent years, RL has witnessed major advances for wide range of 
decision-making problems (see, e.g.,~\cite{silver2016mastering,mnih2015human}). 

Direct policy update is another algorithmic pillar
for decision-making over time.
The conceptual simplicity of policy optimization offers advantages in terms of
computational scalability and potential extensions to model-free settings. 
As such, there is a renewed interest in analyzing the classical LQR problem under the RL framework from the perspective of direct policy updates~\cite{dean2017sample, fazel2018global}.
The extension of LQR control to multiple-agent settings, i.e., LQ dynamic and differential 
games, has also been explored by the  game theory and optimal control
communities~\cite{basar1999dynamic,Vamvoudakis2017-ls}. 
Two-person zero-sum LQ dynamic and differential games
are particular instances of this more general setting where two players aim to optimize an objective with
directly opposing goals subject to a shared linear dynamical system. 
More
precisely, one player attempts to minimize the objective while the other aims to
maximize it. This framework has important applications in $\ca H_{\infty}$
optimal control~\cite{basar2008h}. In fact, as it is well-known in control theory,
the saddle point solution---i.e., a Nash equilibrium (NE)---of an
infinite horizon LQ game can be obtained via the corresponding \emph{generalized algebraic
Riccati equation} (GARE). As such, seeking the NE in a zero-sum LQ dynamic
game revolves around solving the 
GARE~\cite{stoorvogel1994discrete}. Recently, multi-agent
RL~\cite{srinivasan2018actor, jaderberg2019human} has also achieved impressive
performance by using direct policy gradient updates. Since LQ dynamic games have
explicit solutions via the GARE, understanding the performance of policy gradient
based algorithms for LQ games could serve as a benchmark, and providing deeper insights into theoretical
guarantees of  multi-agent RL in more general settings~\cite{Vamvoudakis2017-ls}. 
%\par
%This is also facilitated by the fact that compared with typical RL problems, LQ dynamic games have an advantage of admitting an explicit solution structure.
% for this class of games, although nonconvex and nonconcave, the structure of the solution accompanied by relevant theoretical analysis is feasible. 
In the meantime, the application of policy optimization algorithms in the game
setting proves to require more intricate analysis due to the fact that the infinite horizon cost is undiscounted and potentially unbounded per stage.
As such, it is well known that devising direct policy iterations for undiscounted and unbounded per stage cost functions in the RL setting is nontrivial~\cite{bertsekas2005dynamic}. The cost structure of standard LQR, however, streamlines the design of policy based iterations~\cite{hewer1971iterative, fazel2018global, bu2019lqr}.\footnote{In~\cite{fazel2018global, bu2019lqr} it has been assumed that the quadratic state cost is via a positive definite $Q$; this is not ``standard'' as only the detectability of the pair $(Q, A)$ and $Q \succeq 0$ is required for LQR synthesis.} Nevertheless in policy iteration, special care has to be exercised to ensure that the iterative policy updates are in fact stabilizing.
%due to the fact that players have individual
%objectives which do not, in general, align with one another. 
%it is more intricate to apply policy optimization algorithms in the game setting. 
%
The stabilization issues is particularly relevant in the LQ dynamic games.
Note that the policy space for LQ games is an open set admitting a cartesian product structure. 
Hence, in the policy updates for LQ dynamic games (say in the RL setting), we must guarantee that the iterates jointly stay in the open set as otherwise the ``simulator'' would diverge. Recently, Zhang \emph{et al.}~\cite{zhang2019policy}, using certain assumptions and relying on a projection step, have proposed a sequential direct policy updates with a sublinear convergence for LQ dynamic games. 
%In this paper, we aim to provide a As compared with the existing algorithms in the LQ literature, the proposed algorithm in~\cite{zhang2019policy} relies on stronger assumptions that, in our view, might be difficult to verify a priori. \par
\paragraph{{\bf Contributions.}} In this paper, we first clarify the setting for
discussing sequential LQ dynamic games, particularly addressing issues pertaining to stabilization. 
We then propose leader-follower type algorithms that resemble the Stackelberg leadership model~\cite{basar1999dynamic}. 
Specifically, in the proposed iterative algorithms for LQ games, one player is designated as a leader 
and the other as a follower. 
We require that the leader plays natural gradient or quasi-Newton policies while the follower can play any first-order based policies. In particular, we do not require a specific player to be the leader as long as the algorithm can be initialized appropriately. We prove that if the leader performs a natural gradient policy update, then the proposed leader-follower algorithm has a global sublinear convergence and asymptotic linear convergence rate. In the meantime, if the leader adopts a quasi-Newton policy update, the algorithm converges at a global quadratic rate.
%We require the player who maximizes the objective to perform natural gradient policy or quasi-Newton policy while the player who minimizes the objective could play gradient, natural gradient and quasi-Newton policies. Our analysis combines the results from control theory (particularly the analytic results pertaining to the ARE) and first-order optimization techniques. 
Moreover, we show that gradient policy (respectively, natural gradient 
and quasi-Newton policies) has a global linear (respectively, linear and
quadratic) convergence to the optimal stabilizing feedback gains even when the
state cost matrix $Q$ is indefinite. This result essentially extends
the results on standard LQR investigated in~\cite{fazel2018global, bu2019lqr},
where the analysis relies on the assumption of having $Q \succ 0$; this
extension is of independent interest for various optimal control
applications as well, e.g., control with conflicting
objectives~\cite{willems1971least}). Compared with the results presented
in~\cite{zhang2019policy}, the contributions of this work include
the following:
\begin{enumerate}
  \item[(1)] We remove the ``nonstandard assumption'' that the NE point $(K_*, L_*)$ satisfies $Q-L_*^{\top}R_2 L_* \succ 0$.\footnote{It is noted that in~\cite{zhang2019policy} that the projection step is not generally required in implementations; however the convergence analysis presented in~\cite{zhang2019policy} is based on such a projection.} We note that such an assumption is not  standard in the LQ literature and as such, needs further justification beyond its algorithmic implications.  In fact, in the analysis presented in~\cite{zhang2019policy}, it is crucial for the convergence of the  algorithm to a priori know the positive number $\varepsilon > 0$ for which $\{L: Q - L_*^{\top}R_2 L_* \succeq \varepsilon I\}$; moreover, one has to be able to project onto this set.  
  \item[(2)] Our setting allows for larger stepsizes for the policy iteration in LQ dynamic games, greatly improving its practical performance. This is facilitated by providing insights into the stabilizing policy updates through a careful analysis of the corresponding indefinite GARE.
\item[(3)] We clarify the interplay between key concepts in control and optimization in the convergence analysis of the proposed iterative algorithms for LQ dynamic games. This is inline with our belief that identifying the role of concepts such stabilizability and detectability in the convergence analysis of ``data-guided" algorithms for decision making problems  with an embedded dynamic system is of paramount importance.
  \item[(4)] We show that the quasi-Newton policy has a global quadratic convergence rate for LQ dynamic games. This result might be of independent interest for discrete-time GARE, data-driven or not. To the best of our knowledge, the proposed algorithm is the first iterative approach for discrete-time GARE with a global quadratic convergence.
    \item[(5)] Finally, we show that in the proposed iterative algorithms for LQ dynamic games, any player can assume the role of the ``leader" whereas in~\cite{zhang2019policy}, it is required that  the player maximizing the cost be designated as the leader. As such, we clarify the algorithmic source of asymmetry in the leader-follower setup for solving this class of dynamic game problems.
%heory and optimization theory in the context of whereas the analysis in~\cite{zhang2019policy} is more aligned with the techniques presented in~\cite{fazel2018global} and system theory almost plays no role in the proof.
  \end{enumerate}
\section{Notation and Background}
\label{sec:notations}
% Linear Algebra
We use the symbols $\prec, \preceq, \succ, \succeq$ to denote the ordering induced by the positive semidefinite (p.s.d.) cone. Namely, $A \succeq B$ means that $A - B$ is positive semidefinite. For a symmetric matrix $M \in \bb R^{n \times n}$, we denote the eigenvalues in non-increasing order, i.e., $\lambda_1(M) \le \ldots \le \lambda_n(M)$.
\par
% Control Theory
Let us recall relevant definitions and results from control theory. A matrix $A \in \bb R^{n \times n}$ is Schur if all the eigenvalues of $A$ are \emph{inside} the \emph{open unit disk} of $\bb C$, i.e., $\rho(A) < 1$ where $\rho(\cdot)$ denotes the spectral radius. A pair $(A, B)$ with $A \in \bb R^{n \times n}$ and $B \in \bb R^{n \times m}$ is stabilizable if there exists some $K \in \bb R^{m \times n}$ such that $A-BK$ is Schur. A pair $(C, A)$ is detectable if $(A^{\top}, C^{\top})$ is stabilizable. An eigenvalue $\lambda$ of $A$ is $(C, A)$-observable if
\begin{align*}
  \rank \begin{pmatrix}
    \lambda I - A \\
    C
    \end{pmatrix} = n.
\end{align*}
A matrix $K \in \bb R^{m \times n}$ is stabilizing for system pair $(A, B)$ if $A-BK$ is Schur; it is \emph{marginally stabilizing} if $\rho(A-BK) = 1$.
For fixed $A \in \bb R^{n \times n}$ and $Q \in \bb R^{n \times n}$,
  the Lyapunov matrix is of the form
  \begin{align*}
    A^{\top} X  A + Q - X= 0.
    \end{align*}
    For a system pair $(A, B)$, $Q \in \bb R^n$ and $R \in GL_n(\bb R)$,  the discrete algebraic Riccati equation (DARE) is of the form
    \begin{align}
      \label{eq:dare}
      A^{\top} X  A  - X - A^{\top} XB (R+B^{\top}XB)^{-1} B^{\top} XA + Q = 0.
      \end{align}
    Next, we recall a result on standard linear-quadratic-regulator (LQR) control.
      \begin{theorem}
        If $Q \succeq 0$, $R \succ 0$, $(A,B)$ is stabilizable and the spectrum of $A$ on the unit disk (centered at the origin) in $\bb C$ is $(Q, A)$-observable, then there exists a unique maximal solution $X^+$ to DARE~\eqref{eq:dare}.\footnote{Where the notion of maximality is with respect to the p.s.d. ordering.}
%    \begin{align*}
%      A^{\top} X  A  - X - A^{\top} XB (R+B^{\top}XB)^{-1} B^{\top} XA + Q = 0.
%      \end{align*}
      Moreover, the infinite-horizon LQR cost is $x_{0}^{\top} X^{+} x_{0}$ and the optimal feedback control $K_*$ is stabilizing and characterized by $K_* = (R+B^{\top}X^+ B)^{-1} B^{\top} X^+ A$.
        \end{theorem}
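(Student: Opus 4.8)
The plan is to construct $X^+$ as the limit of the \emph{Riccati value iteration} and then certify both its maximality and the stability of the induced closed loop. Define the Riccati operator
\[
\fk R(P) = A^{\top} P A + Q - A^{\top} P B (R + B^{\top} P B)^{-1} B^{\top} P A,
\]
set $P_0 = 0$, and iterate $P_{k+1} = \fk R(P_k)$. A completion-of-squares identity shows $\fk R(P) = \min_{K} \big[ (A - BK)^{\top} P (A - BK) + Q + K^{\top} R K \big]$, with the minimum (in the p.s.d. order) attained at $K_P = (R + B^{\top} P B)^{-1} B^{\top} P A$; this simultaneously identifies $P_k$ as the cost matrix of the $k$-horizon LQR problem with zero terminal penalty and shows that $\fk R$ is monotone, i.e., $P \preceq P'$ implies $\fk R(P) \preceq \fk R(P')$. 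First I would record two facts: (i) since every stage cost $x^{\top} Q x + u^{\top} R u$ is nonnegative, $\{P_k\}$ is nondecreasing; and (ii) since $(A,B)$ is stabilizable, fixing a constant stabilizing gain $\bar K$ and solving the corresponding Lyapunov equation yields $\bar P \succeq 0$ with $P_k \preceq \bar P$ for all $k$, because the finite-horizon optimum is dominated by the cost of a fixed stabilizing policy. A bounded monotone sequence of symmetric matrices converges, so $P_k \to X^+ \succeq 0$, and continuity of $\fk R$ (the inverted term satisfies $R + B^{\top} P_k B \succeq R \succ 0$ throughout) gives $\fk R(X^+) = X^+$, i.e., $X^+$ solves DARE~\eqref{eq:dare}.

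For the cost and feedback characterization, the finite-horizon values yield the lower bound $x_0^{\top} P_k x_0 \le J_{\infty}(x_0)$ for every $k$, hence $x_0^{\top} X^+ x_0 \le J_{\infty}(x_0)$; the matching upper bound follows from the Bellman/verification identity $x^{\top} X^+ x = \min_u \{ x^{\top} Q x + u^{\top} R u + (Ax+Bu)^{\top} X^+ (Ax + Bu)\}$, applied telescopically along the trajectory of the stationary policy $u_t = -K_* x_t$ with $K_* = (R + B^{\top} X^+ B)^{-1} B^{\top} X^+ A$, once that trajectory is shown to be square-summable (which follows from the closed-loop stability established next). This identifies the optimal cost as $x_0^{\top} X^+ x_0$ and $K_*$ as an optimal stationary feedback.

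The crux is showing $A_* := A - B K_*$ is Schur. Rewriting DARE at $X^+$ as the Lyapunov identity $X^+ = A_*^{\top} X^+ A_* + Q + K_*^{\top} R K_*$, take any eigenpair $A_* v = \lambda v$ with $v \neq 0$ and left/right multiply by $v^{*}$ and $v$ to obtain $(1 - |\lambda|^2)\, v^{*} X^+ v = v^{*}(Q + K_*^{\top} R K_*) v \ge 0$. A standard monotonicity argument from the value-iteration construction and stabilizability first places every such $\lambda$ in the closed unit disk, $|\lambda| \le 1$. To exclude the boundary, suppose $|\lambda| = 1$: then the left-hand side vanishes, forcing $v^{*}(Q + K_*^{\top} R K_*) v = 0$, hence $Q^{1/2} v = 0$ and, since $R \succ 0$, $K_* v = 0$; consequently $A v = A_* v = \lambda v$ with $Q v = 0$, exhibiting a $(Q,A)$-unobservable eigenvalue on the unit circle and contradicting the observability hypothesis. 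Therefore $\rho(A_*) < 1$. With $A_*$ Schur, maximality is immediate: for any symmetric solution $X$, the completion-of-squares representation evaluated at the suboptimal gain $K_*$ gives $X = \fk R(X) \preceq A_*^{\top} X A_* + Q + K_*^{\top} R K_*$, so $E := X - X^+$ obeys $E \preceq A_*^{\top} E A_*$; iterating and using $A_*^n \to 0$ yields $E \preceq 0$, i.e., $X \preceq X^+$, and uniqueness of the maximal solution follows since two maximal solutions dominate one another. I expect the boundary-exclusion step — converting unit-circle $(Q,A)$-observability into strict closed-loop stability, together with the preliminary $|\lambda| \le 1$ claim — to be the main obstacle and the genuinely control-theoretic content, while the monotone-convergence and comparison steps are routine once the operator identities are in place.
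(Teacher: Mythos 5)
Your toolkit is the right one in outline --- completion of squares, the Lyapunov rewriting $X^+ = A_*^{\top}X^+A_* + Q + K_*^{\top}RK_*$, exclusion of unit-circle closed-loop eigenvalues via $(Q,A)$-observability, and the comparison argument for maximality --- and several of these pieces are correct as written. (For calibration: the paper does not prove this theorem at all; it is recalled as classical background in the spirit of Lancaster--Rodman, so the only question is whether your argument stands on its own.)

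It does not, and the failure is structural rather than a fixable detail: the limit of value iteration started at $P_0 = 0$ need not be the maximal solution under the stated hypotheses, and your unproved claim that ``a standard monotonicity argument \ldots places every such $\lambda$ in the closed unit disk'' is exactly what is false. Concrete counterexample: $n=m=1$, $A=2$, $B=1$, $Q=0$, $R=1$. All hypotheses hold ($(A,B)$ stabilizable, $Q \succeq 0$, $R \succ 0$, and the observability condition is vacuous since $A$ has no spectrum on the unit circle). The DARE~\eqref{eq:dare} becomes $3x - 4x^2/(1+x) = 0$, with solutions $x \in \{0,3\}$; the maximal solution is $X^+ = 3$, giving $K_* = 3/2$ and the Schur closed loop $A - BK_* = 1/2$, exactly as the theorem asserts. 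But your iteration has $\fk R(0) = Q = 0$, so $P_k \equiv 0$: your limit is the solution $0$, whose closed-loop matrix is $A = 2$ itself --- not Schur, not maximal, and its ``cost'' $0$ is the unconstrained infimum, not the LQR cost over stabilizing policies. The root cause is that the hypothesis only forbids unobservable $(Q,A)$-modes \emph{on the unit circle}; unobservable modes strictly outside the disk are allowed, and value iteration from zero is blind to them (they contribute nothing to any finite-horizon cost), whereas the maximal solution must pay to stabilize them. Since your maximality comparison ($E \preceq A_*^{\top} E A_*$, iterate, use $A_*^n \to 0$) and your cost verification are both conditioned on $A_*$ being Schur, the whole proof collapses with this step. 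The standard repair is to construct $X^+$ by a \emph{decreasing} scheme initialized at a stabilizing gain (Hewer/Newton--Kleinman): take $K_0$ with $A-BK_0$ Schur, let $X_i$ solve the Lyapunov equation $X_i = (A-BK_i)^{\top}X_i(A-BK_i) + Q + K_i^{\top}RK_i$, and set $K_{i+1} = (R+B^{\top}X_iB)^{-1}B^{\top}X_iA$; this sequence is nonincreasing, bounded below, and converges to the maximal solution, after which your unit-circle-exclusion argument (which is sound) upgrades $\rho(A_*) \le 1$ to $\rho(A_*) < 1$. Increasing value iteration from zero does work when $(Q,A)$ is detectable, but detectability is strictly stronger than the unit-circle observability assumed here.
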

        In the presentation, we shall refer a solution $X_0$ to~\eqref{eq:dare} as \emph{stabilizing} if the corresponding feedback gain $K_0 = (R+B^{\top}X_0B)^{-1}B^{\top} X_0 A$ is stabilizing; a solution $X_0$ is \emph{almost stabilizing} if the corresponding gain $K_0 = (R+B^{\top} X_0 B)^{-1}B^{\top}X_0A$ is marginally stabilizing.\par
          In the sequential LQ game setup, one key difference from standard LQR is that $Q$ may be indefinite in the corresponding DARE. As such, the following generalization of the above theorem becomes particularly relevant.
          \begin{theorem}
        Suppose that $Q = Q^{\top}$, $R \succ 0$, $(A,B)$ is stabilizable and there exists a solution $X$ to DARE~\eqref{eq:dare}.
%    \begin{align*}
%      A^{\top} X  A - X + Q - A^{\top}X B (R+B^{\top}XB)^{-1}B^{\top} X A = 0.
%      \end{align*}
      Then there exists a maximal solution $X^+$ to DARE such that the LQR cost is given by $x_0^{\top} X^{+}x_0$. Moreover the optimal feedback control is given by $K_* = (R+B^{\top}XB)^{-1} B^{\top} X^+ A$ and the eigenvalues of $A-BK_*$ lie inside the {closed} unit disk of $\bb C$.
        \end{theorem}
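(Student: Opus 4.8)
\emph{Proof proposal.} The plan is to use the hypothesized solution $X$ to turn the indefinite-$Q$ equation into an equivalent \emph{homogeneous} Riccati equation with a positive semidefinite state cost, and then recover the maximal solution by perturbing that cost to be strictly positive definite so that the preceding LQR theorem applies. Concretely, I would first invoke the Riccati substitution identity. With $R_X := R + B^{\top} X B$, $K_X := R_X^{-1} B^{\top} X A$, and $A_X := A - B K_X$, a direct computation (using the matrix inversion lemma) shows that for every symmetric $Y$, substituting $\bar X = X + Y$ into the left-hand side of~\eqref{eq:dare} produces the operator $A_X^{\top} Y A_X - Y - A_X^{\top} Y B (R_X + B^{\top} Y B)^{-1} B^{\top} Y A_X$ plus the left-hand side evaluated at $X$. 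Since $X$ solves~\eqref{eq:dare}, $\bar X = X + Y$ is a solution if and only if $Y$ solves the homogeneous DARE for the pair $(A_X, B)$ with control weight $R_X$ and zero state cost. The affine map $Y \mapsto X + Y$ is an order isomorphism of the two solution sets, so it carries the maximal solution to the maximal solution; a further short calculation gives $K_{\bar X} = K_X + K'$ and hence $A - B K_{\bar X} = A_X - B K'$, so the closed-loop matrices coincide. Because stabilizability is preserved by state feedback, $(A_X, B)$ is stabilizable, and once $Y \succeq 0$ one has $R_{\bar X} = R_X + B^{\top} Y B \succeq R_X$, which I will use for the positive definiteness needed below.

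Next I would solve the homogeneous equation by a vanishing-perturbation argument. For $\delta > 0$, consider the DARE with the same data but state cost $\delta I \succ 0$. Since $\delta I \succ 0$, $R_X \succ 0$, and $(A_X, B)$ is stabilizable, the positive definite cost makes every boundary mode observable, so the preceding theorem yields a stabilizing solution $Y_\delta \succeq 0$ with $A_X - B K_\delta'$ Schur. By the comparison principle for DAREs, $\delta_1 \ge \delta_2$ implies $Y_{\delta_1} \succeq Y_{\delta_2}$, and any symmetric solution $Y$ of the homogeneous equation satisfies $Y \preceq Y_\delta$ for all $\delta > 0$. Hence $\{Y_\delta\}$ is monotone and bounded below by $0$, so $Y^{+} := \lim_{\delta \downarrow 0} Y_\delta$ exists, solves the homogeneous DARE by continuity, and satisfies $Y \preceq Y^{+}$ for every solution $Y$; that is, $Y^{+}$ is maximal. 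Since each $A_X - B K_\delta'$ is Schur and eigenvalues depend continuously on $\delta$, the limiting closed-loop matrix $A_X - B K'$ has spectrum in the closed unit disk, i.e.\ $Y^{+}$ is almost stabilizing.

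Translating back, $X^{+} := X + Y^{+}$ is the maximal symmetric solution of~\eqref{eq:dare}, its closed-loop matrix $A - B K_* = A_X - B K'$ has eigenvalues in the closed unit disk, and $R_{X^{+}} = R_X + B^{\top} Y^{+} B \succ 0$. Finally I would identify the cost by completing the square: telescoping the DARE identity for $X^{+}$ along any trajectory gives
\[
\sum_{k=0}^{N-1}\left(x_k^{\top} Q x_k + u_k^{\top} R u_k\right) = x_0^{\top} X^{+} x_0 - x_N^{\top} X^{+} x_N + \sum_{k=0}^{N-1}(u_k + K_* x_k)^{\top} R_{X^{+}} (u_k + K_* x_k),
\]
so that, with $R_{X^{+}} \succ 0$, every control driving $x_N \to 0$ has cost at least $x_0^{\top} X^{+} x_0$, while the feedback $u_k = -K_* x_k$ attains this bound in the limit; hence the optimal cost equals $x_0^{\top} X^{+} x_0$.

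I expect the main obstacle to lie in the homogeneous step: establishing monotonicity and a uniform lower bound for $Y_\delta$ (hence convergence) via the comparison principle, and in controlling the boundary behaviour as $\delta \downarrow 0$, where the closed loop may acquire eigenvalues exactly on the unit circle. This boundary case is precisely what prevents a strictly stabilizing conclusion and forces the weaker almost-stabilizing statement; it also necessitates interpreting the optimal cost as an infimum approached by the optimal feedback rather than exactly attained. Securing $R_X \succ 0$ for the base solution, so that the reduced problem is a bona fide LQR problem, is the remaining point requiring care.
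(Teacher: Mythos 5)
The paper never proves this theorem: it is recalled as background from the algebraic Riccati equation literature (it is the indefinite-$Q$ companion of the first recalled LQR theorem, cf.~\cite{Lancaster1995algebraic}), so your attempt can only be judged on its own terms, against the classical development it implicitly reconstructs. Your architecture is indeed the classical one---absorb the hypothesized solution $X$ into the Riccati operator so that $\bar X = X+Y$ solves~\eqref{eq:dare} iff $Y$ solves the homogeneous DARE for $(A_X,B)$ with control weight $R_X = R + B^{\top}XB$; solve the homogeneous equation by perturbing the zero state cost to $\delta I$ and passing to a monotone limit; translate back and identify the cost via the telescoping completion-of-squares identity. The substitution identity, the invariance of stabilizability under state feedback, the eigenvalue-continuity argument for the closed unit disk, and the telescoping identity are all correct.

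The genuine gap is exactly the one you postpone to your final sentence: $R_X \succ 0$. Nothing in the hypotheses supplies it---solvability of~\eqref{eq:dare} only forces $R + B^{\top}XB$ to be \emph{invertible}---and the gap cannot be closed, because the cost assertion in the statement is false without it. Take $n=m=1$, $A=0$, $B=1$, $R=1$, $Q=-2$: the DARE degenerates to $Q - X = 0$, so $X^{+}=-2$ is the unique (hence maximal) solution and $R + B^{\top}X^{+}B = -1 < 0$; yet the LQR cost is $-\infty$, not $x_0^{\top}X^{+}x_0 = -2x_0^2$, since the control $u_0 = M$, $u_k = 0$ for $k\ge 1$ drives the state to zero in two steps at cost $-2x_0^2 + M^2 - 2M^2 \to -\infty$. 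So your reduction cannot start (the first recalled theorem, which you invoke for the $\delta I$-perturbed problems, requires the control weight $R_X$ to be positive definite), and no alternative argument can rescue the statement as written; the missing hypothesis $R + B^{\top}XB \succeq 0$ must be added, exactly as in Theorem~\ref{thrm:dare_existence}. A second, subtler gap: even granting $R_X \succ 0$, your maximality step asserts $Y \preceq Y_{\delta}$ for \emph{every} symmetric solution $Y$ of the homogeneous equation ``by the comparison principle.'' The standard comparison theorems require a definiteness condition on the dominated solution as well ($R_X + B^{\top}YB \succ 0$, or at least $\succeq 0$), which an arbitrary symmetric solution need not satisfy; extending maximality over all symmetric solutions is precisely the nontrivial content of the Lancaster--Rodman theorem you are in effect re-proving, so as written this step assumes what it must establish.
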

        For DAREs with an indefinite $Q$, we recall a theorem concerning the existence of solutions.
        \begin{theorem}[Theorem $13.1.1$ in~\cite{Lancaster1995algebraic}]
          \label{thrm:dare_existence}
          Suppose that $(A, B)$ is stabilizable, $R=R^{\top}$ is invertible, $Q = Q^{\top}$ (no definiteness assumption), and there exists a symmetric solution $\hat{X}$ to the matrix inequality,
          \begin{align*}
            \ca R(X) \coloneqq  A^{\top} X  A + Q - X - A^{\top} X B (R+B^{\top} X B)^{-1}B^{\top} X A\succeq 0,
            \end{align*}
            with $R + B^{\top} \tilde{X} B \succeq 0$.
            Then there exists a maximal solution $X^+$ to~\eqref{eq:dare} such that $R + B^{\top} X^+ B \succ 0$. Moreover, all eigenvalues of $A-B (R+B^{\top}XB)^{-1} B^{\top} X^+A$ are inside the closed unit disk.
          \end{theorem}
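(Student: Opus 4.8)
The plan is to realize $X^+$ as the limit of a monotone Newton--Kleinman iteration started from the given subsolution $\hat X$, after first \emph{recentering} the equation so that its data become positive semidefinite. Substitute $X = \hat X + Z$ and use the covariance of the Riccati operator under state feedback: with $\hat K \coloneqq (R + B^{\top}\hat X B)^{-1} B^{\top}\hat X A$, the equation $\ca R(X)=0$ becomes an equivalent DARE in $Z$ with system matrix $\hat A \coloneqq A - B\hat K$, control weight $\hat R \coloneqq R + B^{\top}\hat X B$, and state weight $\hat Q \coloneqq \ca R(\hat X)$. By hypothesis $\hat Q \succeq 0$ and $\hat R \succ 0$ (invertibility of $R$ together with $R + B^{\top}\hat X B \succeq 0$), while stabilizability of $(A,B)$ passes to $(\hat A, B)$ since it is invariant under feedback. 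Crucially, $Z=0$ is now a subsolution. This converts the indefinite problem into one carrying the positive-semidefinite structure for which Kleinman-type iterations are well behaved; solving the recentered DARE for its maximal $Z^+$ and setting $X^+ = \hat X + Z^+$ yields the claim.

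Next I would establish an a priori upper bound that makes stabilizability do its work. Fix a stabilizing feedback $K_s$ for $(\hat A, B)$ and use the completion-of-squares identity
\[
\ca R(Z) = (\hat A - BK_s)^{\top} Z (\hat A - BK_s) - Z + \hat Q + K_s^{\top}\hat R K_s - (K_s - K_Z)^{\top}(\hat R + B^{\top} Z B)(K_s - K_Z),
\]
where $K_Z$ is the feedback induced by $Z$. For any subsolution (i.e. $\ca R(Z)\succeq 0$ and $\hat R + B^{\top} Z B \succ 0$) the last term is $\preceq 0$, hence $Z - (\hat A - BK_s)^{\top} Z (\hat A - BK_s) \preceq \hat Q + K_s^{\top}\hat R K_s$; inverting the order-preserving Lyapunov operator associated with the Schur matrix $\hat A - BK_s$ gives $Z \preceq \bar Z$ for a fixed $\bar Z$ independent of $Z$. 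Thus the subsolution set is nonempty and bounded above.

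I would then run the iteration from $Z_0 = 0$: define $K_k$ from $Z_k$, the closed loop $\hat A_k = \hat A - BK_k$, and $Z_{k+1}$ as the unique solution of $\hat A_k^{\top} Z_{k+1}\hat A_k - Z_{k+1} + \hat Q + K_k^{\top}\hat R K_k = 0$. The key lemma is that each step is well defined ($\hat A_k$ Schur, so the Lyapunov equation is solvable), produces $Z_{k+1}\succeq Z_k$, and keeps $\hat R + B^{\top} Z_{k+1}B \succ 0$; together with $Z_k \preceq \bar Z$, the monotone bounded sequence converges to some $Z^+$, and passing to the limit gives $\ca R(Z^+)=0$. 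Set $X^+ = \hat X + Z^+ \succeq \hat X$. Since the $\hat A_k$ are Schur and the closed loop of $X^+$ is their limit, continuity of the spectrum places the eigenvalues of $A - B(R + B^{\top}X^+ B)^{-1}B^{\top} X^+ A$ in the closed unit disk, and monotonicity gives $R + B^{\top}X^+ B \succeq \hat R \succ 0$.

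For maximality, given any symmetric solution $Y$ with $R + B^{\top}Y B$ nonsingular, the difference $X^+ - Y$ satisfies a Stein equation whose coefficients are the two closed loops; because the closed loop of $X^+$ has spectrum in the closed disk, a comparison argument forces $X^+ \succeq Y$. The main obstacle is precisely the key lemma above --- that the increasing Kleinman iteration stays well defined (each closed loop Schur) and monotone --- which is classical (Hewer/Kleinman) for definite $Q$ but must be re-established for merely semidefinite $\hat Q$. The recentering substitution is exactly the device that exposes this semidefinite structure, and the argument is closed either by inertia/observability bookkeeping along the iteration or, more robustly, by constructing $X^+$ directly from the closed-disk deflating subspace of the symplectic pencil
\[
\left(\begin{smallmatrix} A & 0 \\ -Q & I\end{smallmatrix}\right) - \lambda \left(\begin{smallmatrix} I & BR^{-1}B^{\top} \\ 0 & A^{\top}\end{smallmatrix}\right),
\]
whose regularity is guaranteed by invertibility of $R$ and whose graph (symmetry of $X^+$) and dimension count are furnished by the certificate $\hat X$ together with stabilizability.
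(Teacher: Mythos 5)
You should first note that the paper itself offers no proof of this statement: it is quoted verbatim as Theorem 13.1.1 of Lancaster--Rodman, so your attempt can only be measured against the classical argument. Your recentering substitution $X = \hat X + Z$ is correct and standard (the Riccati-operator identity does turn the problem into a DARE with data $(\hat A, B, \hat Q, \hat R)$, where $\hat Q = \mathcal{R}(\hat X) \succeq 0$ and $\hat R = R + B^{\top}\hat X B \succ 0$), and your a priori upper bound on subsolutions via a stabilizing $K_s$ is sound. The genuine gap is the iteration itself. Starting the Kleinman/Hewer iteration at $Z_0 = 0$ gives $K_0 = 0$ and first closed loop $\hat A_0 = \hat A = A - B\hat K$; nothing in the hypotheses makes this matrix Schur (a subsolution's own feedback need not stabilize), so the very first Stein equation can fail to be solvable and your ``key lemma'' is already false at $k=0$. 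Moreover the monotonicity you claim is backwards: policy iteration of this type is monotonically \emph{non-increasing}, while the iteration that does increase from $0$ (value iteration) converges to the \emph{minimal} positive semidefinite solution, not the maximal one. A one-dimensional example kills the construction outright: take $A = 2$, $B = 1$, $Q = 0$, $R = 1$, $\hat X = 0$. The recentered problem coincides with the original, your iteration is stationary at $Z_k \equiv 0$, so it outputs $X^+ = 0$, whose closed loop is $2$; but the DARE's symmetric solutions are $0$ and $3$, the maximal one being $3$ with closed loop $1/2$. The output is thus neither maximal nor almost-stabilizing, contradicting both conclusions of the theorem.

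The repair --- and this is how the classical proof goes --- is to run the monotone iteration \emph{downward from a stabilizing gain}, not upward from the subsolution: pick $K_s$ with $\hat A - BK_s$ Schur (possible by stabilizability of $(\hat A, B)$), let $Z_1$ be its cost, and iterate. One proves each new closed loop stays Schur by the eigenvector argument (if $(\hat A - BK_{k+1})v = \lambda v$ with $|\lambda| \ge 1$, then $\hat Q^{1/2}v = 0$, $K_{k+1}v = 0$ and $(K_{k+1}-K_k)v = 0$, contradicting Schur stability of $\hat A - BK_k$; this is exactly where merely semidefinite $\hat Q$ suffices), that the costs decrease, and that they are bounded below by $0$, i.e., by the recentered subsolution. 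The limit is then an almost-stabilizing solution, and maximality follows from the Stein comparison. In other words, the subsolution's role in this theorem is to bound the sequence from below and to certify $R + B^{\top}X^+B \succ 0$, not to initialize the iteration. Your fallback routes do not close the gap: ``inertia/observability bookkeeping'' is not an argument, and constructing $X^+$ from the closed-unit-disk deflating subspace of the symplectic pencil begs the question, since showing that this subspace is $n$-dimensional and is the graph of a symmetric matrix \emph{is} the theorem. A minor point as well: $\hat R \succ 0$ does not follow from invertibility of $R$ (take $R = -1$, $B^{\top}\hat X B = 1$); it follows because $\mathcal{R}(\hat X)$ being well defined forces $R + B^{\top}\hat X B$ to be invertible, which together with $\succeq 0$ gives $\succ 0$.
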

          The map $\ca R_{A, B, Q, R}: \bb R^{n \times n} \to \bb R^{n \times n}$ will be referred as \emph{Riccati map} in our analysis; we will also suppress its dependency on system parameters $A, B, Q, R$. In our subsequent analysis, these system parameters will not remain constant as the corresponding feedback gains are iteratively updated.
\section{LQ Dynamic Games and some of its Analytic Properties}
In this section, we review the setup of zero-sum LQ games. In particular we discuss the modified sequential formulation of LQ games and make a few analytical observations that are of independent interest. We note that some of these observations have only become necessary in the context of sequential policy updates for LQ dynamic games.
\subsection{Zero-sum LQ Dynamic Games}
In the standard setup of LQ game, we consider a (discrete-time) linear time invariant model of the form,
\begin{align*}
  x(k+1)= A x(k) - B_1 u_1(k) - B_2 u_2(k), \qquad x(0) = x_0,
\end{align*}
where $A \in \bb R^{n \times n}$, $B_1 \in \bb R^{n \times m_1}$, $B_2 \in \bb R^{n \times m_2}$, $u_1(k)$ and $u_2(k)$ are strategies played by two players. The cost incurred for both players is the quadratic cost
\begin{align*}
  J(u_1, u_2, x_0) &= \sum_{k=0}^{\infty} \left( \langle x(k), Q x(k) \rangle + \langle u_1(k), R_1 u_1(k) \rangle - \langle u_2(k), R_2 u_2(k) \rangle \right),
\end{align*}
where $Q \in \bb S_n^{+}$, $R_1 \in \bb S_{m_1}^{++} , R_2 \in \bb S_{m_2}^{++},$ and $x_0$ is the initial condition;\footnote{The notation $S_n^{+}$ and $S_n^{++}$ designate, respectively, $n \times n$ symmetric positive semidefinite and positive definite matrices.} the underlying inner product is denoted by $\langle \cdot, \cdot \rangle$. In this setting, player one chooses its policy to minimize $J$ while player two aims to maximize it. \par
The players' strategy space that we will be particularly interested in are closed-loop static linear policies, namely, policies of the form $u_1(k) = K x(k)$ and $u_2(k) = Lx(k)$, where $K \in \bb R^{m_1 \times n}$ and $L \in \bb R^{m_2 \times n}$. Note that the cost function is guaranteed to be finite over the set of \emph{Schur stabilizing feedback gains},
\begin{align*}
  \ca S = \{(K, L) \in \bb R^{m \times n} \times \bb R^{m \times n} : \rho(A-B_1 K - B_2 L) < 1\}.
\end{align*}
Indeed, if $(K, L) \in \ca S$, with initial condition $x_0$, the cost will be given by,
\begin{align*}
  J(u_1, u_2, x_0) &= x_0^{\top} \left( \sum_{j=0}^{\infty} [(A-B_1K-B_2 L)^{\top}]^{j} (Q+K^{\top}R_1 K-L^{\top}R_2L) (A-B_1K-B_2L)^j\right)x_0 \\
&= x_0^{\top} X x_0,
\end{align*}
where $X$ solves the \emph{Lyapunov matrix equation},
\begin{align}
  \label{eq:lyapunov_matrix}
    (A-B_1K-B_2L)^{\top} X(A-B_1K-B_2L) + Q+ K^{\top}R_1K-L^{\top}R_2 L = 0.
  \end{align}
  Note that~\eqref{eq:lyapunov_matrix} has a unique solution if $(K, L) \in \ca S$. We say that a pair of strategies $(K, L)$ is \emph{admissible} if $(K, L) \in \ca S$.
\begin{remark}
  An elegant result in $\ca H_{\infty}$ control theory states that the minimax problem,
  \begin{align*}
    \inf_{u_1 \in \ell_2(\bb N)} \sup_{u_2 \in \ell_2(\bb N)} \{J(u_1, u_2) \, | \,  x_{u_1, u_2} \in \ell_2(\bb N)\},
    \end{align*}
    where $\ell_2(\bb N)$ denotes the Banach space of all square summable sequences and $x_{u_1, u_2}$ denotes the state trajectory after adopting control signals $u_1, u_2$, has a unique saddle point for all initial conditions if and only if there exists two static linear feedback gains $K_*$ and $L_*$ such $u_1(k) = K_* x(k)$ and $u_2(k)=L_* x(k)$ satisfying the saddle point condition~\cite{stoorvogel1990h}. Hence, the restriction of the optimization process to static linear policies in the LQ game setting is without loss of generality.
\end{remark}
A stabilizing \emph{Nash equilibrium} for the zero-sum game is the pair of actions $\{u_1^{*}(k), u_2^{*}(k)\} =\{K_* x(k), L_*x(k)\}$ such that,
\begin{align}
  \label{eq:nash_ineq}
  J(u_1^*(k), u_2(k)) \le J(u_1^*(k), u_2^*(k)) \le J(u_1(k), u_2^*(k)),
\end{align}
for all initial states $x_0$ and all $u_1(k), u_2(k)$ for which $(u_1(k), u_2^*(k))$ and $(u_1^*(k), u_2(k))$ are both admissible pairs. We \emph{emphasize} that it is important that $(u_1(k), u_2^*(k))$ and $(u_1^*(k), u_2(k))$ are stabilizing action pairs in the inequality~\eqref{eq:nash_ineq}. To demonstrate this delicate situation, we denote by $\ca S_{\pi_i}$ the projection of $\ca S$ onto the $i$th coordinate, i.e.,
\begin{align*}
  \ca S_{\pi_1} = \{K: \exists L \text{ such that } A-B_1 K-B_2 L \text{ is Schur}\},\\
  \ca S_{\pi_2} = \{L: \exists K \text{ such that } A-B_1 K-B_2 L \text{ is Schur}\},\\
  \end{align*}
  and $\ca S_{\hat{K}}, \ca S_{\hat{L}}$ as sets defined by,
  \begin{align*}
    \ca S_{\hat{K}} = \{L: A-B_1 \hat{K}-B_2 L \text{ is Schur}\}, \\
    \ca S_{\hat{L}} = \{K: A-B_2 \hat{L}-B_1 K \text{ is Schur}\}.
    \end{align*}
    Clearly, $\ca S_{K_*} \subset \ca S_{\pi_2}$ and $\ca S_{L_*} \subset \ca S_{\pi_1}$. This means that it is not the case that for all $K \in \ca S_{\pi_1}$ and all $L \in \ca S_{\pi_2}$, the corresponding actions $u_1(k) = K x(k)$ and $u_2(k)=Lx(k)$ yield
    \begin{align*}
        J(u_1^*(k), u_2(k)) \le J(u_1^*(k), u_2^*(k)) \le J(u_1(k), u_2^*(k)).
      \end{align*}
      This is simply due to the fact that $(\hat{K}, L_*)$ is not guaranteed to be stabilizing for all $\hat{K} \in \ca S_{\pi_1}$. \par
      Note that the cost function $J$ is a function of polices $K, L$ and initial condition $x_0$. Since we are interested in the Nash equilibrium independent of the initial conditions, naturally, we should formulate cost functions for both players to reflect this independent. Indeed, this point has been discussed in~\cite{bu2019lqr} where it has been argued that such a formulation is in general necessary for the cost functions to be well defined (see details in \S III~\cite{bu2019lqr}). The independence with respect to initial conditions can be achieved by either sampling $x_0$ from a distribution with full-rank covariance~\cite{fazel2018global}, or choosing a spanning set $\{w_1, \dots, w_n\} \subseteq \bb R^n$~\cite{bu2019lqr}, and defining the value function over $\ca S$ as,
\begin{align*}
  f(K, L) = \sum_{i=1}^n J_{w_i}(K, L),
\end{align*}
where $J_{w_i}(K, L)$ is the cost by choosing initial state $w_i$, $u_1(k) = Kx(k)$ and $u_2(k) = Lx(k)$. Note that over the set $\ca S$ the value of function $f$ admits a compact form,
\begin{align*}
  f(K, L) = \Tr( X {\bf \Sigma}),
\end{align*}
where $ {\bf \Sigma} = \sum_{i=1}^n w_i w_i^{\top}$ and $X$ is the solution to~\eqref{eq:lyapunov_matrix}.
%Next we present the analytical properties of the value function $f(K, L)$. To begin with, let us first observe some topological properties of the domain of $f(K, L)$, i.e., $\ca S$.
\paragraph{Behavior of $f$ on $\ca S^{c}$:}
How the cost function $f$ behaves near the boundary $\partial \ca S$ is of paramount importance in the design of iterative algorithms for LQ games. In the standard LQR problem (corresponding to a single player case in the game setup), the cost function diverges to $+\infty$ when the feedback gain approaches the boundary of this set (see~\cite{bu2019lqr} for details). In fact, this property guarantees stability of the obtained solution via first order iterative algorithms for suitable choice of stepsize. However, the behavior of $f$ on the boundary $\partial \ca S$ could be more intricate. For example, if $(K, L) \in \partial \ca S$, i.e., $\rho(A-B_1 K-B_2 L) = 1$, then it is possible that the cost is still finite for both players. This happens when an eigenvalue of $A-B_1K-B_2 L$ on the unit disk in the complex plane is not $(Q+K^{\top} R_1 K - L^{\top} R_2 L, A-B_1K -B_2L)$-observable. To see this, we observe that for every $\omega_i$, the series
\begin{align*}
  \sum_{j=0}^{\infty} J_{\omega_i}(K, L) = \omega_i^{\top} \left(\sum_{j=0}^{\infty} ((A-B_1K-B_2L)^\top)^{j} (Q+K^{\top}R_1 K-L^{\top}R_2 L)(A-B_1K-B_2L)^{j} \right)\omega_i
\end{align*}
is convergent to a finite (real) number if the marginally stable modes are not detectable.
 Even on $\bar{\ca S}^c$ (complement of closure of $\ca S$), $f$ could be finite if all the non-stable modes of $(A-B_2K-B_2L)$ are not $(Q+K^{\top} R_1 K - L^{\top} R_2 L, A-B_1K -B_2L)$-observable. The complication suggests that the function value is no longer a valid indictor of stability. We remark that such a situation does not occur in the LQ setting examined in~\cite{fazel2018global, bu2019lqr}, as it has been assumed that $Q$ is positive definite.
 % and $(Q, \cdot)$ is always observable.
      \subsection{Stabilizing Policies in Sequential Zero-Sum LQ Games}
      Another subtle situation arising in sequential zero-sum LQ dynamic game is as follows: there is clearly no incentive for player $1$ to  destabilize the dynamics. However, from the perspective of player $2$, making the states diverge to infinity could be desirable as the player aims to maximize the cost. For player $1$, in the situation where $Q-L^{\top} R_2 L$ is not positive semidefinite, it is also possible that the best policy is not the one in $\ca S_{\pi_1}$. Hence, in round $j$, in order to guarantee that the game can be continued, it is important that both players choose their respective policies in $\ca S_{\pi_1}$ and $\ca S_{\pi_2}$. We may then  stipulate that both players play \emph{Schur stable} policies. We can justify this constraint by insisting that both players have an incentive to stabilize the system in the first place. This can also be encoded in the cost function for the player. That is, we may define the cost functions for player $1$ and player $2$ by,
\begin{align*}
  f_1(K, L) = \delta_{\ca S_{\pi_1}}(K) + f(K, L),\\
  f_2(K, L) = -\delta_{\ca S_{\pi_2}}(L) + f(K, L),
  \end{align*}
  where $\delta_{\ca S_{\pi_i}}(x)$ is the indicator function of the set
  \begin{align*}
    \delta_{\ca S_{\pi_i}}(x) = \begin{cases}
      0, & x \in \ca S_{\pi_i},\\
      +\infty, & x \notin \ca S_{\pi_i}.
      \end{cases}
    \end{align*}
    Then we have two cost functions defined everywhere for both players and assume a finite value on $\ca S$ which agree with each other, i.e., $f(K, L)$. We still need to be careful in realizing that there are points for which the function value is indeterminate. For example, it is possible to find a point $(\hat{K}, \hat{L})$ such that $f(\hat{K}, \hat{L}) = -\infty$; then $f_1(\hat{K}, \hat{L}) = +\infty - \infty$. To resolve this complication, we shall declare the function value to be the first summand; namely, if $f_1(\hat{K}, \hat{L}) = + \infty - \infty$, then $f_1(\hat{K}, \hat{L}) \equiv +\infty$. \par
    From the perspective of sequential algorithm design, these newly introduced cost functions would constrain both players to play policies in $\ca S$. It might be tempting to design projection based algorithms. However, this can be difficult since describing the sets $\ca S_{\pi_1}$ and $\ca S_{\pi_2}$ for given system $(A, B)$ is not straightforward. We shall see later that by exploiting the problem structure, we can design sequential algorithms for both players to guarantee this condition without any projection step.
\subsection{Analytic Properties of the Cost Function}
In this section we shall clarify analytical properties of the cost functions in terms of the polices played by the two players; that is, we consider the cost function $f(K, L)$ over $\ca S$.\footnote{In our formulation, the two players have different cost functions. But over the set $\ca S$, the cost functions coincide.} %Finding the Nash Equilibrium of the zero-sum game is equivalent to solving the \emph{minimax} problem
%\begin{align*}
  %\min_{K \in \ca H_K} \max_{L \in \ca H_L} f(K, L).
%\end{align*}
To begin with, we observe the set $\ca S$ even though is not convex, still possesses nice topological properties.
\begin{proposition}
  \label{prop:domain_topo}
  The set $\ca S$ is open, contractible (i.e., path-connected and simply connected) and in general non-convex.
\end{proposition}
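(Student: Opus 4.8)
The plan is to first reduce the two--player stabilizing set to a single--player one, for which the topological picture is cleaner. The key observation is that stacking the gains defines a linear isomorphism
\begin{align*}
\Phi:\bb R^{m_1\times n}\times\bb R^{m_2\times n}\to\bb R^{(m_1+m_2)\times n},\qquad (K,L)\mapsto \tilde K=\begin{pmatrix}K\\ L\end{pmatrix},
\end{align*}
and, writing $B=[\,B_1\ B_2\,]$, we have $B_1K+B_2L=B\tilde K$, so that $A-B_1K-B_2L=A-B\tilde K$. Hence $\Phi$ maps $\ca S$ homeomorphically (indeed linearly) onto the single--player stabilizing set $\ca K=\{\tilde K:\rho(A-B\tilde K)<1\}$. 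Since a linear isomorphism and its inverse preserve openness, contractibility, and convexity/non-convexity, it suffices to establish the three assertions for $\ca K$. Openness is immediate: $\tilde K\mapsto\rho(A-B\tilde K)$ is continuous, so $\ca K$ is the preimage of the open ray $(-\infty,1)$ and is therefore open.

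For contractibility I would run the negative gradient flow of an auxiliary, genuinely coercive LQR cost on $\ca K$. Fixing $\tilde Q\succ 0$, $\tilde R\succ 0$ and $\Sigma\succ 0$, define $g(\tilde K)=\Tr(X_{\tilde K}\Sigma)$, where $X_{\tilde K}$ solves the Lyapunov equation for the closed loop $A-B\tilde K$ with state/input weights $\tilde Q,\tilde R$. Assuming $(A,B)$ stabilizable (which is exactly the nonemptiness hypothesis for $\ca S$), $g$ is smooth and real-analytic on $\ca K$; because $\tilde Q\succ 0$ makes every mode observable, $g$ blows up as $\tilde K\to\partial\ca K$, so $g$ is coercive and its sublevel sets are compact subsets of $\ca K$. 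Moreover $g$ satisfies the gradient-dominance (Polyak--\L ojasiewicz) inequality of the LQR literature, so it has a unique stationary point $\tilde K_\ast$, which is the global minimizer. The negative gradient flow of $g$ is then complete (coercivity traps trajectories in a compact sublevel set inside $\ca K$) and converges to $\tilde K_\ast$ from every initial gain; normalizing its time parameter to $[0,1]$ yields a strong deformation retraction of $\ca K$ onto $\{\tilde K_\ast\}$. This proves $\ca K$, and hence $\ca S$, is contractible, which in particular gives path-connectedness and simple connectedness.

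For non-convexity it suffices to exhibit one instance. Take $n=2$, $B=I_2$ (so $B_1,B_2$ simply partition the columns of the identity) and any fixed $A$, and consider the two nilpotent closed-loop matrices
\begin{align*}
N_1=\begin{pmatrix}0&2\\ 0&0\end{pmatrix},\qquad N_2=\begin{pmatrix}0&0\\ 2&0\end{pmatrix},\qquad \tfrac12(N_1+N_2)=\begin{pmatrix}0&1\\ 1&0\end{pmatrix}.
\end{align*}
Setting $\tilde K_i=A-N_i$ gives $\rho(A-B\tilde K_i)=\rho(N_i)=0<1$, so both lie in $\ca K$, while their midpoint $\tilde K_m=A-\tfrac12(N_1+N_2)$ yields $\rho(A-B\tilde K_m)=1$, so $\tilde K_m\notin\ca K$. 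Pulling back through $\Phi^{-1}$ produces an explicit pair in $\ca S$ whose convex combination leaves $\ca S$, establishing non-convexity in general.

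The routine parts are openness and the example; the substantive step is contractibility. The main obstacle there is making the gradient-flow retraction rigorous---specifically, verifying completeness of the flow and compactness of sublevel sets (which rests on coercivity from $\tilde Q\succ 0$) and ruling out spurious critical points (which rests on gradient dominance). These are precisely the properties that fail for the indefinite-$Q$ game cost, which is why I would deliberately run the flow of the positive-definite auxiliary LQR cost rather than of $f$ itself.
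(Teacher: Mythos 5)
Your proposal is correct, but it takes a genuinely different route from the paper. The paper's proof also implicitly stacks the two gains into a single-player gain for the pair $(A,[B_1\ B_2])$, but then applies the Kalman decomposition to split off the uncontrollable (necessarily Schur) part, identifies $\ca S$ diffeomorphically with $\ca S_{(\tilde A_{11},\tilde B_1)}\times \bb R^{(n-n_1)\times(l_1+l_2)}$, and outsources all three topological claims for the controllable factor to~\cite{bu2019topological}. You instead prove everything directly after the stacking reduction: openness from continuity of the spectral radius, contractibility by a deformation retraction along the negative gradient flow of an auxiliary coercive LQR cost with $\tilde Q\succ 0$, and non-convexity by an explicit $2\times 2$ nilpotent example. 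What your route buys is self-containedness (modulo standard LQR facts: coercivity of the cost for $\tilde Q\succ 0$, gradient dominance, and smoothness, as in~\cite{fazel2018global,bu2019lqr}), an explicit counterexample rather than a citation, and the fact that your argument works directly for stabilizable---not necessarily controllable---pairs, so no Kalman decomposition is needed at all; what the paper's route buys is brevity. Two points in your write-up deserve care. First, turning the flow into a genuine (strong) deformation retraction requires the reparametrized homotopy to be continuous at $s=1$, i.e., locally uniform convergence of the flow to $\tilde K_*$; this does follow from your ingredients (the PL inequality with constants uniform on compact sublevel sets gives exponential decay of the cost gap, and PL also gives quadratic growth so the cost gap controls $\|\phi_t(\tilde K)-\tilde K_*\|$), but it should be stated, since pointwise convergence alone does not suffice. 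Second, both your proof and the paper's implicitly assume $\ca S\neq\emptyset$ (equivalently, $(A,[B_1\ B_2])$ stabilizable), which is guaranteed by the paper's standing Assumption~\asref{assump1}; without it the contractibility claim is vacuous or false depending on convention.
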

\begin{proof}
  %Putting $\phi \colon (K, L) \mapsto \max_i {\bf Re} \left[\lambda_i(A-B_1 K - B_2 L) \right]$, note $\phi$ is continous. Observe $\ca H = \phi^{-1}((-\infty, 0))$ and hence $\ca H$ is open. To see $\ca H$ is path-connected, we fix a positive definite matrix $M \succ 0$ and note $(K, L) \in \ca H$ if and only if
  %\begin{align*}
    %(A-B_1K - B_2 L) X + X(A-B_1 K -B_2 L)^{\top} + M = 0
  %\end{align*}
  %is solvable in $\bb S_{++}^n$. A change of variable $Y_1 = KX$ and $Y_2 = LX$ yields
  %\begin{align}
    %\label{eq:lmi_identification}
    %AX - B_1 Y_1 - B_2 Y_2 + XA^{\top} - Y_1^{\top} B_1^{\top} - Y_2^{\top} B_2^{\top} = 0.
    %\end{align}
    %Denote the set $\ca C = \{(X, Y_1, Y_2): (X, Y_1, Y_2) \text{ solves~\eqref{eq:lmi_identification}}\}$ and observe $\ca C$ is convex. So $\ca H$ is path-connected since $\ca H$ is the continuous image under the map $(X, Y_1, Y_2) \mapsto (Y_1X^{-1}, Y_2 X^{-1})$.
  It suffices to note that by Kalman Decomposition~\cite{wonham2012linear}, there exists some $T \in GL_n(\bb R)$ such that,
  \begin{align*}
    T A T^{-1} = \begin{pmatrix}
      \tilde{A}_{11} & \tilde{A}_{12}\\
      0 & \tilde{A}_{22}
      \end{pmatrix}, T[B_1, B_2] = \begin{pmatrix}
        \tilde{B}_1 \\
        0
        \end{pmatrix},
    \end{align*}
    where $(\tilde{A}_{11},\tilde{B}_1)$ is controllable and $\tilde{A}_{22}$ is Schur. Suppose that $\tilde{B}_{1} \in \bb R^{n_1 \times (l_1 + l_2)}$ and further observe that $\ca S$ can be diffeomorphically identified by $\ca S_{(\tilde{A}_{11}, \tilde{B}_1)} \times \bb R^{(n-n_1) \times (l_1 + l_2)}$. The statement then follows by the results reported in~\cite{bu2019topological}.
\end{proof}
As the set $\ca S$ is generally not convex, Proposition~\ref{prop:domain_topo} assures us that algorithms based on local search (\emph{e.g.} gradient descent) can potentially reach the Nash equilibrium. If $\ca S$ had more than one path-connected components, it will be impossible to guarantee the convergence to Nash equilibrium under random initialization. Moreover this observation implies that $f$ is not convex-concave as the domain is not even convex.\par
We next observe that the value function is smooth and indeed real analytic, i.e., $f \in C^{\omega}(\ca S)$.
\begin{proposition}
 One has $f \in C^{\omega}(\ca S)$.
\end{proposition}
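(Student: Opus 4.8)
The plan is to show that the unique solution $X$ of the Lyapunov equation~\eqref{eq:lyapunov_matrix} depends real analytically on $(K,L)$ over $\ca S$; since $f(K,L) = \Tr(X \mathbf{\Sigma})$ is then a polynomial in the entries of $X$, the claim follows immediately. Write $A_{cl} = A - B_1 K - B_2 L$ and $Q_{cl} = Q + K^{\top} R_1 K - L^{\top} R_2 L$, so that~\eqref{eq:lyapunov_matrix} reads $A_{cl}^{\top} X A_{cl} - X + Q_{cl} = 0$. Both $A_{cl}$ and $Q_{cl}$ are matrix-valued polynomials in the entries of $(K,L)$, hence real analytic everywhere.

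First I would vectorize. Using $\vect(PXQ) = (Q^{\top} \otimes P)\vect(X)$, the equation becomes $(A_{cl}^{\top} \otimes A_{cl}^{\top} - I)\vect(X) = -\vect(Q_{cl})$. Set $M(K,L) = I - A_{cl}^{\top} \otimes A_{cl}^{\top}$. For $(K,L) \in \ca S$ we have $\rho(A_{cl}) < 1$, and since the eigenvalues of $A_{cl}^{\top} \otimes A_{cl}^{\top}$ are products of pairs of eigenvalues of $A_{cl}$, one gets $\rho(A_{cl}^{\top} \otimes A_{cl}^{\top}) = \rho(A_{cl})^2 < 1$; therefore $M(K,L)$ is invertible on all of $\ca S$, and $\vect(X) = M(K,L)^{-1}\vect(Q_{cl})$.

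The key remaining point is that matrix inversion is a real analytic operation on the open set of invertible matrices: by Cramer's rule each entry of $M^{-1}$ equals a cofactor of $M$ divided by $\det M$, i.e., a ratio of polynomials whose denominator is nonvanishing on $\ca S$, which is real analytic there. Since $(K, L) \mapsto M(K,L)$ is polynomial and takes values in $GL$ over $\ca S$, the composition $M(K,L)^{-1}$ is real analytic on $\ca S$. Multiplying the real analytic matrix $M^{-1}$ by the real analytic vector $\vect(Q_{cl})$, and using that sums, products, and compositions of real analytic maps are real analytic, we conclude that $\vect(X)$—and hence $X$—is real analytic on $\ca S$. Finally $f = \Tr(X\mathbf{\Sigma})$ is polynomial in the entries of $X$, so $f \in C^{\omega}(\ca S)$.

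I expect essentially no serious obstacle here; the only point to state with care is the real analyticity of inversion. Equivalently, one could invoke the real analytic implicit function theorem applied to $F(K,L,X) = A_{cl}^{\top} X A_{cl} - X + Q_{cl}$, whose partial derivative in $X$ is the linear map $H \mapsto A_{cl}^{\top} H A_{cl} - H$, invertible on $\ca S$ by the same spectral-radius computation. I would likely record this implicit-function-theorem route as a remark, since it localizes the argument and avoids writing the explicit Kronecker inverse.
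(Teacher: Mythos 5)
Your proof is correct and follows essentially the same route as the paper: vectorize the Lyapunov equation via the Kronecker product, note that $I \otimes I - A_{K,L}^{\top}\otimes A_{K,L}^{\top}$ is invertible on $\ca S$, and invoke Cramer's rule so that $\vect(X)$ is a ratio of polynomials with nonvanishing denominator, hence real analytic. Your write-up merely fills in details the paper leaves implicit (the spectral-radius argument $\rho(A_{K,L}^{\top}\otimes A_{K,L}^{\top}) = \rho(A_{K,L})^2 < 1$ for invertibility, and the implicit-function-theorem alternative), so there is no substantive difference.
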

\begin{proof}
  For $(K, L) \in \ca S$, $f$ is the composition
  \begin{align*}
    (K, L) \mapsto X(K, L) \mapsto \Tr(X {\bf \Sigma}),
    \end{align*}
    where $X$ solves~\eqref{eq:lyapunov_matrix}. But
    \begin{align*}
      \vect(X) = \left( I \otimes I - A_{K, L}^{\top} \otimes A_{K, L}^{\top}\right)^{-1} \vect(Q+ K^{\top} R_1 K - L^{\top} R_2 L),
    \end{align*}
    by Cramer's Rule; the proof thus follows.
\end{proof}
As $f$ is smooth, its partial derivatives with respect to $K$ and $L$ can be characterized as follows.
\begin{proposition}
  On the set $\ca S$, the gradients of $f$ with respect to its arguments are given by,
  \begin{align*}
    &\partial_{K} f(K, L) = (R_1 K - B_1^{\top} X A_{K, L}) Y, \\
    &\partial_{L} f(K, L) = (-R_2 L - B_2^{\top} X A_{K, L}) Y,
    \end{align*}
    where $X$ solves the Lyapunov equation~\eqref{eq:lyapunov_matrix} and $Y$ solves the Lyapunov equation,
    \begin{align}
      \label{eq:lyapunov_matrix_Y}
      A_{K, L} Y A_{K, L}^{\top} + {\bf \Sigma} = 0.
    \end{align}
\end{proposition}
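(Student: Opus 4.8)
The plan is to obtain both gradients by implicitly differentiating the Lyapunov equation~\eqref{eq:lyapunov_matrix} that defines $X = X(K,L)$ and then converting the resulting trace expression into closed form using the self-adjointness of the discrete Lyapunov operator. Write $A_{K,L} = A - B_1 K - B_2 L$ and $M(K,L) = Q + K^{\top}R_1 K - L^{\top}R_2 L$, so that $f(K,L) = \Tr(X\bf\Sigma)$. By the previous proposition $f \in C^{\omega}(\ca S)$, and more to the point the Kronecker representation $\vect(X) = (I\otimes I - A_{K,L}^{\top}\otimes A_{K,L}^{\top})^{-1}\vect(M)$ shows that the operator $\mathcal L(Z) = Z - A_{K,L}^{\top} Z A_{K,L}$ is invertible on $\ca S$; this guarantees that the directional derivative $\dot X$ of $X$ (in a perturbation direction $\Delta$, with the other argument held fixed) exists and is itself characterized as the unique solution of a Lyapunov equation. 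Since $f$ is linear in $X$, the directional derivative of $f$ is simply $\Tr(\dot X\bf\Sigma)$, so everything reduces to identifying $\dot X$ and then evaluating this trace.

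First I would differentiate $A_{K,L}^{\top} X A_{K,L} - X + M = 0$ in the direction $\Delta$. Fixing $L$ and perturbing $K$, one has $\partial_K A_{K,L}[\Delta] = -B_1\Delta$ and $\partial_K M[\Delta] = \Delta^{\top}R_1 K + K^{\top}R_1\Delta$, and collecting the terms that contain $\dot X$ produces exactly the Lyapunov operator applied to $\dot X$, while the remaining terms assemble into a forcing matrix. Concretely this gives
\[
  A_{K,L}^{\top}\dot X A_{K,L} - \dot X + W = 0, \qquad W = \Delta^{\top}E_K + E_K^{\top}\Delta,
\]
where $E_K = R_1 K - B_1^{\top} X A_{K,L}$; the same computation with $\partial_L A_{K,L}[\Delta] = -B_2\Delta$ and $\partial_L M[\Delta] = -(\Delta^{\top}R_2 L + L^{\top}R_2\Delta)$ yields the analogous forcing with $E_L = -R_2 L - B_2^{\top} X A_{K,L}$. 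The key structural point is that $W$ is symmetric and factors through the candidate gradient matrix $E_K$ (respectively $E_L$).

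The crux of the argument is evaluating $\Tr(\dot X\bf\Sigma)$ without solving for $\dot X$ explicitly. Using the Frobenius inner product $\langle U,V\rangle = \Tr(U^{\top}V)$, one has $\dot X = \mathcal L^{-1}(W)$, and the adjoint of $\mathcal L$ is $\mathcal L^{*}(Y) = Y - A_{K,L}Y A_{K,L}^{\top}$, whose inverse image of $\bf\Sigma$ is precisely the $Y$ solving~\eqref{eq:lyapunov_matrix_Y}. Hence
\[
  \Tr(\dot X\bf\Sigma) = \langle \mathcal L^{-1}(W),\bf\Sigma\rangle = \langle W,\mathcal (L^{*})^{-1}(\bf\Sigma)\rangle = \langle W, Y\rangle = \Tr(WY),
\]
which can be verified directly by expanding both $\dot X$ and $Y$ as the convergent Neumann series $\sum_{j\ge 0}(A_{K,L}^{\top})^{j}W A_{K,L}^{j}$ and $\sum_{j\ge 0}A_{K,L}^{j}\bf\Sigma (A_{K,L}^{\top})^{j}$ and comparing term by term. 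Substituting $W = \Delta^{\top}E_K + E_K^{\top}\Delta$ and using symmetry of $Y$ collapses the directional derivative to a single inner product $\langle E_K Y,\Delta\rangle$, from which the gradient $\partial_K f = (R_1 K - B_1^{\top} X A_{K,L})Y$ is read off; the $L$-computation is identical up to the sign changes recorded in $E_L$. I expect the main obstacle to be purely bookkeeping: correctly carrying the implicit dependence of $X$ on $(K,L)$ through the differentiation, verifying that the Lyapunov operator and its adjoint are the right pair so that $\bf\Sigma$ maps to $Y$, and being careful with the symmetrization factor so that the final expression matches the stated gradient rather than twice it. Everything else is the self-adjointness manipulation above, which is where the two Lyapunov equations for $X$ and $Y$ get coupled.
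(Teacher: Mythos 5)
Your proof is correct, but it takes a genuinely different route from the paper. The paper's proof is a two-line reduction: it stacks the two gains into a single feedback matrix $\bigl(\begin{smallmatrix} K \\ L\end{smallmatrix}\bigr)$, rewrites~\eqref{eq:lyapunov_matrix} as a standard LQR Lyapunov equation with input matrix $[B_1,\ B_2]$ and the indefinite block-diagonal weight $\bigl(\begin{smallmatrix} R_1 & 0 \\ 0 & -R_2\end{smallmatrix}\bigr)$, and then simply invokes the LQR gradient formula from~\cite{fazel2018global, bu2019lqr}. What you do instead is essentially prove that invoked lemma from scratch: implicit differentiation of~\eqref{eq:lyapunov_matrix}, identification of the forcing term $W=\Delta^{\top}E+E^{\top}\Delta$, and the adjoint relation $\langle \mathcal{L}^{-1}(W),{\bf \Sigma}\rangle = \langle W,(\mathcal{L}^{*})^{-1}({\bf \Sigma})\rangle$ between the two Lyapunov operators. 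Your route is longer but self-contained, and it makes transparent \emph{why} $Y$ appears (it is the adjoint Lyapunov operator's preimage of ${\bf \Sigma}$) rather than importing it as a black box; it also silently corrects the typo in~\eqref{eq:lyapunov_matrix_Y}, which should read $A_{K,L}YA_{K,L}^{\top}-Y+{\bf \Sigma}=0$, exactly as your relation $\mathcal{L}^{*}(Y)={\bf \Sigma}$ requires. One point to settle rather than leave as ``bookkeeping'': your computation, carried out honestly, yields $\Tr(WY)=2\langle E_K Y,\Delta\rangle$, i.e.\ the gradient $2(R_1K-B_1^{\top}XA_{K,L})Y$, which is \emph{twice} the expression displayed in the proposition. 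This factor of two is not an error on your part but an inconsistency in the paper itself: the formula with the factor $2$ is what~\cite{fazel2018global, bu2019lqr} give and what the paper uses later (e.g.\ the definitions of ${\bf g}(K)$, ${\bf n}(K)$ in \S\ref{sec:pg_K} and the identities $2{\bf U}_{K,L}=N_{f,K}$, $2{\bf V}_{K,L}=N_{f,L}$), so you should state the conclusion with the factor $2$ rather than try to make it match the proposition's display.
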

\begin{proof}
  It suffices to rewrite the Lyapunov equation in the form
  \begin{align*}
    \left( A - \begin{pmatrix}B_1 & B_2\end{pmatrix} \begin{pmatrix} K \\ L \end{pmatrix}\right)^{\top} X\left( A - \begin{pmatrix}B_1 & B_2\end{pmatrix} \begin{pmatrix} K \\ L \end{pmatrix}\right) - X + Q + \begin{pmatrix} K \\ L \end{pmatrix}^{\top} \begin{pmatrix} R_1 & 0 \\ 0 & -R_2 \end{pmatrix} \begin{pmatrix} K \\ L \end{pmatrix} = 0.
    \end{align*}
    By the result in~\cite{fazel2018global, bu2019lqr}, the gradient of $f$ is given by
    \begin{align*}
      \nabla f(K, L) = \begin{pmatrix} (R_1 K - B_1^{\top}XA_{K, L})Y \\(-R_2 K - B_1^{\top}XA_{K, L})Y \end{pmatrix}.
      \end{align*}
\end{proof}
We now observe that $Y(K,L)$ is a smooth function in $(K, L)$ and is positive definite everywhere on $\ca S$. Hence $Y(K, L)$ is a well-defined Riemannian metric on $\ca S$. Under this Riemannian metric, we can thereby identify the gradient. In learning and statistics literature, such a gradient is referred as a ``natural gradient.'' We shall use $N_{f, K}$ and $N_{f, L}$ to denote the natural gradient of $f$ over $K$ and $L$, respectively. Namely,
\begin{align*}
  N_{f, K}(K, L) = R_1 K - B_1^{\top} X A_{K, L}, \\
  N_{f, L}(K, L) = -R_2 L - B_2^{\top} X A_{K, L}.
  \end{align*}
  \subsection{A Key Assumption and its Implications}
Throughout the manuscript, we have the following standing assumption.
\begin{assumption}{1}
  \label{assump1}
  There exists a \emph{stabilizing Nash Equilibrium} $(K_*, L_*) \in \ca S$ for the zero-sum game over the system dynamic $(A,[B_1, B_2])$. Moreover, the corresponding value matrix $X_* = X_*(K_*, L_*)$ satisfies at least one of the following conditions:
\begin{itemize}
  \item[(a1):]
$R_1 + B_1^{\top} X_* B_1 \succ 0$ and $R_2 - B_2^{\top}X_* B_2 + B_2^{\top} X_* (R_1 + B_1^{\top}X_* B_1)^{-1}B_1 B_2  \succ 0$.
\item[(a2):]
  $-R_2 + B_2^{\top}X_* B_1 \prec 0$ and $R_1 + B_1^{\top} X_* B_1 - B_1^{\top}X_* B_2 (-R_2 + B_2^{\top}X_* B_2)^{-1} B_2^{\top}X_* B_1 \succ 0$.
  \end{itemize}
%is positive semidefinite.
\end{assumption}
\begin{remark}
  The existence of a stabilizing Nash is a necessary assumption adopted in the LQ literature~\cite{basar2008h}. However, we do not constrain the value matrix $X_*$ to be positive semidefinite, as assumed in~\cite{stoorvogel1994discrete, basar2008h, zhang2019policy}. The definiteness is useful when the LQ game formulation is tied to $\ca H_{\infty}$ control. However, from the LQ game perspective, this association seems unnecessary. Conditions $(a1)$ or $(a2)$ are necessary if it is desired to extract \emph{unique} policies from the optimal value matrix $X_*$. Namely, if the \emph{total derivative} of $f$ vanishes, i.e.,
  \begin{align*}
    \begin{pmatrix} R_1 & 0 \\ 0 & -R_2 \end{pmatrix} \begin{pmatrix} K \\ L \end{pmatrix} - \begin{pmatrix} B_1^{\top} \\ B_2^{\top} \end{pmatrix} X A + \begin{pmatrix} B_1^{\top} \\ B_2^{\top} \end{pmatrix} X \begin{pmatrix} B_1 & B_2 \end{pmatrix} = 0,
    \end{align*}
    conditions $(a1)$ or $(a2)$ are sufficient to guarantee the uniqueness of the solution in $\ca S$. Indeed, assumptions $(a1)$ or $(a2)$ are ``almost necessary.'' If $(K_*, L_*)$ is a NE, then $f(\cdot, L_*)$ achieves a local minimum at $K_*$, i.e., $\nabla_{KK} f(K_*, L_*)[E, E] = \langle E, (R_1 + B_1^{\top}X_* B_1)EY_* \rangle \ge 0$ (note that by assumption, $K_*$ is in the interior of $\ca S$ and thus the second-order partial derivative is well-defined). Similarly, $\nabla_{LL} f(K_*, L_*) = -R_2 + B_2^{\top}X_* B_2 \preceq 0$. We relax these two necessary conditions to hold as strictly positive (respectively, negative) definite.\footnote{If we do not relax the semidefiniteness conditions, the NE would be solutions to GARE involving Moore-Penrose inverse. This will introduce other  complications than practically needed.} In fact, in the sequential LQ formulation, the inequalities in these two conditions correspond to certain ``quasi-Newton'' directions and as such play a central role in our convergence analysis (see \S\ref{sec:ng} and \S\ref{sec:qn} for details.).\par
    Moreover, we shall subsequently see that assumptions $(a1)$ and $(a2)$ lead to distinct choices of leaders in the sequential algorithms. More specifically, if we assume condition $(a1)$, the leader of the sequential algorithm should be player $L$; for assumption $(a2)$, player $K$ should be the designated leader.
\end{remark}
%We \emph{emphasize} that our assumption seems \emph{strictly weaker} than the one in standard LQ game literature~\cite{stoorvogel1994discrete, basar2008h}. Standard assumptions constrain the value matrix $X_*$ to be positive semidefinite and condition $a1$. Under certain assumptions on system parameters $(A, B_1, B_2, Q, R_1, R_2)$, we may deduce that $X_*$ is the minimal solution to~\eqref{eq:gare}. However, this minimality property cannot be established under our set of assumptions.
%Indeed, the assumptions $a1$ or $a2$ is ``almost necessary''. If $(K_*, L_*)$ is a NE, then $f(\cdot, L_*)$ achieves local minimum at $K_*$, i.e., $\nabla_{KK} f(K_*, L_*)[E, E] = \langle E, (R_1 + B_1^{\top}X_* B_1)EY_* \rangle \ge 0$ (note by assumption, $K_*$ is in the interior of $\ca S$ and thus the second-order partial derivative is well-defined). Similarly, $\nabla_{LL} f(K_*, L_*) = -R_2 + B_2^{\top}X_* B_2 \preceq 0$. We relax these two necessary conditions to be strictly positive (respectively, negative) definite\footnote{If we don't relax the semi-definite conditions, the NE would be solutions to GARE involving Moore-Penrose inverse. This will introduce a lot of complications than practically needed.}. Actually, we shall observe next the assumptions $a1$ or $a2$ would guarantee the Nash Equilibrium is strict in the sense: 
%\begin{align*}
  %f(K_*, L) < f(K_*, L_*) < f(K, L_*),
  %\end{align*}
%for all admissible pairs $(K_*, L)$ and $(K, L_*)$.
We observe several implications of this assumption.
\begin{proposition}
  Under Assumption $1$, we have following implications:
\begin{enumerate}
  \item The pair $(A, [B_1, B_2])$ is stabilizable.
    \item $X_*$ is symmetric and solves the Generalized Algebraic Riccati Equation (GARE),
\begin{align}
  \label{eq:gare}
  A^{\top} X A - X +  Q + \begin{pmatrix} B_1^{\top} X A \\ B_2^{\top}X A \end{pmatrix}^{\top} \begin{pmatrix} R_1 + B_1^{\top}X B_1 & B_1^{\top}XB_2 \\ B_2^{\top}X B_1 & -R_2 + B_2^{\top}XB_2 \end{pmatrix}^{-1} \begin{pmatrix} B_1^{\top} X A \\ B_2^{\top}X A \end{pmatrix}= 0.
  \end{align}
  \item $X_*$ is unique among all \emph{almost stabilizing solutions} of~\eqref{eq:gare}.
  \end{enumerate}
\end{proposition}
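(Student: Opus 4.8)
The plan is to dispatch the three claims in order, investing most of the effort in the uniqueness assertion, which I expect to be the crux.

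For claim (1), stabilizability is essentially a restatement of the hypothesis. Assumption~\ref{assump1} furnishes $(K_*,L_*)\in\ca S$, i.e.\ $A-B_1K_*-B_2L_*$ is Schur. Writing $A-B_1K_*-B_2L_* = A-[B_1,B_2]\bigl(\begin{smallmatrix}K_*\\ L_*\end{smallmatrix}\bigr)$ exhibits the stacked gain $\bigl(\begin{smallmatrix}K_*\\ L_*\end{smallmatrix}\bigr)$ as a stabilizing feedback for $(A,[B_1,B_2])$, so the pair is stabilizable. For the symmetry half of claim (2), I would note that $X_*$ solves \eqref{eq:lyapunov_matrix} with $A_{K_*,L_*}$ Schur and with symmetric data $Q+K_*^\top R_1K_*-L_*^\top R_2L_*$; transposing the equation shows $X_*^\top$ solves the same Lyapunov equation, and uniqueness of the solution on $\ca S$ forces $X_*=X_*^\top$.

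For the GARE in claim (2), the idea is to convert Nash stationarity into the algebraic identity. Since $(K_*,L_*)$ lies in the open set $\ca S$ and $f$ is smooth, the Nash inequalities \eqref{eq:nash_ineq} make $K_*$ an interior minimizer of $f(\cdot,L_*)$ and $L_*$ an interior maximizer of $f(K_*,\cdot)$, so both partial gradients vanish. Using the gradient formulas and invertibility of $Y$, this reads $R_1K_* = B_1^\top X_* A_{K_*,L_*}$ and $-R_2L_* = B_2^\top X_* A_{K_*,L_*}$, which stack into the linear system
\[ \begin{pmatrix} R_1+B_1^\top X_* B_1 & B_1^\top X_* B_2\\ B_2^\top X_* B_1 & -R_2+B_2^\top X_* B_2\end{pmatrix}\begin{pmatrix}K_*\\L_*\end{pmatrix} = \begin{pmatrix}B_1^\top X_* A\\ B_2^\top X_* A\end{pmatrix}. \]
Call the block matrix $M$. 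Either condition makes $M$ invertible: (a1) gives a positive definite leading block with a negative definite Schur complement, (a2) the dual inertia, so $M$ is nonsingular in both cases. Solving $\bigl(\begin{smallmatrix}K_*\\L_*\end{smallmatrix}\bigr)=M^{-1}\bigl(\begin{smallmatrix}B_1^\top X_* A\\ B_2^\top X_* A\end{smallmatrix}\bigr)$ and substituting into the Lyapunov equation, the cross terms collapse by a completion-of-squares step (the gain-quadratic part equals $\bigl(\begin{smallmatrix}K_*\\L_*\end{smallmatrix}\bigr)^\top M\bigl(\begin{smallmatrix}K_*\\L_*\end{smallmatrix}\bigr)=\bigl(\begin{smallmatrix}K_*\\L_*\end{smallmatrix}\bigr)^\top\bigl(\begin{smallmatrix}B_1^\top X_*A\\B_2^\top X_*A\end{smallmatrix}\bigr)$), leaving exactly the GARE \eqref{eq:gare}.

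Claim (3) is the main obstacle, and I would handle it through a difference identity for two solutions. Let $X_*$ and $X'$ be almost stabilizing solutions of \eqref{eq:gare}, with associated gains $\mathcal K_*,\mathcal K'$ (well defined, since an almost stabilizing solution has invertible $M$) and closed loops $A_* = A-[B_1,B_2]\mathcal K_*$, $A'=A-[B_1,B_2]\mathcal K'$, and set $\mathcal W=\diag(R_1,-R_2)$. The key computation — using only that each is a solution, the defining relation of its gain, and symmetry — is the pair $A_*^\top X_* A' - X_* = -Q - \mathcal K_*^\top \mathcal W\mathcal K'$ and $A_*^\top X' A' - X' = -Q - \mathcal K_*^\top\mathcal W\mathcal K'$ (the second obtained from the first by swapping indices and transposing, valid since $Q,\mathcal W,X'$ are symmetric). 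Subtracting eliminates the right-hand sides and yields the Stein equation $A_*^\top (X_*-X') A' = X_*-X'$. Now I invoke the \emph{stabilizing} nature of $X_*$: since $(K_*,L_*)\in\ca S$, $A_*$ is Schur ($\rho(A_*)<1$), whereas $A'$ only satisfies $\rho(A')\le 1$; hence every eigenvalue product $\lambda_i(A_*)\lambda_j(A')$ has modulus $<1$, the operator $\Delta\mapsto A_*^\top\Delta A'$ has spectral radius $<1$, and $I$ minus it is invertible, forcing $X_*-X'=0$.

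The delicate point is precisely this last step: getting the difference into the clean Stein form and then exploiting the \emph{strict} Schur property of $A_*$ (rather than merely $\rho\le 1$) to remove the operator's eigenvalue at $1$. This strictness is exactly what the stabilizing-Nash hypothesis supplies, and it is what rules out the reciprocal-on-the-unit-circle eigenvalue configurations that would otherwise permit a nonzero $\Delta$. The algebraic derivation of the difference identity, though routine, is where the sign bookkeeping with the indefinite weight $\mathcal W$ must be carried out carefully.
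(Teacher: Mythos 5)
Your proof is correct and follows essentially the same route as the paper: part (1) from the Schur closed loop, part (2) by setting the partial gradients to zero, using (a1)/(a2) to invert the block matrix, and substituting back into the Lyapunov equation, and part (3) via the difference identity $A_*^{\top}(X_*-X')A' = X_*-X'$ together with $\rho(A_*)<1$, $\rho(A')\le 1$ to conclude the Stein operator is invertible. The only cosmetic difference is that your uniqueness argument is fully self-contained and treats stabilizing and almost-stabilizing comparison solutions at once, whereas the paper invokes Lemma 3.1 of~\cite{stoorvogel1994discrete} for the stabilizing case and reserves the Stein computation for the marginal one.
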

\begin{proof}
  The statement in $(a)$ is immediate since $A-B_1K_*-B_2L_*$ is Schur.\\
 In order to show $(b)$, we note that since $(K_*, L_*)$ is a stabilizing Nash Equilibrium, then $X_*$ is the solution of the Lyapunov  equation~\eqref{eq:lyapunov_matrix}; it thus follows that $X_*$ is symmetric. Further, note that the partial gradients of $f$ vanish at $(K_*, L_*)$, namely $(K_*, L_*) \in \ca S$ solves the equations
\begin{align*}
    &\nabla_{K} f(K, L) = (R_1 K - B_1^{\top} X A_{K, L}) Y = 0, \\
    &\nabla_{L} f(K, L) = (-R_2 L - B_2^{\top} XA_{K, L}) Y = 0.
\end{align*}
Substituting this in the Lyapunov equation~\eqref{eq:lyapunov_matrix}, it follows that $(K_*, L_*)$ solves the GARE~\eqref{eq:gare}. Note that the inverse,
 \begin{align*}
\begin{pmatrix} R_1 + B_1^{\top}X B_1 & B_1^{\top}XB_2 \\ B_2^{\top}X B_1 & -R_2 + B_2^{\top}XB_2 \end{pmatrix}^{-1}
  \end{align*}
  is well-defined at $X_*$ by the conditions $a1$ or $a2$ in the assumption.\\
 %of having $B_1^{\top}X_*B_1 + R_1 \succ 0$ and $R_2 - B_2^{\top}X_*B_2 \succ 0$.\\
  For the statement in $(c)$, by Lemma $3.1$~\cite{stoorvogel1994discrete}, $X_*$ is the unique stabilizing solution. It remains to show that there does not exist almost stabilizing solution to~\eqref{eq:gare}. Suppose there exists a pair $(K, L) \in \partial \ca S$, i.e., $\rho(A-B_1K-B_2L) = 1$, solving~\eqref{eq:gare} with solution $X$. Then taking the difference between the identity~\eqref{eq:gare} at $(K_*, L_*)$ and $(K, L)$, we have,
\begin{align*}
  A_*^{\top} (X_* - X) A_{K, L} = X_*-X.
  \end{align*}
  Since $A_{K, L}$ is marginally stable and $A_*$ is stable, then $I \otimes I - A_{K, L}^{\top} \otimes A_*$ is invertible and thus $X_* - X = 0$. 
  %Item $(d)$ follows immediately from the uniqueness of $X_*$.
  \end{proof}

  %%%%%
  %
  \section{Oracle Models for Sequential LQ Games}
    
  In this work, we assume that both players have access to oracles that return either gradient, natural gradient or quasi-Newton directions. Suppose that $\ca O_K$ and $\ca O_L$ are the orcales for the two players respectively. The players will query their respective oracles in a sequential manner: if player $1$ query the oracle, we assume the policy played by player $2$ is fixed during the query and this policy is transparent to oracle $\ca O_K$ of player $1$. As $f$ is in general not convex-concave, if two players have the same oracles and play greedily using the information they have acquired, theoretically, there is no guarantee that they will eventually converge to the Nash equilibrium. In order to obtain theoretical guarantees, we assume that player $1$ can access an oracle that computes the minimizer of $f(K, L)$ over $K$ for a fixed $L$. This oracle can be constructed out of the simple first-order oracles by repeatedly performing gradient descent/natural gradient descent/quasi-Newton type steps. More explicitly, we shall assume that for player $1$, if player $2$'s policy is $\hat{L}$, the oracle can return $K \leftarrow \argmin_{K} f(K, \hat{L})$.
\subsection{Motivation}
We shall present the motivation for equipping player $1$ with a more powerful oracle model. As finding the Nash equilibrium is equivalent to solving the saddle point of $f(K, L)$, from the perspective of player $2$, we may associate a value function independent of player $1$. Namely, we may define a function of the form,
 \begin{align*}
   g(L) = \begin{cases}
     \inf_{K \in \ca S_{L}} f(K, L), & \text{if } L \in \ca S_{\pi_2},\\
     -\infty, & \text{otherwise}.
     \end{cases}
 \end{align*}
 If $g(L)$ possesses a \emph{smoothness} property, we may consider \emph{projected gradient ascent} over the policy space. However, reflecting over $g(L)$ would reveal that $g(L)$ is not necessarily even continuous on $\ca S_{\pi_2}$. For example, if $Q-L^{\top} R_2 L \prec 0$, then $\inf_{K} f(K, L)$ could be $-\infty$. But on the other hand, by Danskin's Theorem, $g(L)$ is differentiable at $L \in \ca S_{\pi_2}$, where $f(K, L)$ admits a unique minimizer over $K$.
 \begin{lemma}
   \label{lemma:differentiable}
   Suppose that $U \subseteq \dom(g)$ is an open subset such that for every $L \in U$, 
$$\argmin_{K \in \ca S_{L}} f(K, L)$$ 
exits and is unique. Then $g(L)$ is differentiable and its gradient is,
   \begin{align*}
     \nabla g(L) = \nabla_L f(K_L, L), \text{ where } K_L = \argmin_{K} f(K, L).
    \end{align*}
   \end{lemma}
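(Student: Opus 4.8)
The statement is an instance of Danskin's (envelope) theorem, so the plan is to reduce it to two facts: (i) the interior first-order condition $\nabla_K f(K_L,L)=0$, and (ii) continuity of the minimizer map $L\mapsto K_L$ on $U$. Granting these, I would fix $L_0\in U$ and a direction $E$ and sandwich the one-sided difference quotients of $g$. For the upper estimate, suboptimality of the fixed point $K_{L_0}$ at the perturbed problem gives, for $t>0$, $g(L_0+tE)-g(L_0)\le f(K_{L_0},L_0+tE)-f(K_{L_0},L_0)$, whence $\limsup_{t\downarrow0}t^{-1}\big(g(L_0+tE)-g(L_0)\big)\le \langle\nabla_L f(K_{L_0},L_0),E\rangle$. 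For the lower estimate, using $g(L_0)\le f(K_{L_0+tE},L_0)$ together with the mean value theorem in the second argument, $t^{-1}\big(g(L_0+tE)-g(L_0)\big)\ge \langle\nabla_L f(K_{L_0+tE},L_0+stE),E\rangle$ for some $s\in(0,1)$; letting $t\downarrow0$ and invoking (ii) with continuity of $\nabla_L f$ recovers the same limit. Hence the right derivative of $g$ exists and equals $\langle\nabla_L f(K_{L_0},L_0),E\rangle$; applying this with $-E$ yields the left derivative, so $g$ is Gateaux differentiable, and since $L\mapsto\nabla_L f(K_L,L)$ is continuous the Gateaux derivative is the Fréchet gradient $\nabla g(L)=\nabla_L f(K_L,L)$.

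Fact (i) is immediate: since $U\subseteq\ca S_{\pi_2}$ and each $\ca S_L$ is open, $K_L$ is an interior minimizer of the smooth map $f(\cdot,L)$, so $\nabla_K f(K_L,L)=0$. For (ii) I would apply the implicit function theorem to the real-analytic map $\Phi(K,L)=\nabla_K f(K,L)$. The $K$-Hessian at the optimum is $\nabla_{KK}f(K_L,L)[E,E]=\langle E,(R_1+B_1^{\top} X B_1)EY\rangle$ with $X=X(K_L,L)$ and $Y=Y(K_L,L)$; here $Y\succ0$ everywhere on $\ca S$, and since $K_L$ is the stabilizing minimizer of the inner LQR problem with system $(A-B_2L,B_1)$ and possibly indefinite state-cost $Q-L^{\top}R_2L$, its value matrix is the maximal DARE solution, which by \cref{thrm:dare_existence} satisfies $R_1+B_1^{\top}XB_1\succ0$. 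Thus $\nabla_{KK}f(K_L,L)$ is a positive-definite operator and $\partial_K\Phi(K_{L_0},L_0)$ is invertible, so the IFT supplies a neighborhood $V\ni L_0$, a neighborhood $W\ni K_{L_0}$, and an analytic branch $\kappa:V\to W$ of critical points with $\kappa(L_0)=K_{L_0}$ and $\nabla_{KK}f(\kappa(L),L)\succ0$ on $V$.

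To finish (ii) I must identify the global minimizer with this branch, i.e. show $K_L=\kappa(L)$ for $L$ near $L_0$. Whenever $K_L\in W$ it is a critical point and hence coincides with $\kappa(L)$ by the local uniqueness the IFT guarantees, so the only obstruction is that the global minimizer might escape $W$ as $L$ varies. This non-escape of the argmin is the crux of the proof: in the present indefinite-cost setting $f(\cdot,L)$ need not be coercive — as the paper's analysis of the boundary behavior of $f$ shows, the value can stay finite as $K\to\partial\ca S_L$ — so boundedness of the minimizers cannot be taken for granted. I would resolve this by characterizing $K_L$ through the stabilizing solution of the inner DARE: the value matrix $X(K_L,L)$ is the stabilizing (hence maximal) solution of the DARE for $(A-B_2L,\,B_1,\,Q-L^{\top}R_2L,\,R_1)$, and the stabilizing solution is selected by the stable deflating subspace of an associated symplectic pencil, which varies analytically with $L$ wherever it persists. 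This yields analyticity of $L\mapsto X(K_L,L)$, hence of $K_L=(R_1+B_1^{\top}XB_1)^{-1}B_1^{\top}X(A-B_2L)$, forcing $K_L=\kappa(L)$ near $L_0$.

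With (i) and (ii) secured, the envelope computation of the first paragraph delivers $\nabla g(L)=\nabla_L f(K_L,L)$. I expect the main obstacle to be exactly the non-escape step above; the sandwich argument itself only needs continuity of $L\mapsto K_L$ at the base point, so if one prefers to avoid the Riccati-pencil machinery, an alternative is to prove continuity of the argmin directly from the standing uniqueness hypothesis on $U$ via a limiting argument on near-minimizers, but this again hinges on ruling out the escape of minimizers to the boundary, which is precisely where the indefiniteness of $Q-L^{\top}R_2L$ makes the analysis delicate.
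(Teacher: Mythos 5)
Your proposal is correct in its overall architecture, but it takes a genuinely different---and much heavier---route than the paper's. The paper disposes of the lemma in three lines: since $f$ is $C^{\infty}$ in both variables, Hypothesis $D2$ of the Bernhard--Rapaport variant of Danskin's theorem is satisfied (that variant is invoked precisely because $\ca S_L$ is neither compact nor independent of $L$, so classical Danskin does not apply); the theorem yields directional differentiability of $g$, and uniqueness of the minimizer makes the directional derivative linear in the direction, hence $g$ is differentiable with $\nabla g(L)=\nabla_L f(K_L,L)$. You instead rebuild the envelope theorem by hand: your sandwich argument is the standard Danskin computation, and all the real work is pushed into continuity of $L\mapsto K_L$, which you obtain from the implicit function theorem plus identification of $X(K_L,L)$ with the unique stabilizing solution of the inner DARE, the latter varying analytically by deflating-subspace arguments. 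What your route buys is a self-contained proof and a stronger conclusion (analyticity of the minimizer map, which the paper never establishes and which would be useful elsewhere, e.g.\ in the Hessian computation of the appendix); what it costs is the Riccati-pencil machinery and the non-escape issue you correctly flag as the crux---both of which the paper's citation sidesteps entirely.

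One step of your argument is genuinely circular as written: you justify $R_1+B_1^{\top}XB_1\succ 0$ at $X=X(K_L,L)$ by calling the value matrix ``the maximal DARE solution'' and invoking Theorem~\ref{thrm:dare_existence}. But the Riccati map---and hence the very statement that $X$ solves the DARE---involves the inverse of $R_1+B_1^{\top}XB_1$, and since $Q-L^{\top}R_2L$ is indefinite, $X$ need not be p.s.d., so the standard LQR argument $R_1+B_1^{\top}XB_1\succeq R_1\succ0$ is unavailable; the interior second-order necessary condition only gives $R_1+B_1^{\top}XB_1\succeq 0$, with singularity not excluded. The fix is to use the lemma's own uniqueness hypothesis: the first-order condition $R_1K_L=B_1^{\top}XA_{K_L,L}$ and completion of squares give, for any $K$ with $(K,L)\in\ca S$,
\begin{align*}
  X_{K,L}-X=\sum_{j=0}^{\infty}(A_{K,L}^{\top})^{j}\,(K-K_L)^{\top}(R_1+B_1^{\top}XB_1)(K-K_L)\,A_{K,L}^{j},
\end{align*}
so if $(R_1+B_1^{\top}XB_1)v=0$ for some $v\neq 0$, then $K_L+tvw^{\top}$ lies in $\ca S_L$ for all small $t$ (openness) and attains the same value $g(L)$, contradicting uniqueness of the minimizer. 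With this patch the Hessian $\nabla_{KK}f(K_L,L)$ is invertible and your IFT/pencil argument goes through; note also that once you have the stabilizing-DARE-solution identification, its local analyticity already gives continuity of $K_L$ directly, making the separate IFT step redundant.
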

     Traditionally Danskin's theorem would require that for every $L$, the minimization of $K$ is over a common compact set. This is not the situation in our case as $\ca S_L$ is not compact nor common over $L$\footnote{Namely, it is not necessarily true $\ca S_{L_1} = \ca S_{L_2}$ if $L_1 \neq L_2$.}. The statement of Lemma~\ref{lemma:differentiable} instead follows from a variant of Danskin's Theorem in~\cite{bernhard1995theorem}. 
   \begin{proof}
     We only need to observe that $f(K, L)$ is $C^{\infty}$ in both variables and thus Fr\'{e}chet differentiable. Hence, the assumptions of Hypothesis $D2$ in~\cite{bernhard1995theorem} are satisfied. By Theorem $D2$ in~\cite{bernhard1995theorem}, $g(L)$ is directionally differentiable in every direction. As the minimizer $K_L$ is unique, the directional derivative is uniform in every direction and consequently, $g(L)$ is differentiable.
     \end{proof}
     The next issue that needs to be addressed is whether $\ca U$ is empty. It turns out that by standard LQR theory,  $\{L \in \dom(g): Q-L^{\top} R_2 L \succ 0\}$ is a subset in $U$.
    We can thus outline an update rule assuming that $g(L)$ is Lipschitz, namely, a
    \emph{projected gradient ascent} over $L$ as,
     \begin{align*}
       L_{j+1} = P_{\ca S_{\pi_2}} \left(L_j + \eta_j \nabla g(L_j)\right),
       \end{align*}
       where $\nabla g(L_j)$ is given by
       \begin{align*}
         \nabla g(L_j) = (-R_2 L - B_2^TX_{K_{L_j}, L_j} A_{K_{L_j}, L_j}) Y_{K_{L_j}, L_j} \text{  with  } K_{L_j} = \argmin_{K} f(K, L_j).
         \end{align*}
         As already noted, we do not have a full description of the nonconvex set $\ca S_{\pi_2}$ and a projection would rather be prohibitive. What we shall propose instead, are update rules that guarantee all of its iterates stay in the set $\ca S_{\pi_2}$; this will be achieved without a projection step by exploiting the problem structure.\par
         %\begin{remark}
         %One needs to be careful about the update rule on $L_j$: as we have pointed out, to guarantee the game can be continued, we need to choose $L_j$ to such that $(A-B_2 L_j, B_1)$ is stabilizable and moreover, to guarantee the differentiability of $g(L_{j+1})$, we need that the function $f(K, L_{j+1})$ over $K$ admits a unique stabilizing minimizer. 
         %\end{remark}
         Another interesting interpretation of our setup is to consider this game from the perspective of player $2$: we have a game played by player $2$ with a greedy adversary. Each time player $2$ chooses a policy $L'$, the adversary (player $1$) would try to act greedily, i.e., minimize the cost $f(K, L')$ over $K$. The goal for player $2$ is to achieve the Nash equilibrium point for himself/herself and the greedy adversary. The information player $2$ can acquire from the game (i.e., oracle) is the first-order information (function value and gradient). As such, player $2$ has an obligation to guarantee that along the iterates $\{L_j\}$, the oracle could return meaningful first-order information of $g(L)$, i.e., $g(L_j)$ is differentiable for every $j$.
\section{Algorithm: Natural Gradient Policy on $L$}
\label{sec:ng} 
Throughout \S\ref{sec:ng} and \S\ref{sec:qn}, we assume that condition $(a1)$ in our assumption holds, i.e.,
\begin{align*}
  R_1 + B_1^{\top}X_* B_1 \succ 0, -R_2 + B_2^{\top}X_* B_2 + B_2^{\top} X_* B_1(R_1 + B_1^{\top}X_* B_1)^{-1}B_1^{\top}X_* B_2 \prec 0,
  \end{align*}
and an oracle $\ca O_K$ that returns the stabilizing minimizer $f(K, L)$ for any fixed $L$, provided that such minimizer exists. Note that the unique minimizer corresponds to the maximal solution $X^+$ to the algebraic Riccati equation (with fixed $L$), namely,
\begin{align*}
  (A-B_2L)^{\top} X (A-B_2 L) - X + Q-L^{\top} R_2 L - (A-B_2 L)^{\top} X B_1 (R+B_1^{\top} X B_1)^{-1} B_1^{\top} X (A-B_2 L) = 0.
  \end{align*}
  We shall subsequently discuss how to construct this oracle by policy gradient based algorithms in \S\ref{sec:pg_K}.
         \subsection{Algorithm}
         The algorithm is given by:
          \begin{algorithm}[H]
            \caption{Natural Gradient Policy for LQ Game}
            \label{alg1}
            \begin{algorithmic}[1]
              \State Initialize $L_0$ such that $(A-B_1 L_0, B_2)$ is stabilizable and the DARE
\begin{align*}
  (A-B_2L_0)^{\top}X(A-B_2L_0) - X + Q - L_0^{\top}R_2 L_0 - (A-B_2 L_0)^{\top} X B_1(R_1 + B_1^{\top} X B_1)B_1^{\top}X(A-B_2L_0) = 0
  \end{align*}
  has a stabilizing solution $X^+$ with $R_1 + B_1^{\top} X^{+} B_1 \succ 0$.
              \If{ $j \ge 1$}
              \State Set: $K_{j-1} \leftarrow \argmin_K f(K, L_{j-1})$.
                  \State Set: $L_{j} = L_{j-1} + \eta_j N_g(L_j) \equiv L_{j-1} + \eta_j N_{f, L} (K_{j-1}, L_{j-1})$.
                \EndIf
              \end{algorithmic}
            \end{algorithm}
              We note that the initialization step is generally nontrivial. However, if we further assume that $(A, B_1)$ is stabilizable, $Q \succeq 0$ and those eigenvalues of $A$ lying on the unit disk are $(Q, A)$-detectable, then we can choose $L_0 = 0$\footnote{This is indeed the standard assumption in the LQ literature.}. For the general case, we may need to check invariant subspaces of system parameters (see~\cite{Lancaster1995algebraic} for details.)
            \subsection{Convergence Analysis}
              To simplify the notation let,
            \begin{align*}
              {\bf U}_{K, L} = R_1K - B^{\top} X_{K, L} A_{K, L}, \qquad {\bf V}_{K, L} = -R_2 L - B^{\top} X_{K, L}A_{K, L};
              \end{align*}
              namely $2{\bf U}_{K, L} = N_{f, K}(K, L)$ and $2{\bf V}_{K, L} = N_{f, L}(K, L)$. 
              %We now observe that the value difference between two stabilizing pairs $(K, L)$ and $(\hat{K}, \hat{L})$.
            First a useful observation.
            \begin{lemma}[NG Comparison Lemma]
              \label{lemma:ng_comparison}
              Suppose that $(K, L)$ and $(\hat{K}, \hat{L})$ are both stabilizing and let $X$ and $\hat{X}$ be the corresponding value matrices. Then
              \begin{enumerate}
                \item
                \begin{align*}
                  X-\hat{X} = A_{\hat{K}, \hat{L}}^{\top} (X - \hat{X}) A_{\hat{K}, \hat{L}} + (K - \hat{K})^{\top} {\bf U}_{K, L} + {\bf U}^{\top}_{K, L} (K- \hat{K}) - (K-\hat{K})^{\top} R_1 (K-\hat{K}) \\
                  + (L - \hat{L})^{\top} {\bf V}_{K, L} + {\bf V}_{K, L}^{\top} (L-\hat{L}) + (L-\hat{L})^{\top} R_2 (L-\hat{L}) - (A_{K, L}-A_{\hat{K}, \hat{L}})^{\top}X(A_{K, L}- A_{\hat{K}, \hat{L}}).
                \end{align*}
              \item
                \begin{align*}
                  X-\hat{X} = A_{K, {L}}^{\top} (X - \hat{X}) A_{{K}, {L}} + (K - \hat{K})^{\top} {\bf U}_{\hat{K}, \hat{L}} + {\bf U}^{\top}_{\hat{K}, \hat{L}} (K- \hat{K}) + (K-\hat{K})^{\top} R_1 (K-\hat{K}) \\
                  + (L - \hat{L})^{\top} {\bf V}_{\hat{K}, \hat{L}} + {\bf V}_{\hat{K}, \hat{L}}^{\top} (L-\hat{L}) - (L-\hat{L})^{\top} R_2 (L-\hat{L}) + (A_{K,L}-A_{\hat{K}, \hat{L}})^{\top} \hat{X}(A_{K, L}-A_{\hat{K}, \hat{L}}).
                \end{align*}
                \end{enumerate}
            \end{lemma}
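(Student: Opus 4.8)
The plan is to derive both identities directly from the two Lyapunov equations satisfied by $X$ and $\hat X$, using the single algebraic observation that the closed-loop matrices differ linearly in the gains. Writing $\Delta K = K - \hat K$, $\Delta L = L - \hat L$ and $\Delta A = A_{K,L} - A_{\hat K, \hat L}$, the key identity is $\Delta A = -B_1 \Delta K - B_2 \Delta L$. I would first record the two equations in compact fixed-point form,
\begin{align*}
  X = A_{K,L}^{\top} X A_{K,L} + M, \qquad \hat X = A_{\hat K, \hat L}^{\top} \hat X A_{\hat K, \hat L} + \hat M,
\end{align*}
where $M = Q + K^{\top} R_1 K - L^{\top} R_2 L$ and $\hat M$ is its hatted analogue.

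For part (1), I would substitute $A_{K,L} = A_{\hat K, \hat L} + \Delta A$ into the equation for $X$, expand the quadratic, and solve for $A_{\hat K, \hat L}^{\top} X A_{\hat K, \hat L}$; subtracting the equation for $\hat X$ then yields a recursion on $A_{\hat K, \hat L}$ with inhomogeneous part $(M - \hat M)$ plus cross terms in $\Delta A$. After regrouping, the cross terms collapse to $A_{K,L}^{\top} X \Delta A + \Delta A^{\top} X A_{K,L} - \Delta A^{\top} X \Delta A$. The crux is to rewrite the linear-in-$\Delta A$ pieces through the definitions of ${\bf U}_{K,L}$ and ${\bf V}_{K,L}$: since $B_1^{\top} X A_{K,L} = R_1 K - {\bf U}_{K,L}$ and $B_2^{\top} X A_{K,L} = -R_2 L - {\bf V}_{K,L}$, inserting $\Delta A = -B_1 \Delta K - B_2 \Delta L$ converts $A_{K,L}^{\top} X \Delta A + \Delta A^{\top} X A_{K,L}$ into the symmetric combination $\Delta K^{\top}{\bf U}_{K,L} + {\bf U}_{K,L}^{\top}\Delta K + \Delta L^{\top}{\bf V}_{K,L} + {\bf V}_{K,L}^{\top}\Delta L$, plus residual $R_1$- and $R_2$-quadratics in $\Delta K,\Delta L$ together with explicit $K,L$-cross terms. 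Finally I would expand $M - \hat M$ via $\hat K = K - \Delta K$, $\hat L = L - \Delta L$; all mixed terms coupling $K,L$ with $\Delta K,\Delta L$ cancel the residuals, leaving exactly $-\Delta K^{\top} R_1 \Delta K + \Delta L^{\top} R_2 \Delta L$ and the stated linear terms, with $-\Delta A^{\top} X \Delta A$ untouched.

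For part (2) I would run the mirror-image argument: keep the recursion on $A_{K,L}$ by starting from $X = A_{K,L}^{\top} X A_{K,L} + M$, and instead expand $A_{\hat K, \hat L} = A_{K,L} - \Delta A$ inside the Lyapunov equation for $\hat X$. Subtracting gives the recursion on $A_{K,L}$ now weighted by $\hat X$ rather than $X$; regrouping around $A_{\hat K, \hat L}$ and using $B_1^{\top}\hat X A_{\hat K, \hat L} = R_1 \hat K - {\bf U}_{\hat K, \hat L}$ and $B_2^{\top}\hat X A_{\hat K, \hat L} = -R_2 \hat L - {\bf V}_{\hat K, \hat L}$ produces the ${\bf U}_{\hat K, \hat L}, {\bf V}_{\hat K, \hat L}$ linear terms. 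Expanding $M - \hat M$ in the other direction ($K = \hat K + \Delta K$, $L = \hat L + \Delta L$) cancels the mixed terms and flips the signs of the quadratic residuals, yielding $+\Delta K^{\top} R_1 \Delta K - \Delta L^{\top} R_2 \Delta L$ and the $+\Delta A^{\top}\hat X \Delta A$ term, as claimed.

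I expect the main obstacle to be purely bookkeeping: tracking which value matrix ($X$ or $\hat X$) weights each quadratic and ensuring the signs of the $R_1,R_2$ residuals and of the $\Delta A^{\top}(\cdot)\Delta A$ term survive correctly after the $M - \hat M$ cancellation. The only genuinely substantive step is recognizing that the linear terms in $\Delta A$ are precisely the natural-gradient quantities ${\bf U},{\bf V}$ once the gains are reinserted through the Lyapunov-equation identities; everything else is symmetric expansion and cancellation.
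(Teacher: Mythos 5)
Your proposal is correct and follows essentially the same route as the paper: the paper's proof likewise takes the difference of the two Lyapunov equations for $X$ and $\hat X$ and states that ``a few algebraic operations'' yield the identity, which is exactly the expansion you carry out (substituting $A_{K,L}=A_{\hat K,\hat L}+\Delta A$, reinserting the gains via the definitions of ${\bf U}$ and ${\bf V}$, and cancelling the mixed $R_1,R_2$ terms). The only difference is that you make the bookkeeping explicit, including the mirror-image expansion for part (2), which the paper leaves to the reader.
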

            \begin{remark}
              Item $(b)$ of this lemma was observed in~\cite{zhang2019policy}. Our presentation offers a control theoretic perspective on its proof.
              \end{remark}
            \begin{proof}
              We prove item $(a)$; item $(b)$ can be proved in a similar manner. \\
              It suffices to take the difference of the Lyapunov equations:
\begin{align*}
  &A_{K, L}^{\top} X A_{K, L} + Q + K^{\top} R_1 K - L^{\top} R_2 L = X, \\
    & A_{\hat{K}, \hat{L}}^{\top} \hat{X} A_{\hat{K}, \hat{L}} + Q + \hat{K}^{\top} R_1 \hat{K} - \hat{L}^{\top} R_2 \hat{L} = \hat{X}.
  \end{align*}
  Indeed, a few algebraic operations reveal that
\begin{align*}
  X - \hat{X} &= A_{\hat{K}, \hat{L}}^{\top} (X - \hat{X}) A_{\hat{K}, \hat{L}} + (K - \hat{K})^{\top} {\bf U}_{K, L} + {\bf U}^{\top}_{K, L} (K- \hat{K}) 
- (K-\hat{K})^{\top} (R_1) (K-\hat{K}) \\
                  &\quad + (L - \hat{L})^{\top} {\bf V}_{K, L} + {\bf V}_{K, L}^{\top} (L-\hat{L}) + (L-\hat{L})^{\top} R_2 (L-\hat{L})- (A_{K, L}- A_{\hat{K}, \hat{L}})^{\top} X (A_{K, L}- A_{\hat{K}, \hat{L}}) .
  \end{align*}
            \end{proof}
            We now observe another version of comparison lemma when $L, \tilde{L} \in \dom(g)$. Indeed, this lemma will play a more prominent role in our convergence analysis.
            \begin{lemma}[Comparison Lemma $2$]
              \label{lemma:ng_comparison_2}
              Suppose that $L, \tilde{L} \in \dom(g)$, namely there exists $K, \tilde{K}$ such that
              \begin{align*}
                K = \argmin_{K' \in \ca S_{L}} f(K', L), \qquad \tilde{K} = \argmin_{K' \in \ca S_{\tilde{L}}} f(K', \tilde{L}).
              \end{align*}
              Further, suppose that the algebraic Riccati map $\ca R_{A-B_2 L, B_1, Q-L^{\top}R_2L, R_1}(\tilde{X})$ is well-defined, i.e., $R_1 + B_1^{\top} \tilde{X} B_1$ is invertible. Recall that the Riccati map is given by,
              \begin{align*}
                \ca R_{A-B_2L, B_1, Q-L^{\top}R_2 L, R_1}(\tilde{X}) &= (A-B_2L)^{\top} \tilde{X} (A-B_2L) - \tilde{X} + Q-L^{\top}R_2 L \\
                &\quad - (A-B_2L)^{\top} \tilde{X} B_1 (R_1 + B_1^{\top} \tilde{X} B_1)^{-1} B_1^{\top} \tilde{X} (A-B_2L).
              \end{align*}
              Let $X$ and $\tilde{X}$ be the corresponding value matrix. Putting
              \begin{align*}
              {\bf E} \coloneqq R_1 + B_1^{\top} \tilde{X} B_1, \qquad {\bf F} \coloneqq B_1^{\top} \tilde{X} (A-B_2 L), 
                \end{align*}
                then
                  \begin{align*}
                    X - \tilde{X} = A_{K, L}^{\top}(X-\tilde{X}) A_{K, L} + \ca R_{A-B_2 L, B_1, Q-L^{\top}R_2 L, R_1}(\tilde{X}) + ({\bf E} K - {\bf F})^{\top} {\bf E}^{-1}({\bf E}K - {\bf F}).
                    \end{align*}
                    Moreover,
                    \begin{align*}
                      \ca R_{A-B_2 L, B_1, Q-L^{\top}R_2 L, R_1} = (L - \tilde{L})^{\top} {\bf V}_{\tilde{K}, \tilde{L}} + {\bf V}_{\tilde{K}, \tilde{L}}^{\top}(L-\tilde{L}) - (L-\tilde{L})^{\top} {\bf O}_{\tilde{X}} (L-\tilde{L}),
                    \end{align*}
                    where
                    \begin{align*}
                      {\bf O}_{\tilde{X}} \coloneqq R_2-B_2^{\top} \tilde{X} B_2 +B_2^{\top} \tilde{X} B_1(R_1 + B_1^{\top} \tilde{X} B_1)^{-1} B_1^{\top} \tilde{X} B_2.
                    \end{align*}
              \end{lemma}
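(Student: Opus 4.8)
The plan is to establish the two displayed identities separately, each via a completion-of-squares manipulation anchored at the appropriate matrix equation. The first identity is purely algebraic and follows from the Lyapunov equation for $X$; the second crucially exploits that $\tilde{K}$ is a minimizer, so that $\tilde{X}$ solves an algebraic Riccati equation \emph{at the base point} $\tilde{L}$.

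For the first identity, I would start from the Lyapunov equation satisfied by the value matrix at $(K,L)$, namely $A_{K,L}^\top X A_{K,L} + Q + K^\top R_1 K - L^\top R_2 L = X$, and substitute it into the term $A_{K,L}^\top (X - \tilde{X}) A_{K,L}$ after writing $A_{K,L} = (A - B_2 L) - B_1 K$. Expanding $A_{K,L}^\top \tilde{X} A_{K,L}$ and collecting the $\tilde{X}$-dependent pieces produces exactly $\mathbf{E} = R_1 + B_1^\top \tilde{X} B_1$ and $\mathbf{F} = B_1^\top \tilde{X}(A - B_2 L)$; the cross terms linear in $K$ are then completed into the square $(\mathbf{E}K - \mathbf{F})^\top \mathbf{E}^{-1}(\mathbf{E}K - \mathbf{F})$, and the remainder assembles into $X - \tilde{X}$ plus the Riccati map $\ca R_{A-B_2L,B_1,Q-L^\top R_2 L,R_1}(\tilde{X})$. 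I would emphasize that this step uses only the Lyapunov equation for $X$ and the invertibility of $\mathbf{E}$; it needs neither optimality of $K$ nor that $\tilde{X}$ be a value matrix.

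For the second identity, the key observation is that since $\tilde{K} = \argmin_{K'} f(K',\tilde{L})$, the first-order condition $\mathbf{U}_{\tilde{K},\tilde{L}} = R_1 \tilde{K} - B_1^\top \tilde{X} A_{\tilde{K},\tilde{L}} = 0$ holds, which is equivalent to $\tilde{X}$ solving $\ca R_{A-B_2\tilde{L},B_1,Q-\tilde{L}^\top R_2\tilde{L},R_1}(\tilde{X}) = 0$ with optimal gain $\tilde{K} = \mathbf{E}^{-1} B_1^\top \tilde{X}(A-B_2\tilde{L})$. I would write the Riccati map in the reduced form $\ca R(\tilde{X}) = (A-B_2L)^\top \Pi (A-B_2L) - \tilde{X} + Q - L^\top R_2 L$ with $\Pi \coloneqq \tilde{X} - \tilde{X}B_1\mathbf{E}^{-1}B_1^\top \tilde{X}$, observing that $\Pi$ is inert under the substitution $L \mapsto \tilde{L}$. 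Subtracting the vanishing base-point equation and setting $\Delta \coloneqq L - \tilde{L}$, so that $A-B_2L = (A-B_2\tilde{L}) - B_2\Delta$ and $L = \tilde{L} + \Delta$, the quadratic terms collapse to $-\Delta^\top (R_2 - B_2^\top \Pi B_2)\Delta = -\Delta^\top \mathbf{O}_{\tilde{X}}\Delta$, since $B_2^\top \Pi B_2 = B_2^\top \tilde{X}B_2 - B_2^\top \tilde{X}B_1\mathbf{E}^{-1}B_1^\top \tilde{X}B_2$. The linear terms take the form $\Delta^\top \mathbf{W} + \mathbf{W}^\top \Delta$ with $\mathbf{W} = -R_2\tilde{L} - B_2^\top \Pi(A-B_2\tilde{L})$, and substituting $A_{\tilde{K},\tilde{L}} = (I - B_1\mathbf{E}^{-1}B_1^\top \tilde{X})(A-B_2\tilde{L})$ gives $B_2^\top \Pi(A-B_2\tilde{L}) = B_2^\top \tilde{X}A_{\tilde{K},\tilde{L}}$, whence $\mathbf{W} = -R_2\tilde{L} - B_2^\top \tilde{X}A_{\tilde{K},\tilde{L}} = \mathbf{V}_{\tilde{K},\tilde{L}}$, as claimed.

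The main obstacle is organizational rather than conceptual: keeping the two distinct shifts separated — the evaluation point $L$ inside the Riccati map versus the base point $\tilde{L}$ at which $\tilde{X}$ actually satisfies the Riccati equation — and isolating the single fact that makes everything collapse, namely that $\Pi$ depends only on $(\tilde{X}, B_1, R_1)$. Once this is in place, matching the quadratic coefficient with $\mathbf{O}_{\tilde{X}}$ and the linear coefficient with $\mathbf{V}_{\tilde{K},\tilde{L}}$ is forced by the optimality relation $\mathbf{E}\tilde{K} = B_1^\top \tilde{X}(A-B_2\tilde{L})$, and the transpose-symmetry of both sides serves as a running consistency check.
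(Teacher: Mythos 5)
Your proposal is correct and follows essentially the same route as the paper: the first identity via the Lyapunov equation for $X$ plus completion of squares in $K$ (and you rightly note that neither optimality of $K$ nor invertibility beyond that of ${\bf E}$ is needed there), and the second by subtracting the Riccati equation that $\tilde{X}$ satisfies at the base point $\tilde{L}$ and regrouping linear and quadratic terms in $L - \tilde{L}$, using $\tilde{K} = {\bf E}^{-1}B_1^{\top}\tilde{X}(A - B_2\tilde{L})$ to identify the linear coefficient with ${\bf V}_{\tilde{K},\tilde{L}}$. Your auxiliary matrix $\Pi = \tilde{X} - \tilde{X}B_1{\bf E}^{-1}B_1^{\top}\tilde{X}$ is only a bookkeeping device that packages the same cancellations the paper carries out by direct expansion.
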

              \begin{proof}
                Note that $X$ solves the Lyapunov matrix equation,
                \begin{align*}
                  X = (A-B_1K-B_2L)^{\top} X (A-B_1K-B_2L) + Q - L^{\top} R_2 L + K^{\top} R_1 K,
                  \end{align*}
                  with $K = (R_1 + B_1^{\top} X B_1)^{-1} B_1^{\top} X (A-B_2 L)$. Then
                  \begin{align*}
                    X - \tilde{X} - A_{K, L}^{\top}(X-\tilde{X}) A_{K, L} &= A_{K, L}^{\top} \tilde{X} A_{K, L} - \tilde{X} + Q - L^{\top} R_2 L + K^{\top} R_1 K \\
                                                                          &= (A-B_2 L)^{\top} \tilde{X} (A-B_2 L) -\tilde{X} + Q -L^{\top} R_2 L + K^{\top}(R_1 + B_1^{\top} \tilde{X} B_1)K \\
                    &\quad - K^{\top}B_1^{\top} \tilde{X} (A-B_2L) - (A-B_2L)^{\top}\tilde{X}B_1 K \\
                                                                          &= \ca R_{A-B_2L, B_1, Q-L^{\top}R_2L, R_1}(\tilde{X})\\
                    &\quad + (A-B_2L)^{\top}\tilde{X}B_1 (R_1 + B_1^{\top} \tilde{X}B_1)^{-1}B_1^{\top}\tilde{X}(A-B_2L) \\
                    &\quad + K^{\top}(R_1 + B_1^{\top} \tilde{X} B_1)K  - K^{\top}B_1^{\top} \tilde{X} (A-B_2L) - (A-B_2L)^{\top}\tilde{X}B_1 K \\
                    &= \ca R_{A-B_2L, B_1, Q-L^{\top}R_2L, R_1}(\tilde{X}) + ({\bf E} K - {\bf F})^{\top} {\bf E}^{-1}({\bf E}K - {\bf F}).
                    \end{align*}
                Since $\tilde{X}$ satisfies the algebraic Riccati equation
                \begin{align*}
                  (A-B_2 \tilde{L})^{\top} \tilde{X} (A-B_2 \tilde{L}) + Q - \tilde{L}^{\top}R_2 \tilde{L} - (A-B_2 \tilde{L})^{\top}\tilde{X}B_1 (R_1 + B_1^{\top} \tilde{X} B_1)^{-1} B_1^{\top} \tilde{X}(A-B_2\tilde{L}) = \tilde{X},
                  \end{align*}
                  it follows that,
                  \begin{align*}
                    \ca R_{A-B_2 L, B_1, Q-L^{\top}R_2 L, R_1}(\tilde{X}) &= (A-B_2L)^{\top} \tilde{X} (A-B_2L) -L^{\top}R_2 L \\
                    &\quad - (A-B_2L)^{\top} \tilde{X} B_1 (R_1 + B_1^{\top} \tilde{X} B_1)^{-1} B_1^{\top} \tilde{X} (A-B_2L) \\
                                                               &\quad - (A-B_2 \tilde{L})^{\top} \tilde{X} (A-B_2 \tilde{L}) + \tilde{L}^{\top}R_2 \tilde{L} \\
                    &\quad + (A-B_2 \tilde{L})^{\top}\tilde{X}B_1 (R_1 + B_1^{\top} \tilde{X} B_1)^{-1} B_1^{\top} \tilde{X}(A-B_2\tilde{L}) \\
                                                               &= (L-\tilde{L})^{\top}(-R_2 \tilde{L} - B_2^{\top} \tilde{X} A_{\tilde{K}, \tilde{L}}) + (-R_2 \tilde{L}-B_2^{\top}\tilde{X}A_{\tilde{X}, \tilde{L}})^{\top}(L-\tilde{L}) \\
                    & \quad - (L-\tilde{L})^{\top}(R_2 - B_2^{\top} \tilde{X} B_2 + B_2^{\top} \tilde{X} B_1(R_1+B_1^{\top}\tilde{X} B_1)^{-1}B_1^{\top}\tilde{X} B_2) (L-\tilde{L}).
                    \end{align*}
                \end{proof}
    Let us now prove the convergence of the proposed algorithm. In the following analysis, $K_j$'s are exclusively used as the unique \emph{stabilizing minimizers},\footnote{Depending on the structure of our problem, it is possible that there exists non-stabilizing minimizers. But here we are only concerned with minimizers in the set $\ca S$.} for $f(K, L_j)$ over $K$.\footnote{Of course, we need to guarantee that for $L_j$, there exists a unique minimizer.} To simplify the notation, let
            \begin{align*}
              \Delta &\coloneqq X_{K_{j-1}, L_{j-1}}, \\
              {\bf O}_{j-1} &\coloneqq R_2-B_2^{\top}X_{K_{j-1}, L_{j-1}}B_2 +B_2^{\top}X_{K_{j-1}, L_{j-1}} B_1(R_1 + B_1^{\top} X_{K_{j-1}, L_{j-1}} B_1)^{-1} B_1^{\top} X_{K_{j-1}, L_{j-1}} B_2.
              \end{align*}
We shall first show that if Algorithm~\ref{alg1} is initialized appropriately, then with stepsize
$$\eta_{j-1} <  \frac {1}{\lambda_n({\bf O}_{j-1})},$$
Algorithm~\ref{alg1} generates a sequence $\{L_j\}$ satisfying properties listed in the following lemma. % we can guarantee that $g(L_j)$ is well-defined,\footnote{By well-defined, we mean that $f(K, L_{j})$ is bounded below and has a unique minimizer.} and leads to a sufficient increase of $g(L_j)-g(L_{j-1})$.
\begin{lemma}
              \label{lemma:ng_key_lemma}
              Suppose that Algorithm~\ref{alg1} is initialized appropriately.
              % Furthermore at iteration $j-1$, $K_{j-1}$ is the unique stabilizing minimizer of $f(K, L_{j-1})$, i.e., $K_{j-1} = \argmin_{K} f(K, L_{j-1})$.
              With stepsize $\eta_{j-1} \le \frac{1}{\lambda_n({\bf O}_{j-1})}$, we then have,
              \begin{enumerate}
                \item $(A-B_2 L_j, B_1)$ is stabilizable for every $j \ge 1$.
                  \item ${\bf O}_j \succ 0$ for every $j \ge 1$.
                  \item For every $j \ge 1$, $f(K, L_j)$ is bounded below over $K$ and there exists a unique minimizer $K_j$, which forms a stabilizing pair $(K_j, L_j)$. Namely, the DARE
                    \begin{align}
                      \label{eq:dare_iterate_j}
                      \begin{split}
                      0 &= (A-B_2L_j)^{\top} X (A-B_2L_j) -X\\
                      & \quad + Q - L_j^{\top}R_2 L_j - (A-B_2L_j)^{\top}X B_1(R_1 + B_1^{\top}X B_1)^{-1} B_1^{\top} X (A-B_2L_j) ,
                      \end{split}
                      \end{align}
                      admits a stabilizing maximal solution $X^+$ satisfying $R + B_1^{\top} X^{+} B_1 \succ 0$.
                \item Putting $\Lambda = X_{K_j, L_j}$, ${\bf E}_j = R_1+B_1^{\top} X_{K_j, L_j} B_1$ and ${\bf F}_j = B_1^{\top} X_{K_j, L_j}(A-B_2 L_j)$ we have
     \begin{align*}
       \Lambda - \Delta &=
       A_{K_j, L_j}^{\top}(\Lambda - \Delta) A_{K_j, L_j} + {\bf V}_{K_{j-1}, L_{j-1}}^{\top} \left( 4 \eta_{j-1} I - 4 \eta_j^2 {\bf O}_{j-1} \right) {\bf V}_{K_{j-1}, L_{j-1}} \\
& \quad + ({\bf E}_{j-1} K_j - {\bf F}_{j-1} )^{-1}{\bf E}_{j-1}^{-1} ( {\bf E}_{j-1} K_j - {\bf F}_{j-1}).
       \end{align*}
\end{enumerate}
            \end{lemma}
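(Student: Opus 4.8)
The plan is to prove the four claims simultaneously by induction on $j$, with the base case supplied by the initialization hypothesis, and with one computation doing almost all of the work. Assume the statements hold at index $j-1$; in particular $L_{j-1}\in\dom(g)$, $\Delta=X_{K_{j-1},L_{j-1}}$ satisfies $R_1+B_1^\top\Delta B_1\succ0$, $A_{K_{j-1},L_{j-1}}$ is Schur, and ${\bf O}_{j-1}\succ0$. Write $\ca R_{L_j}$ for the Riccati map $\ca R_{A-B_2L_j,B_1,Q-L_j^\top R_2L_j,R_1}$. The engine of the step is the \emph{second} identity of \cref{lemma:ng_comparison_2}: its derivation uses only that $\Delta$ solves the Riccati equation at $L_{j-1}$, so it applies at the \emph{new} point $L_j$ \emph{before} we know $L_j\in\dom(g)$. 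Substituting the natural-gradient update $L_j-L_{j-1}=2\eta_{j-1}{\bf V}_{K_{j-1},L_{j-1}}$ collapses the cross terms and yields
\begin{align*}
  \ca R_{L_j}(\Delta)={\bf V}_{K_{j-1},L_{j-1}}^\top\left(4\eta_{j-1}I-4\eta_{j-1}^2{\bf O}_{j-1}\right){\bf V}_{K_{j-1},L_{j-1}}.
\end{align*}
Since $\eta_{j-1}\le 1/\lambda_n({\bf O}_{j-1})$ gives $\eta_{j-1}{\bf O}_{j-1}\preceq I$, the right-hand side is positive semidefinite, so $\Delta$ is a certificate for \cref{thrm:dare_existence} with $R_1+B_1^\top\Delta B_1\succ0$.

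Granting part (1), \cref{thrm:dare_existence} then produces the maximal solution $\Lambda=X^+$ of \eqref{eq:dare_iterate_j} with $R_1+B_1^\top\Lambda B_1\succ0$ and closed-loop spectrum in the closed unit disk, which is the existence half of part (3). Part (4) is then immediate: feeding the displayed value of $\ca R_{L_j}(\Delta)$ into the \emph{first} identity of \cref{lemma:ng_comparison_2} (with $X=\Lambda,\ \tilde X=\Delta,\ K=K_j,\ L=L_j$) reproduces the claimed identity for $\Lambda-\Delta$, with remainder $M:={\bf V}_{K_{j-1},L_{j-1}}^\top(4\eta_{j-1}I-4\eta_{j-1}^2{\bf O}_{j-1}){\bf V}_{K_{j-1},L_{j-1}}+({\bf E}_{j-1}K_j-{\bf F}_{j-1})^\top{\bf E}_{j-1}^{-1}({\bf E}_{j-1}K_j-{\bf F}_{j-1})\succeq0$. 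To upgrade ``closed disk'' to strict Schur stability I would argue by contradiction: if $A_{K_j,L_j}v=\mu v$ with $|\mu|=1$, sandwiching the part-(4) identity between $v^*$ and $v$ forces $v^*Mv=0$, hence $Mv=0$. Taking the stepsize \emph{strictly} below $1/\lambda_n({\bf O}_{j-1})$ then gives ${\bf V}_{K_{j-1},L_{j-1}}v=0$ (so $L_jv=L_{j-1}v$) and, using ${\bf E}_{j-1}K_j-{\bf F}_{j-1}=B_1^\top(\Lambda-\Delta)A_{K_j,L_j}$, also $B_1^\top(\Lambda-\Delta)v=0$. Setting $u=(\Lambda-\Delta)v$ and reusing the identity gives $A_{K_j,L_j}^\top u=\bar\mu u$; peeling off the feedback term (and $B_1^\top u=0$) yields $(A-B_2L_j)^\top u=\bar\mu u$, a PBH violation of part (1) unless $u=0$, i.e.\ $\Lambda v=\Delta v$. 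A short computation then gives $K_jv=K_{j-1}v=\mu R_1^{-1}B_1^\top\Delta v$, whence $A_{K_{j-1},L_{j-1}}v=\mu v$ with $|\mu|=1$, contradicting the Schurness of $A_{K_{j-1},L_{j-1}}$. Strict stability says $(K_j,L_j)$ is stabilizing, $K_j$ is the unique stabilizing minimizer, and $L_j\in\dom(g)$, closing the induction for parts (1) and (3).

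For part (2) I would combine the monotonicity $\Lambda\succeq\Delta$ (the part-(4) identity with $A_{K_j,L_j}$ now Schur and $M\succeq0$) with the upper bound $\Lambda\preceq X_*$. The latter follows from \cref{lemma:ng_comparison_2} at the Nash point: since ${\bf V}_{K_*,L_*}=0$, one gets $\ca R_{L_j}(X_*)=-(L_j-L_*)^\top{\bf O}_*(L_j-L_*)\preceq0$, so $X_*$ is a supersolution of the DARE at $L_j$ (with $R_1+B_1^\top X_*B_1\succ0$ by $(a1)$) and therefore dominates the maximal solution $\Lambda$ by standard DARE comparison. Now the map $X\mapsto{\bf O}_X=R_2-B_2^\top\big(X-XB_1(R_1+B_1^\top XB_1)^{-1}B_1^\top X\big)B_2$ is operator-monotone \emph{decreasing} on $\{X:R_1+B_1^\top XB_1\succ0\}$, so $\Lambda\preceq X_*$ together with ${\bf O}_*\succ0$ forces ${\bf O}_j={\bf O}_\Lambda\succeq{\bf O}_{X_*}={\bf O}_*\succ0$, which is part (2); the identical supersolution-plus-monotonicity argument at $L_0$ supplies ${\bf O}_0\succ0$ and closes the base case.

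The main obstacle is part (1): stabilizability of $(A-B_2L_j,B_1)$, equivalently $L_j\in\ca S_{\pi_2}$. Everything above is ``turn the crank'' on \cref{lemma:ng_comparison_2} and \cref{thrm:dare_existence}, and even the strict-stability argument is \emph{conditional} on part (1). The difficulty is structural: because $Q-L_j^\top R_2L_j$ is indefinite, the certificate $\ca R_{L_j}(\Delta)\succeq0$ only furnishes a Lyapunov inequality with the \emph{indefinite} middle matrix $\Delta$, so the usual inertia route to Schurness is unavailable, and stabilizability of $(A-B_2L_j,B_1)$ is genuinely not invariant in $L_j$. My plan is to exploit that $\Delta$ is the \emph{stabilizing} solution at $L_{j-1}$ and that the step is size-limited. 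Tracking the family $L(t)=L_{j-1}+t(L_j-L_{j-1})=L_{j-1}+2(t\eta_{j-1}){\bf V}_{K_{j-1},L_{j-1}}$, the same computation gives $\ca R_{L(t)}(\Delta)\succeq0$ for all $t\in[0,1]$, since $t\eta_{j-1}\le1/\lambda_n({\bf O}_{j-1})$; I expect the cleanest route is then a continuation argument showing that stabilizability, an open condition along this path on which the certificate (hence the maximal-solution construction) never degenerates, cannot be lost at $t=1$. The alternative is a direct PBH argument showing that an uncontrollable unstable mode of $(A-B_2L_j,B_1)$, combined with ${\bf V}_{K_{j-1},L_{j-1}}$ and the stepsize bound, contradicts the $(j-1)$ stabilizability; this is the step I expect to require the most care.
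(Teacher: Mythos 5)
Your proposal reproduces the paper's architecture for everything it actually carries out: the induction, the certificate $\ca R_{L_j}(\Delta) = {\bf V}_{K_{j-1},L_{j-1}}^{\top}(4\eta_{j-1}I - 4\eta_{j-1}^2{\bf O}_{j-1}){\bf V}_{K_{j-1},L_{j-1}} \succeq 0$ obtained from the second identity of Lemma~\ref{lemma:ng_comparison_2} (including the correct observation that this identity is usable before knowing $L_j \in \dom(g)$), the appeal to Theorem~\ref{thrm:dare_existence}, the derivation of item (d) from the first identity, and the marginal-eigenvector contradiction upgrading ``closed unit disk'' to Schur. (Your PBH variant of that upgrade, via $B_1^{\top}(\Lambda-\Delta)v = 0$ and stabilizability of $(A-B_2L_j,B_1)$, differs in detail from the paper's, which instead applies Lemma~\ref{lemma:ng_comparison} to the pairs $(K^+,L_j)$ and $(K_{j-1},L_{j-1})$ to get $K^+v = K_{j-1}v$; both are conditional on item (a), so this is cosmetic.) The genuine gap is item (a) itself, which you explicitly leave as a plan and flag as ``the step I expect to require the most care.'' This is not a detail to be deferred: it is the central content of the lemma, and the paper devotes the entire second half of its proof to it, precisely because the reasoning is otherwise circular (every tool you invoke, including Theorem~\ref{thrm:dare_existence}, presupposes stabilizability of $(A-B_2L_j,B_1)$). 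Openness of stabilizability (Proposition~\ref{prop:stabilizability_open}) alone cannot close this: an open subset of $[0,1]$ containing $0$ need not be all of $[0,1]$, so stabilizability could a priori be lost at some first failure time $\sigma < 1$ along the ray. The missing mechanism, which is what the paper supplies, is a limit-point construction at $\sigma$: take $t_l \uparrow \sigma$, use the bound $\Delta \preceq Z_{t_l} \preceq X_*$ on the stabilizing DARE solutions to extract a limit point $Z$ by Bolzano--Weierstrass, pass to the limit to see that $Z$ solves the DARE at $L_\sigma$ with associated gain $K'$ at worst marginally stabilizing (continuity of eigenvalues), and then rerun the marginal-eigenvector contradiction to conclude that $A - B_2L_\sigma - B_1K'$ is in fact Schur, so $(A-B_2L_\sigma,B_1)$ is stabilizable after all --- contradicting the definition of $\sigma$. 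Without this, your induction does not close.

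A secondary flaw: your item (b) rests on the claim that since $\ca R_{L_j}(X_*) \preceq 0$, the matrix $X_*$ ``dominates the maximal solution $\Lambda$ by standard DARE comparison.'' That is not a standard fact, and it is false in general: the comparison theory (Lancaster--Rodman) says the maximal solution dominates all \emph{subsolutions} $\ca R(X) \succeq 0$; supersolutions need not dominate it. Already in the scalar case, $\ca R$ is a downward parabola (divided by $r+b^2x$) which is negative \emph{below} its smaller root as well as above its larger one, so there are supersolutions strictly below the maximal solution. The bound $\Lambda \preceq X_*$ therefore needs a different argument --- and it is not optional, since the same sandwich $\Delta \preceq Z_{t_l} \preceq X_*$ is exactly what gives boundedness in the limit-point construction above, so the gap in item (a) and the gap in item (b) both route through it. (The paper asserts this bound with little elaboration as well, but it does not hang it on a false general comparison principle.) Your monotonicity argument for $X \mapsto {\bf O}_X$ on $\{X : R_1 + B_1^{\top}XB_1 \succ 0\}$ is fine once $\Lambda \preceq X_*$ is in hand.
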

              \begin{proof}
                It suffices to prove the lemma by induction since all items holds at $j=0$ (by initialization of the algorithm). We shall first suppose that $(A-B_2 L_j, B_1)$ is stabilizable, i.e., $(a)$ holds. Note that this property is not automatically guaranteed and we subsequently provide an analysis to carefully remove this assumption.\footnote{A side remark on our proof strategy: in linear system theory, a number of synthesis results are developed under the assumption of stabilizability of the system. We will utilize these observations here; however, to use those tools, we must assume that $(A-B_2L_j, B_1)$ is stabilizable. But this is also one of our goals to show. The reader might recognize certain circular line of reasoning here. Indeed, one of our contributions is pseudo trick devised to circumvent this issue: we first assume stabilizability and then use the results developed to arrive at a contradiction if the system had not been stabilizable.}\\
                  First note that by our assumption, $\Delta$ is the maximal stabilizing solution of the DARE,\footnote{This can be considered as an LQR problem for system $(A-B_2 L_{j-1}, B_1)$ with state cost matrix $Q-L_{j-1}^{\top}R_2 L_{j-1}$.}
                  \begin{align*}
                    0 &=(A-B_2 L_{j-1})^{\top} X (A-B_2 L_{j-1}) - X\\
                    &\quad + Q - L_{j-1}^{\top} R_2 L_{j-1} - (A-B_2 L_{j-1})^{\top} X B_1 (R_1 + B_1^{\top} X B_1)^{-1} B_1^{\top} X (A-B_2 L_{j-1}),
                    \end{align*}
                    and $K_{j-1} = (R_1 + B_1^{\top} \Delta B_1)^{-1} B_1^{\top} \Delta (A-B_2 L_{j-1})$.
                Now adopt the update rule,
                \begin{align*}
                  L_j = L_{j-1} - \eta_{j-1} N_g(L_{j-1}) = L_{j-1} - 2\eta_{j-1} {\bf V}_{K_{j-1}, L_{j-1}}.
                  \end{align*}
                  by Lemma~\ref{lemma:ng_comparison_2} it follows
                  \begin{align}
                    \label{eq:inequality_existence}
                    \begin{split}
                      \ca R_{A-B_2L_j, B_1, Q-L_j^{\top}R_2 L_j, R_1}(X_{j-1}) = {\bf V}_{K_{j-1}, L_{j-1}}^{\top} \left( 4 \eta_{j-1} I - 4 \eta_{j-1}^2 {\bf O}_{j-1} ) \right) {\bf V}_{K_{j-1}, L_{j-1}}.
                      \end{split}
                    \end{align}
                    Thereby with the stepsize
                    $\eta_{j-1} \le \frac{1}{\lambda_n({\bf O})}$, we have
                    \begin{align}
                        \label{eq:existence_ng}
                      \ca R_{A-B_2L_j, B_1, Q-L_j^{\top}R_2 L_j, R_1}(X_{j-1}) = {\bf V}_{K_{j-1}, L_{j-1}} \left( 4 \eta_{j-1} I - 4 \eta_{j-1}^2 {\bf O}_{j-1} \right) {\bf V}_{K_{j-1}, L_{j-1}} \succeq 0.
                      \end{align}
                      By Theorem~\ref{thrm:dare_existence} (note that the inequality~\eqref{eq:existence_ng} is crucial for applying the theorem), there exists a maximal solution $X^+ \succeq \Delta$ to the DARE~\eqref{eq:dare_iterate_j} and moreover, with
                      $$K^+ = (R_1 +B_1^{\top} X^+ B_1 )^{-1} B_1^{\top} X^+(A-B_2 L_j),$$ the eigenvalues of $A-B_2 L_j - B_1 K^+$ are in the closed unit disk of $\bb C$. Equivalently, $X^+$ solves the following Lyapunov equation,
                        \begin{align*}
                          (A-B_1K^+ -B_2 L_j)^{\top} X^+ (A-B_1 K^+ - B_2 L_j) + Q + (K^{+})^{\top} R_1 K^{+} - L_j^{\top} R_2 L_j = X^+.
                          \end{align*}
                           Item $(d)$ thereby follows from the first part of Lemma~\ref{lemma:ng_comparison_2}. We now observe that $K^+$ is indeed stabilizing. Suppose that this is not the case; then there exists $v \in \bb C^n$ such that $(A-B_2 L_j - B_1 K^+) v = \lambda v$ with $|\lambda| = 1$. Hence, 
       \begin{align*}  
         &v^{\top} \left( A_{K^{+}, L_j}^{\top}(X^{+} - \Delta) A_{K^+, L_j} \right)v + v^{\top} \left({\bf V}_{K_{j-1}, L_{j-1}}^{\top} \left( 4 \eta_{j-1} I - 4 \eta_j^2 {\bf O}_{j-1} \right) {\bf V}_{K_{j-1}, L_{j-1}}  \right) v \\
         &\le  v^{\top} (X^+-\Delta) v.
         \end{align*}
         This would imply that ${\bf V}_{K_{j-1}, L_{j-1}} v = 0$. But this means that,
         \begin{align}
           \label{eq:ng_eq1}
           L_j v = L_{j-1} v.
           \end{align}
                By Lemma~\ref{lemma:ng_comparison}, we have
     \begin{align*}
       A_{K^+, L_j}^{\top}(X^+ - \Delta)  A_{K^+, L_j} - (X^+ - \Delta) + {\bf V}_{K_{j-1}, L_{j-1}}^{\top} \left( 4 \eta_{j-1} I - 4 \eta_j^2 R_2 \right) {\bf V}_{K_{j-1}, L_{j-1}} \\
       + (K^+-K_{j-1})^{\top} R_1 (K^+ - K_{j-1}) + (A_{K^+, L_j}-A_{K_{j-1}, L_{j-1}})^{\top}\Delta (A_{K^+, L_j}-A_{K_{j-1}, L_{j-1}}) = 0.
       \end{align*}
       Multiplying $v^{\top}$ and $v$ on each side and combining the resulting expression with~\eqref{eq:ng_eq1}, we obtain,
       \begin{align*}
         K_{j-1}v = K^+ v.
         \end{align*}
         But now we have
         \begin{align*}
           (A-B_1 K_{j-1} -B_2 L_{j-1})v = Av - B_1 K^{+} v - B_2 L_j v = \lambda v.
           \end{align*}
           This is a contradiction to the Schur stability of $(K_{j-1}, L_{j-1})$. For item $(b)$, note that $X_j \preceq X_*$, so $R_2 - B_2^{\top} X_j B_2 \succ 0$ and consequently ${\bf O}_j \succ 0$ as $R_1 + B_1^{\top}X_j B_1 \succeq R_1 + B_1^{\top} X_{j-1}B_1 \succ 0$. Hence, we have completed the proof for items $(b), (c), (d)$ under the assumption of item $(a)$.\par
%%%%%
We now argue that with the stepsize $\eta_{j-1} \le 1/\lambda_n({\bf O}_{j-1})$, this assumption of stabilizability is indeed valid; namely, item $(a)$ holds.
Consider the ray $\{L_t: L_t = L_{j-1} + t 2{\bf V}_{K_{j-1}, L_{j-1}}\}$. We first note that there exists a maximal half-open interval $[0, \sigma)$ such that $(A-B_2L_t, B_1)$ is stabilizable for every $t \in [0, \sigma)$ and $(A-B_2 L_{\sigma}, B_1)$ is not stabilizable (this is due to the fact that stabilizability is an open condition; see Proposition~\ref{prop:stabilizability_open} for a proof). Now suppose that $\sigma < \frac{1}{\lambda_n({\bf O}_{j-1})}$. We may take a sequence $t_l \uparrow \sigma$, and note that $(A-B_2 L_{t_l}, B_1)$ is stabilizable for every $t_l$. Let us denote the corresponding sequence of solutions to the DARE by $\{Z_{t_l}\}$. By our previous arguments, $\Delta \preceq Z_{t_l} \preceq X_*$, where $X_*$ is the corresponding value matrix at Nash equilibrium point $(K_*, L_*)$. Denote by $\ca L$ as the set of all limit points of the sequence $\{Z_{t_l}\}_{l=1}^{\infty}$. By Weirestrass-Balzano, the set $\ca L$ is nonempty as the sequence is bounded. Clearly, for every $Z \in \ca L$, $\Delta \preceq Z \preceq X_*$. By continuity, $Z$ must solve the DARE,
\begin{align}
  \label{eq:dare_1}
  (A-B_2L_{\sigma})^{\top} X (A-B_2 L_{\sigma}) + Q - L_{\sigma}^{\top} R_2 L_{\sigma} - (A-B_2 L_{\sigma})^{\top}X B_1 (R_1+B_1^{\top}XB_1)^{-1}B_1^{\top} X(A-B_2L_{\sigma}) = X.
  \end{align}
  Putting $K' = (R_1+B_1^{\top}ZB_1)^{-1} B_1^{\top} Z(A-B_2 L_j)$,
  we claim that $A-B_2 L_{\sigma} - B_1 K'$ is Schur stable. This is a consequence of $(d)$ and the Comparison Lemma~\ref{lemma:ng_comparison}: it suffices to observe that $A_{K', L_{\sigma}}$ is marginally stable satisfying $(d)$ and,
\begin{align*}
       A_{K', L_\sigma}^{\top}(Z - \Delta) A_{K', L_{\sigma}} - (Z-\Delta) + {\bf V}_{K_{j-1}, L_{j-1}}^{\top} \left( 4 \sigma I - 4 \sigma^2 {\bf O}_{j-1}\right) {\bf V}_{K_{j-1}, L_{j-1}} \preceq 0.
  \end{align*}
 Proceeding similar to the above line of reasoning, we can show that $A-B_2 L_\sigma - B_1 K'$ is Schur stable. But this contradicts our standing assumption that $(A-B_2 L_{\sigma}, B_1)$ is not stabilizable. Hence, for all $\eta_{j-1} \le 1/\lambda_n({\bf O}_{j-1})$, the pair $(A-B_2 L_j, B_1)$ is indeed stabilizable. 
                \end{proof}
                We are now ready to state the convergence rate for the algorithm.
                \begin{theorem}
                  If the stepsize is taken as $\eta_{j} = 1/(2\lambda_n({\bf O_{j-1}}))$, then,
                  \begin{align*}
                    \sum_{j=0}^{\infty} \|N_g(L_j)\|_F^2 \le \frac{1}{\eta} \left( g(L_*) - g(L_0)\right),
                    \end{align*}
                    where $\eta \in \bb R_+$ is some positive constant.
                  \end{theorem}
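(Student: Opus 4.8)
The plan is to convert the one-step identity in item $(d)$ of Lemma~\ref{lemma:ng_key_lemma} into a monotone ascent inequality $g(L_j)-g(L_{j-1}) \ge \eta\,\|N_g(L_{j-1})\|_F^2$ with a \emph{uniform} constant $\eta>0$, and then telescope. First I would record the structural facts supplied by Lemma~\ref{lemma:ng_key_lemma}: with the prescribed stepsize each pair $(K_j,L_j)$ is stabilizing, so $A_{K_j,L_j}$ is Schur, $\mathbf{O}_{j-1}\succ 0$, and $\mathbf{E}_{j-1}=R_1+B_1^\top X_{K_{j-1},L_{j-1}}B_1\succ 0$. Writing $W_j\coloneqq \Lambda-\Delta = X_{K_j,L_j}-X_{K_{j-1},L_{j-1}}$ and letting $M_j$ denote the right-hand ``source'' term, item $(d)$ reads $W_j - A_{K_j,L_j}^\top W_j A_{K_j,L_j} = M_j$, a discrete Lyapunov equation with Schur coefficient; since $\mathbf{E}_{j-1}^{-1}\succ 0$, the summand $(\mathbf{E}_{j-1}K_j-\mathbf{F}_{j-1})^\top\mathbf{E}_{j-1}^{-1}(\mathbf{E}_{j-1}K_j-\mathbf{F}_{j-1})$ is positive semidefinite.

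Next I would extract the value increment. Since $g(L)=f(K_L,L)=\Tr(X_{K_L,L}\mathbf{\Sigma})$, we have $g(L_j)-g(L_{j-1})=\Tr(W_j\mathbf{\Sigma})$. Introducing the dual Gramian $Y_j=Y_{K_j,L_j}\succ 0$ solving~\eqref{eq:lyapunov_matrix_Y}, the standard trace duality for paired Lyapunov equations gives $\Tr(W_j\mathbf{\Sigma})=\Tr(M_jY_j)$; I would verify this by substituting $\mathbf{\Sigma}=Y_j-A_{K_j,L_j}Y_jA_{K_j,L_j}^\top$ and regrouping the trace. Because $M_j\succeq 0$, this also shows $W_j\succeq 0$, so the value matrices increase monotonically; combined with $X_{K_j,L_j}\preceq X_*$ (established along the proof of Lemma~\ref{lemma:ng_key_lemma}) this makes $g(L_j)$ nondecreasing and bounded above by $g(L_*)=\Tr(X_*\mathbf{\Sigma})$.

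The quantitative crux is the lower bound on $M_j$. The step $L_{j-1}\mapsto L_j$ uses stepsize $\eta_{j-1}=1/(2\lambda_n(\mathbf{O}_{j-1}))$, for which $\eta_{j-1}\mathbf{O}_{j-1}\preceq\tfrac12 I$, whence, using the self-consistent source term from~\eqref{eq:inequality_existence}, $4\eta_{j-1}I-4\eta_{j-1}^2\mathbf{O}_{j-1}=4\eta_{j-1}(I-\eta_{j-1}\mathbf{O}_{j-1})\succeq 2\eta_{j-1}I$. Dropping the p.s.d. $\mathbf{E}$-term gives $M_j\succeq 2\eta_{j-1}\,\mathbf{V}_{K_{j-1},L_{j-1}}^\top\mathbf{V}_{K_{j-1},L_{j-1}}$. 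Using the series representation $Y_j=\sum_{i\ge 0}A_{K_j,L_j}^{\,i}\mathbf{\Sigma}(A_{K_j,L_j}^\top)^i\succeq\mathbf{\Sigma}$ so that $Y_j\succeq\lambda_{\min}(\mathbf{\Sigma})I$, together with $\|N_g(L_{j-1})\|_F=2\|\mathbf{V}_{K_{j-1},L_{j-1}}\|_F$, I obtain $g(L_j)-g(L_{j-1})=\Tr(M_jY_j)\ge \tfrac12\eta_{j-1}\lambda_{\min}(\mathbf{\Sigma})\,\|N_g(L_{j-1})\|_F^2$.

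Finally I would make the constant uniform and telescope. Monotonicity together with $X_{K_{j-1},L_{j-1}}\preceq X_*$ confines the value matrices to a norm-bounded set on which $R_1+B_1^\top X B_1\succeq R_1\succ 0$ is uniformly invertible; hence $\mathbf{O}_{j-1}$ is uniformly bounded, $\lambda_n(\mathbf{O}_{j-1})\le C$, and $\eta_{j-1}\ge 1/(2C)\eqqcolon\eta_{\min}>0$. Setting $\eta\coloneqq\tfrac12\eta_{\min}\lambda_{\min}(\mathbf{\Sigma})>0$ yields $g(L_j)-g(L_{j-1})\ge\eta\,\|N_g(L_{j-1})\|_F^2$ for all $j$; summing from $1$ to $N$, telescoping, and using $g(L_N)\le g(L_*)$ gives $\eta\sum_{j=0}^{N-1}\|N_g(L_j)\|_F^2\le g(L_*)-g(L_0)$, and letting $N\to\infty$ completes the proof. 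I expect the main obstacle to be precisely this uniform control of $\eta$: one must argue the monotone boundedness $X_{K_{j-1},L_{j-1}}\preceq X_*$ of the iterates and convert it into both a uniform upper bound on $\lambda_n(\mathbf{O}_{j-1})$ and a uniform lower bound $\lambda_{\min}(Y_j)\ge\lambda_{\min}(\mathbf{\Sigma})$, since without these the per-step constant could degenerate and the summability would fail.
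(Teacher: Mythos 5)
Your proposal is correct and follows essentially the same route as the paper: it invokes item (d) of Lemma~\ref{lemma:ng_key_lemma} with the stepsize $\eta_{j-1}=1/(2\lambda_n(\mathbf{O}_{j-1}))$ to get a per-step ascent inequality $g(L_j)-g(L_{j-1})\ge \eta\,\|N_g(L_{j-1})\|_F^2$ via the trace pairing with $Y_{K_j,L_j}$, and then telescopes using $g(L_j)\le g(L_*)$. In fact your write-up is more careful than the paper's on two points it glosses over — the verification $\Tr(W_j\mathbf{\Sigma})=\Tr(M_jY_j)$ with $Y_j\succeq\lambda_{\min}(\mathbf{\Sigma})I$, and the uniform positivity of the per-step constant, which requires an \emph{upper} bound on $\lambda_n(\mathbf{O}_j)$ (obtained, as you do, from the monotone boundedness $X_0\preceq X_j\preceq X_*$) rather than the lower bound $\mathbf{O}_j\succeq R_2-B_2^{\top}X_*B_2$ the paper cites.
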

                  \begin{remark}
                    This theorem suggests the gradient will vanish at a sublinear rate. As we know, there is a unique stationary point of $g$; this means the sublinear convergence to global maximum, i.e., Nash equilibrium point.
                    \end{remark}
                  \begin{proof}
                    Let $\eta = \inf_{j} 1/\lambda_n({\bf O}_j)$ and note that ${\bf O}_j \succeq R_2 - B_2^{\top} X_* B_2$, so $\eta > 0$.
                    It suffices to note that by Lemma~\ref{lemma:ng_key_lemma}, we have
                    \begin{align*}
                      g(L_j) - g(L_{j-1}) &= \Tr((X_{K_{j}, L_j} - X_{K_{j-1}, L_{j-1}}) {\bf \Sigma}) \\
                                          &\ge \frac{1}{\lambda_n({\bf O}_{j-1})}\Tr\left[ Y_{K_j, L_j}\left(  N_g(L_{j-1})^{\top}  N_g(L_{j-1})\right)\right] \\
                                          &\ge \eta  \|N_g(L_{j-1})\|_F^2.
                     \end{align*}
                     Telescoping the sum and noting that $g(L)$ is bounded above by $g(L_*)$, we have
                     \begin{align*}
                    \sum_{j=0}^{\infty} \|N_g(L_j)\|_F^2 \le \frac{1}{\eta} \left( g(L_*) - g(L_0)\right) < \infty.
                       \end{align*}
                    \end{proof}
                    We observe that the convergence rate is asymptotically linear. This is a consequence of the local curvature of $g(L)$. Indeed, if we compute the Hessian at $g(L_*)$, the action of the Hessian (see Appendix~\ref{sec:hessian} for details) is given by
                    \begin{align*}
                        \nabla^2 g(L_*)[E, E] = -2\langle {\bf O}_{X_*} E, EY_*\rangle.
                      \end{align*}
                    As $\nabla^2 g(L_*)$ is negative definite, $-g(L)$ is locally strongly convex around a convex neighborhood of $L_*$. It thus follows that gradient descent enjoys a linear convergence rate around $L_*$. 
                    \section{Algorithm: quasi-Newton Iterations of $L$}
                    \label{sec:qn}
                    In this section, we shall assume that the oracle $\ca O_L$ returns the quasi-Newton direction. %whereas the oracle $\ca O_K$ is constructed out of an oracle based on quasi-Newton iterations.
                    The motivation of quasi-Newton is to investigate the second-order local approximation of $g(L)$. Indeed, we may observe that,
                      \begin{align*}
                            g(L+ \Delta L) \approx g(L) + 2\langle Y_{L + \Delta L} N_g(L), \Delta L\rangle - \langle {\bf O}_L \Delta L, \Delta L \rangle.
                        \end{align*}
          \begin{algorithm}[H]
            \caption{quasi-Newton Policy for LQ Game}
            \label{alg2}
            \begin{algorithmic}[1]
              \State Initialize $(K_0, L_0) \in \ca S$ such that $(Q-L_0^{\top}R_2 L_0, A-B_2 L_0)$ is detectable and the DARE
              \begin{align*}
                (A-B_2L_0)^{\top} X (A-B_2 L_0) - X + Q - L_0^{\top}R_2 L_0 + (A-B_2L_0)^{\top}X B_1(R_1 + B_1^{\top}X B_1)B_1^{\top}X (A-B_2L_0)
                \end{align*}
                is solvable in $\ca S_{L_0} \equiv \{K \in \bb R^{m_1 \times n}: A-B_2L_0 - B_1 K \text{ is Schur}\}$.
              \If{ $j \ge 1$}
              \State Set: $K_j \leftarrow \argmin_K f(K, L_{j-1})$.
                  \State Set: $L_{j} = L_{j-1} + \eta_{j-1} {\bf O}_{j-1}^{-1}2(-R_2L_{j-1}- B_2^{\top} X_{K_{j-1}, L_{j-1}} A_{K_{j-1}, L_{j-1}})$.
                \EndIf
              \end{algorithmic}
            \end{algorithm}
            %\begin{remark}
              %The analysis of this algorithm is quite similar to the analysis of sequential natural gradient policies.
            %\end{remark}
            \subsection{Convergence Analysis}
            We first prove a result that can be considered as a counterpart to Lemma~\ref{lemma:ng_key_lemma}.
                        \begin{lemma}
              \label{lemma:qn_key_lemma}
              Suppose that Algorithm~\ref{alg2} is initialized appropriately.
              % Furthermore at iteration $j-1$, $K_{j-1}$ is the unique stabilizing minimizer of $f(K, L_{j-1})$, i.e., $K_{j-1} = \argmin_{K} f(K, L_{j-1})$.
              With stepsize $\eta_{j-1} \le \frac{1}{\lambda_n({\bf O}_{j-1})}$, we then have,
              \begin{enumerate}
                \item $(A-B_2 L_j, B_1)$ is stabilizable for every $j \ge 1$.
                  \item ${\bf O}_j \succ 0$ for every $j \ge 1$.
                  \item For every $j \ge 1$, $f(K, L_j)$ is bounded below over $K$ and there exists a unique minimizer $K_j$, which forms a stabilizing pair $(K_j, L_j)$. Namely, the DARE
                    \begin{align*}
                      (A-B_2L_j)^{\top} X (A-B_2L_j)  + Q - L_j^{\top}R_2 L_j - (A-B_2L_j)^{\top}X B_1(R_1 + B_1^{\top}X B_1)^{-1} B_1^{\top} X (A-B_2L_j) = X,
                      \end{align*}
                      admits a stabilizing maximal solution $X^+$ satisfying $R + B_1^{\top} X^{+} B_1 \succ 0$.
                \item Putting $\Lambda = X_{K_j, L_j}$, ${\bf E}_j = R_1+B_1^{\top} X_{K_j, L_j} B_1$ and ${\bf F}_j = B_1^{\top} X_{K_j, L_j}(A-B_2 L_j)$ we have
     \begin{align*}
       \Lambda - \Delta &=
       A_{K_j, L_j}^{\top}(\Lambda - \Delta) A_{K_j, L_j} + {\bf V}_{K_{j-1}, L_{j-1}}^{\top} \left( 4 \eta_{j-1} {\bf O}_{j-1}^{-1} - 4 \eta_j^2 {\bf O}_{j-1}^{-1} \right) {\bf V}_{K_{j-1}, L_{j-1}} \\
& \quad + ({\bf E}_{j-1} K_j - {\bf F}_{j-1} )^{\top}{\bf E}_{j-1}^{-1} ({\bf E}_{j-1} K_j - {\bf F}_{j-1}).
       \end{align*}
              %Suppose $K_{j-1}$ is the unique stabilizing minimizer of $f(K, L_{j-1})$, i.e., $$K_{j-1} = \argmin_{K} f(K, L_{j-1}).$$ Adopting the stepsize $\eta_{j-1} \le 1$, it then follows that,
              %\begin{enumerate}
                %\item $(A-B_2 L_j, B_1)$ is stabilizable.
                %\item $f(K, L_j)$ is bounded below over $K$ and there exists a unique minizer $K_j$, which forms a stabilizing pair $(K_j, L_j)$.
                %\item Putting $\Lambda = X_{K_j, L_j}$ and $\Delta = X_{K_{j-1}, L_{j-1}}$, we have
     %\begin{align*}
       %\Lambda - \Delta &=A_{K_j, L_j}^{\top}(\Lambda - \Delta) A_{K_j, L_j} + {\bf V}_{K_{j-1}, L_{j-1}}^{\top} {\bf O}_{j-1}^{-1} \left( 4 \eta_{j-1} I - 4 \eta_j^2\right) {\bf O}_{j-1}^{-1} {\bf V}_{K_{j-1}, L_{j-1}} \\
%& \quad + ((R_1 +B_1^{\top} \Delta B_1) K_j - B_1^{\top}\Delta (A-B_2 L_j) )^{-1}(R_1+B_1^{\top} \Delta B_1)^{-1} ((R_1+B_1^{\top} \Delta B_1) K_j - B_1^{\top}\Delta (A-B_2 L_j)).
       %\end{align*}
\end{enumerate}
            \end{lemma}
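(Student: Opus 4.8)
The plan is to prove Lemma~\ref{lemma:qn_key_lemma} by induction on $j$, mirroring the structure of the proof of Lemma~\ref{lemma:ng_key_lemma} but replacing the natural gradient step with the quasi-Newton step $L_j = L_{j-1} + \eta_{j-1} {\bf O}_{j-1}^{-1} 2{\bf V}_{K_{j-1}, L_{j-1}}$. The base case $j=0$ holds by the initialization of Algorithm~\ref{alg2}. For the inductive step, I would first \emph{assume} that item $(a)$ holds, i.e.\ that $(A-B_2 L_j, B_1)$ is stabilizable, derive items $(b)$, $(c)$, $(d)$ under this assumption, and then remove the assumption via the same continuity/ray argument used at the end of the previous lemma.

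The key technical computation is to feed the quasi-Newton update into Comparison Lemma~\ref{lemma:ng_comparison_2}. Under the inductive hypothesis, $\Delta = X_{K_{j-1},L_{j-1}}$ is the maximal stabilizing solution of the DARE at stage $j-1$, with $K_{j-1} = (R_1 + B_1^{\top}\Delta B_1)^{-1} B_1^{\top}\Delta(A - B_2 L_{j-1})$. Plugging $L_j - L_{j-1} = 2\eta_{j-1}{\bf O}_{j-1}^{-1}{\bf V}_{K_{j-1},L_{j-1}}$ into the expression for $\ca R_{A-B_2 L_j, B_1, Q-L_j^{\top}R_2 L_j, R_1}(\Delta)$ from Lemma~\ref{lemma:ng_comparison_2}, the two linear-in-$(L_j - L_{j-1})$ terms contribute $4\eta_{j-1}{\bf V}^{\top}{\bf O}_{j-1}^{-1}{\bf V}$ and the quadratic term contributes $-4\eta_{j-1}^2 {\bf V}^{\top}{\bf O}_{j-1}^{-1}{\bf O}_{j-1}{\bf O}_{j-1}^{-1}{\bf V} = -4\eta_{j-1}^2 {\bf V}^{\top}{\bf O}_{j-1}^{-1}{\bf V}$, so that
\begin{align*}
  \ca R_{A-B_2 L_j, B_1, Q-L_j^{\top}R_2 L_j, R_1}(\Delta) = {\bf V}_{K_{j-1}, L_{j-1}}^{\top}\left(4\eta_{j-1}{\bf O}_{j-1}^{-1} - 4\eta_{j-1}^2 {\bf O}_{j-1}^{-1}\right){\bf V}_{K_{j-1}, L_{j-1}}.
\end{align*}
Since ${\bf O}_{j-1} \succ 0$ and $\eta_{j-1} \le 1/\lambda_n({\bf O}_{j-1})$, the factor $4\eta_{j-1}{\bf O}_{j-1}^{-1} - 4\eta_{j-1}^2 {\bf O}_{j-1}^{-1} = 4\eta_{j-1}(1 - \eta_{j-1}){\bf O}_{j-1}^{-1}\succeq 0$ once $\eta_{j-1} \le 1$, giving $\ca R(\Delta)\succeq 0$. (I would check the stepsize bound is consistent here, possibly tightening the admissible range.) Applying Theorem~\ref{thrm:dare_existence} with this positive-semidefinite Riccati inequality yields a maximal solution $X^+ \succeq \Delta$ with $R_1 + B_1^{\top}X^+ B_1 \succ 0$ and eigenvalues of $A - B_2 L_j - B_1 K^+$ in the closed unit disk, establishing item $(c)$; item $(d)$ then follows from the first identity of Lemma~\ref{lemma:ng_comparison_2}. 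Item $(b)$ follows as before from $X_j \preceq X_*$, which forces $R_2 - B_2^{\top}X_j B_2\succ 0$ and hence ${\bf O}_j\succ 0$.

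The main obstacle, as in the natural-gradient case, is establishing that $K^+$ is genuinely stabilizing (not merely marginally stabilizing) and then discharging the standing assumption in item $(a)$. For the former I would run the same eigenvector contradiction argument: if $A_{K^+, L_j}v = \lambda v$ with $|\lambda|=1$, then item $(d)$ forces ${\bf V}_{K_{j-1}, L_{j-1}}v = 0$, hence (via the ${\bf O}_{j-1}^{-1}$-weighting) $L_j v = L_{j-1}v$, and then Lemma~\ref{lemma:ng_comparison} forces $K^+ v = K_{j-1}v$, contradicting Schur stability of $(K_{j-1}, L_{j-1})$; the only adjustment needed is to carry the ${\bf O}_{j-1}^{-1}$ weight through the quadratic form, which does not affect the kernel since ${\bf O}_{j-1}^{-1}\succ 0$. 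For item $(a)$ itself I would reuse verbatim the ray argument $L_t = L_{j-1} + 2t{\bf O}_{j-1}^{-1}{\bf V}_{K_{j-1}, L_{j-1}}$, using openness of stabilizability (Proposition~\ref{prop:stabilizability_open}), the uniform sandwich bound $\Delta \preceq Z_{t_l}\preceq X_*$, Bolzano--Weierstrass to extract a limit point solving the boundary DARE, and the comparison-lemma computation to conclude that the limiting feedback is in fact Schur, contradicting non-stabilizability at $t=\sigma$. The place demanding genuine care is verifying that the ${\bf O}_{j-1}^{-1}$ preconditioning leaves all the sign and kernel structure intact throughout these arguments, since it modifies both the update direction and the quadratic forms appearing in the Riccati comparison.
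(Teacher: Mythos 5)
Your proposal is correct and follows essentially the same route as the paper: the paper's own proof of Lemma~\ref{lemma:qn_key_lemma} consists precisely of computing the new form of the Riccati map under the quasi-Newton step, namely $\ca R_{A-B_2L_j, B_1, Q-L_j^{\top}R_2 L_j, R_1}(\Delta) = \left(4\eta_{j-1} - 4\eta_{j-1}^2\right){\bf V}_{K_{j-1},L_{j-1}}^{\top}{\bf O}_{j-1}^{-1}{\bf V}_{K_{j-1},L_{j-1}}$, and then invoking the arguments of Lemma~\ref{lemma:ng_key_lemma} verbatim, exactly as you lay out. Your parenthetical concern about the stepsize is a legitimate catch the paper glosses over: with the ${\bf O}_{j-1}^{-1}$ preconditioning, nonnegativity of the Riccati map requires $\eta_{j-1}\le 1$ rather than $\eta_{j-1}\le 1/\lambda_n({\bf O}_{j-1})$ (the two are incomparable in general), though the subsequent convergence theorem's choice $\eta = 1/2$ satisfies both.
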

              \begin{proof}
                The proof proceeds similar to Lemma~\ref{lemma:ng_key_lemma}. The key difference is that the algebraic Riccati map assumes a new form with the quasi-Newton update. Namely, with quasi-Newton iteration,
                \begin{align*}
                      \ca R_{A-B_2L_j, B_1, Q-L_j^{\top}R_2 L_j, R_1}(X_{j-1}) = \left( 4 \eta_{j-1} I - 4 \eta_{j-1}^2 \right) {\bf V}_{K_{j-1}, L_{j-1}}^{\top} {\bf O}_{j-1}^{-1} {\bf V}_{K_{j-1}, L_{j-1}}.
                  \end{align*}
                  The statements then follows from almost same arguments as in Lemma~\ref{lemma:ng_key_lemma}.
  \end{proof}
            We are now ready to state the convergence rate for the algorithm.
            %We first prove the sequence of value matrices is convergent.
                %\begin{theorem}
                  %If the stepsize is taken as $\eta = 1/2$, then,
                  %\begin{align*}
                    %\sum_{j=0}^{k} \|N_g(L_j)\|_F^2 \le \lambda_1(R_2) \left( g(L_k) - g(L_0)\right).
                    %\end{align*}
                  %\end{theorem}
                  %\begin{proof}
                    %By item $c$ of Lemma~\ref{lemma:qn_key_lemma}, we have
                    %\begin{align*}
                      %g(L_{j}) - g(L_{j-1}) &\ge \Tr( Y_{j} {\bf V}_{K_{j-1}, L_{j-1}}^{\top}{\bf O}_{j-1}^{-1} {\bf O}_{j-1}^{-1} {\bf V}_{K_{j-1}, L_{j-1}}) \\
                                            %&\ge \lambda_1({\bf O}_{j-1}^{-2}) \|N_g(L_j)\|_F^2 \\
                                              %&\ge \lambda_{n}^2({\bf O}_{j-1})\|N_g(L_j)\|_F^2.
                    %\end{align*}
                    %Putting $1/\eta = \sup_{j} \lambda_n^2({\bf O}_j)$, by a telescoping sum we conclude
%\begin{align*}
  %\sum_{j=0}^{\infty} \|N_g(L_j)\|_F^2 \le \eta (g(L_*) - g(L_{j-1})) < \infty.
  %\end{align*}
                    %\end{proof}
                    %Now we observe quasi-Newton actually converges at $q$-quadratic rate.
                    \begin{theorem}
                      If the stepsize is taken as $\eta = 1/2$, then
                      \begin{align*}
                        g(L_*) - g(L_j) \le q (g(L_*) - g(L_{j-1}))^2,
                        \end{align*}
                        for some $q >0$.
                      \end{theorem}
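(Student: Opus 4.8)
The plan is to recognize the update with $\eta = 1/2$ as \emph{Newton's method} for the concave outer objective $g$ and then invoke the standard quadratic-convergence mechanism, converting the resulting iterate-error estimate into the claimed cost-gap bound. First I would set $a_j \coloneqq g(L_*) - g(L_j)$; by Lemma~\ref{lemma:qn_key_lemma} each iterate is well-defined and stabilizing with $X_{K_j, L_j} \preceq X_*$, so $a_j \ge 0$ and $\{g(L_j)\}$ is nondecreasing and bounded above by $g(L_*)$. Hence $g(L_j) \uparrow g(L_*)$, and since $L_*$ is the unique maximizer of $g$ on $\dom(g)$ with $\nabla^2 g(L_*) \prec 0$ (condition $(a1)$ forces ${\bf O}_{X_*} \succ 0$, and the Hessian of Appendix~\ref{sec:hessian} equals $-2\langle {\bf O}_{X_*}E, EY_*\rangle$), the iterates enter and remain in a convex neighborhood $\ca N$ of $L_*$ after finitely many steps. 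It therefore suffices to prove the quadratic bound on $\ca N$ and absorb the finitely many earlier steps into the constant $q$.

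On $\ca N$, where $K_L = \argmin_K f(K,L)$ is unique, I would verify that the quasi-Newton direction is the Newton direction up to a first-order-small error. Using $\nabla g(L) = 2{\bf V}_{K_L, L}\, Y_{K_L, L}$ and $\nabla^2 g(L)[E,E] = -2\langle {\bf O}_{X_{K_L,L}}E, E Y_{K_L,L}\rangle + r_L(E)$, the remainder $r_L$ is proportional to the gradient factor ${\bf V}_{K_L,L}$ and hence vanishes at $L_*$. Because the factor $Y_{K_L,L} \succ 0$ appears on the right in both the gradient and the leading Hessian term, it cancels in the Newton equation, so the exact Newton step at $L_{j-1}$ equals ${\bf O}_{j-1}^{-1}{\bf V}_{K_{j-1},L_{j-1}}$, which is precisely the update with $\eta = 1/2$. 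Thus the method is Newton's method whose Hessian surrogate $M(L)\colon E \mapsto -2{\bf O}_{X_{K_L,L}}E\,Y_{K_L,L}$ satisfies $\|M(L) - \nabla^2 g(L)\| = O(\|{\bf V}_{K_L,L}\|) = O(\|L - L_*\|)$ on $\ca N$.

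Next I would run the standard Newton error recursion. Writing $\nabla g(L_{j-1}) = \nabla^2 g(L_{j-1})(L_{j-1} - L_*) + O(\|L_{j-1}-L_*\|^2)$ via a Lipschitz-Hessian Taylor expansion (valid since $g \in C^{\omega}$ on the compact super-level set carrying the trajectory) and substituting $L_j - L_* = (L_{j-1} - L_*) - M(L_{j-1})^{-1}\nabla g(L_{j-1})$, the identity $I - M^{-1}\nabla^2 g = M^{-1}(M - \nabla^2 g) = O(\|L_{j-1}-L_*\|)$ forces the first-order terms to combine into a second-order quantity, yielding $\|L_j - L_*\| \le c\|L_{j-1} - L_*\|^2$ with $c$ uniform on $\ca N$. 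Finally, local strong concavity and smoothness of $g$ give the two-sided bound $\underline{\mu}\|L - L_*\|^2 \le g(L_*) - g(L) \le \overline{\mu}\|L - L_*\|^2$, whence $a_j \le \overline{\mu}\|L_j - L_*\|^2 \le \overline{\mu}c^2\|L_{j-1}-L_*\|^4 \le (\overline{\mu}c^2/\underline{\mu}^2)\,a_{j-1}^2$, which is the claim with $q = \overline{\mu}c^2/\underline{\mu}^2$.

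The hard part is the second step: showing rigorously that the ${\bf O}$-weighted surrogate agrees with the true Hessian of $g$ to first order, i.e. that the discrepancy is proportional to the gradient and therefore vanishes at $L_*$. This is delicate because $g$ is defined through an inner $\argmin$, so its first and second derivatives must be extracted via the Danskin-type Lemma~\ref{lemma:differentiable} together with the implicit dependence of $X_{K_L,L}$ and $Y_{K_L,L}$ on $L$. An alternative that stays within the paper's exact-identity framework is to avoid differentiating $g$ and instead combine Lemma~\ref{lemma:qn_key_lemma}$(d)$ (with $\eta=1/2$, which collapses the weight to ${\bf O}_{j-1}^{-1}$) with Comparison Lemma~\ref{lemma:ng_comparison_2} applied to the pair $(K_j,L_j)$ and $(K_*,L_*)$: the latter expresses $X_* - X_{K_j,L_j}$ as a Lyapunov solution whose source is $-(L_* - L_j)^{\top}{\bf O}_{X_{K_j,L_j}}(L_* - L_j)$ plus terms linear in the new gradient ${\bf V}_{K_j,L_j}$, and one must show these linear terms are quadratically small in the previous residual. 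Either route ultimately requires uniform control of the relevant Lipschitz constants on the compact super-level set, which is where most of the technical care is needed.
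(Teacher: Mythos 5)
Your route---reading $\eta = 1/2$ as an approximate Newton step for $g$, proving a local iterate recursion $\|L_j - L_*\| \le c\|L_{j-1}-L_*\|^2$, converting with two-sided bounds between the gap and squared distance, and absorbing finitely many early steps into $q$---is genuinely different from the paper's, and its central identification is sound: with the surrogate Hessian $E \mapsto -2{\bf O}E Y$ the Newton step is exactly ${\bf O}_{j-1}^{-1}{\bf V}_{K_{j-1},L_{j-1}}$, and the discrepancy with $\nabla^2 g$ is indeed proportional to ${\bf V}$ (this is visible in Appendix~\ref{sec:hessian}, where every correction term carries a factor of $X_0'(E)$ or $(-R_2L_0 - B_2^{\top}X_0A_0)$, both of which vanish when the gradient does). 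The paper never touches the Hessian. Its engine is the observation that with $\eta = 1/2$ the explicit dependence on $L_{j-1}$ cancels, so the update collapses to $L_j = \phi(X_{j-1})$ for the smooth map $\phi(X) = -{\bf O}_X^{-1}B_2^{\top}X\left(A - B_1(R_1+B_1^{\top}XB_1)^{-1}B_1^{\top}XA\right)$, with $L_* = \phi(X_*)$ a fixed point. Then: Lipschitzness of $\phi$ on the compact set $\ca E = \{X_j\}\cup\{X_*\}$ gives $\|L_*-L_j\|_F \le \beta\|X_*-X_{j-1}\|_F$; Lemma~\ref{lemma:ng_comparison_2} centered at the Nash point (so that ${\bf V}_{K_*,L_*}=0$ annihilates all linear terms) gives $g(L_*)-g(L_j) \le \lambda_n(Y_*)\lambda_n({\bf O}_*)\|L_*-L_j\|_F^2$; and $\|X_*-X_{j-1}\|_F \le \Tr((X_*-X_{j-1}){\bf \Sigma})/\lambda_1({\bf \Sigma})$ closes the chain. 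What the paper's argument buys is globality---one constant $q$ from compactness, no neighborhood $\ca N$, no Taylor expansion, no absorption. What yours buys is an explanation of \emph{why} $\eta=1/2$ is the right stepsize, which the paper leaves implicit.

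Two genuine gaps remain in your write-up. First, the globalization: you infer ``$g(L_j)\uparrow g(L_*)$, hence the iterates enter and remain in $\ca N$'' from monotonicity and boundedness, but a monotone bounded sequence need not converge to its upper bound, and even $g(L_j)\to g(L_*)$ does not by itself give $L_j \to L_*$, since superlevel sets of $g$ need not be compact---this is exactly the non-coercivity pathology the paper emphasizes for indefinite $Q - L^{\top}R_2L$. The repair is structural rather than variational: $\{X_j\}$ is monotone and bounded above, hence converges; the limit solves the GARE and is almost stabilizing, hence equals $X_*$ by uniqueness; and then $L_j = \phi(X_{j-1}) \to \phi(X_*) = L_*$ by continuity of $\phi$. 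In other words, you need the paper's map $\phi$ even to start your local argument. Second, your fallback sketch applies Lemma~\ref{lemma:ng_comparison_2} in the less favorable orientation: centering at $(K_j,L_j)$ leaves linear terms in ${\bf V}_{K_j,L_j}$ that you would then have to show are quadratically small, whereas centering at $(K_*,L_*)$ kills them identically; and in either orientation the lemma only bounds the gap at step $j$ by $\|L_*-L_j\|^2$---the link back to the \emph{previous} gap is precisely the $\phi$-Lipschitz step, which both your primary and fallback routes leave unaddressed.
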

                      \begin{proof}
                        By Lemma~\ref{lemma:qn_key_lemma}, the sequence of value matrices $\{X_j\}$ is monotonically nondecreasing and bounded above. Thereby $X_j \to X_*$ as $j \to \infty$. It follows the set $\ca E = \{X_j\} \cup \{X_*\}$ is compact. Substituting $K_{j-1} = (R_1 + B_1^{\top}X_{j-1}B_1)^{-1} B_1^{\top}X_{j-1}(A-B_2 L_{j-1})$ into the update rule, we get
                        \begin{align*}
                          L_j = - {\bf O}_{j-1}^{-1}B_2^{-1}X_{j-1} (A-B_1 (R_1+B_1^{\top}X_{j-1}B_1)^{-1}B_1^{\top}X_{j-1} A).
                          \end{align*}
                          By Lemma~\ref{lemma:ng_comparison_2} (take $\tilde{X} = X_*$ and $X = X_j$ and note ${\bf V}_{K_*, L_*} = 0$),
\begin{align*}
  X_* - X_j \preceq \sum_{\nu = 0}^{\infty} (A_j^{\top})^\nu \left( (L_*-L_j)^{\top}{\bf O}_{*} (L_*-L_j) \right)A_j^{\nu},
\end{align*}
where
\begin{align*}
  {\bf O}_* = R_2 - B_2^{\top} X_* B_2 + B_2^{\top} X_* B_1(R_1+B_1^{\top} X_* B_1)^{-1}B_1^{\top}X_*B_2.
  \end{align*}
It follows that,
\begin{align*}
  g(L_*) - g(L_j) &\le \Tr(Y_* (L_*-L_j)^{\top} {\bf O}_{*} (L_* - L_j)) \\
                  &\le \lambda_n(Y_*) \lambda_n({\bf O}_{*}) \Tr((L_* - L_j)^{\top} (L_* - L_j)).
\end{align*}
We observe
\begin{align*}
  L_* - L_j = -{\bf O}_*^{-1} B_2^{\top} X_* (A-B_1(R_1+B_1^{\top}X_* B_1)^{-1}B_1^{\top}X_* A)+ {\bf O}_{j-1}^{-1}B_2^{\top}X_{j-1} (A-B_1 (R_1+B_1^{\top}X_{j-1}B_1)^{-1}B_1^{\top}X_{j-1} A),
\end{align*}
and further, note that the map $\phi$ given by
\begin{align*}
  X \mapsto -{\bf O}_X^{-1}B_2^{\top} X \left( A-B_1 (R_1 + B_1^{\top} X B_1)^{-1} B_1^{\top} X A\right)
\end{align*}
is smooth where,
\begin{align*}
  {\bf O}_{X} =R_2 - B_2^{\top} X B_2 + B_2^{\top} X B_1(R_1+B_1^{\top}XB_1)^{-1} B_1^{\top}X B_2.
\end{align*}
So over the compact set $\ca E$, we can find a Lipschitz constant $\beta$ of $\phi$, namely, for every $X, X' \in \ca E$, we have
$\|\phi(X)-\phi(X')\|_F \le \beta \|X-X'\|_F$.
Then
\begin{align*}
  \|L_* - L_j\|_F^2 = \|\phi(X_*)-\phi(X_j)\|_F^2 \le \beta^2\|X_*-X_{j-1}\|_F^2.
\end{align*}
Hence
\begin{align*}
  g(L_*)-g(L_j) &\le c \|X_*-X_{j-1}\|_F^2 \le q \left( g(L_*)-g(L_{j-1})\right)^2,
\end{align*}
where $c, q > 0$ are constants.
                        \end{proof}
                  \section{Policy Gradient Algorithms for Solving $K \leftarrow \argmin_{K} f(K, L)$}
                  \label{sec:pg_K}
                  In this section, we describe how we can use policy gradient based oracles to solve the minimization problem for fixed $L$. If the iterates $Q-L_j{\top}R_2 L_j$ were positive definite, policy based updates, {e.g.}, gradient descent, natural gradient descent and quasi-Newton iterations for standard LQR problem as treated in~\cite{fazel2018global}~\cite{bu2019lqr}, could be adopted. Then the oracle $\ca O_K$ can be constructed by repeatedly performing the procedure to the desired precision. However, there are no guarantees that this condition would be valid in the LQ dynamic game setup. We shall prove that with the assumption that there exists a maximal stabilizing solution to the algebraic Riccati equation, gradient descent (respectively, natural gradient descent and quasi-Newton) converges to the maximal solution of the ARE at a linear (respectively, linear and quadratic) rate. In this direction, recall that the gradient, natural gradient and quasi-Newton directions for fixed $L$ are given by,
                  \begin{align*}
                    \nabla_K f(K, L) &= 2(R_1 K- B_1^{\top} X A_{K, L}) Y \eqqcolon {\bf g}(K),\\
                    N_{f, K}(K, L) &= 2(R_1 K- B_1^{\top} X A_{K, L}) \eqqcolon {\bf n}(K),\\
                    qN_{f, K}(K, L) &= 2(R_1+B_1^{\top} X B_1)^{-1}(R_1K-B_1^{\top}XA_{K, L}) \eqqcolon {\bf qn}(K).
                    \end{align*}
                  We first provide the convergence analysis for the case of natural gradient descent.
                \begin{theorem}[Natural Gradient Analysis]
                  \label{thrm:ng_inner_convergence}
                  Suppose that with fixed $L$, the ARE
                  \begin{align*}
                    (A-B_2 L)^{\top} X (A-B_2 L) - X + (A-B_2L)^{\top}XB_1(R_1 + B_1^{\top} X B_1)^{-1} B_1^{\top} X(A-B_2L) + Q-L^{\top} R_2 L = 0,
                   \end{align*}
                   has a maximal stabilizing solution $X^+$. Then the update rule,
                   \begin{align*}
                     M_{i+1} = M_i - \frac{1}{2\lambda_n(R_1 + B_1^{\top} X_i B_1)} 2(R_1 M_i - B_1^{\top} X_i (A-B_2L-B_1 M_i)),
                    \end{align*}
                    where $X_i$ solves the Lyapunov equation
                    \begin{align*}
                      (A-B_2 L - B_1 M_{i})^{\top} X_i + X_i (A-B_2 L_j - B_1 M_i) + Q - L^{\top} R_2 L + K_i^{\top} R_1 K_i = 0,
                      \end{align*}
                    converges linearly to $K^+$ provided $M_0 \in \ca S_{L}$. That is,
                    \begin{align*}
                      \|M_{i} - M_*\|_F^2 \le q^{i}\|M_0 - M_*\|_F^2,
                      \end{align*}
                      for some $q \in (0, 1)$.
                  \end{theorem}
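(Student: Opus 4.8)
The plan is to view the $L$-fixed subproblem as a single-player LQR instance for the pair $(A-B_2L,B_1)$ with control weight $R_1\succ0$ and the (generally indefinite) state weight $Q-L^{\top}R_2L$, and to recognize the recursion as a damped Kleinman--Hewer policy iteration whose value matrices I will push monotonically down to $X^+$. Throughout, write $A_M:=A-B_2L-B_1M$, let $X_M$ denote the solution of the associated Lyapunov equation, and set $E_M:=R_1+B_1^{\top}X_MB_1$, $U_M:=R_1M-B_1^{\top}X_MA_M$ (so that ${\bf n}(M)=2U_M$ and the update is $M_{i+1}=M_i-\alpha_iU_i$ with $\alpha_i:=1/\lambda_n(E_i)$, abbreviating $X_i:=X_{M_i}$, $E_i:=E_{M_i}$, $U_i:=U_{M_i}$). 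Everything rests on two specializations of Lemma~\ref{lemma:ng_comparison} in which the $L$-direction terms drop out because $L$ is held fixed: comparing $M_{i+1}$ to $M_i$ yields, whenever $M_{i+1}$ is stabilizing,
\[
 X_{i+1}-X_i=\sum_{t\ge0}(A_{M_{i+1}}^{\top})^t\,\alpha_i\,U_i^{\top}(\alpha_iE_i-2I)U_i\,A_{M_{i+1}}^t ,
\]
while comparing any stabilizing $M$ to the optimal gain $K^+=M_*$, at which $U_{K^+}=0$, yields
\[
 X_M-X^+=\sum_{t\ge0}(A_M^{\top})^t\,(M-M_*)^{\top}(R_1+B_1^{\top}X^+B_1)(M-M_*)\,A_M^t\succeq0 .
\]
Since $\alpha_i=1/\lambda_n(E_i)$ forces $\alpha_iE_i\preceq I$, hence $\alpha_iE_i-2I\preceq-I\prec0$, the first identity gives $X_{i+1}\preceq X_i$, and the second gives $X_M\succeq X^+$ for every stabilizing $M$; in particular $E_i\succeq R_1+B_1^{\top}X^+B_1\succ0$, so the stepsize is well defined along the iteration.

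The hard part will be the induction that keeps the iterates stabilizing: assuming $M_i\in\ca S_L$, I must show $M_{i+1}\in\ca S_L$. The classical LQR route---coercivity of the cost on $\ca S_L$---is unavailable here because $Q-L^{\top}R_2L$ is indefinite and the sublevel sets of $f(\cdot,L)$ need not be bounded. Instead I would mirror the homotopy device in the proof of Lemma~\ref{lemma:ng_key_lemma}: along the ray $M_t:=M_i-tU_i$, $t\in[0,\alpha_i]$, stabilizability is an open condition (Proposition~\ref{prop:stabilizability_open}), so there is a maximal half-open interval $[0,\sigma)$ on which $M_t$ is stabilizing; on it the two identities bracket $X^+\preceq X_{M_t}\preceq X_i$, so the family $\{X_{M_t}\}$ stays in the compact order interval $[X^+,X_i]$ and has a limit point $Z$ as $t\uparrow\sigma$ obeying, by continuity, the recursion $Z-X_i=A_{M_\sigma}^{\top}(Z-X_i)A_{M_\sigma}+\sigma U_i^{\top}(\sigma E_i-2I)U_i$. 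If $\sigma\le\alpha_i$ and $A_{M_\sigma}v=\lambda v$ with $|\lambda|=1$, then contracting this recursion with $v$ annihilates the left-hand side and leaves $\sigma\,v^{\ast}U_i^{\top}(\sigma E_i-2I)U_iv=0$, forcing $U_iv=0$ since $\sigma E_i-2I\prec0$; but then $M_\sigma v=M_iv$ and hence $A_{M_\sigma}v=A_{M_i}v$, impossible as $A_{M_i}$ is Schur. Thus $\sigma>\alpha_i$, $M_{i+1}$ is stabilizing, and $X^+\preceq X_{i+1}\preceq X_i$. I expect this reconciliation of ``assume stabilizable, then contradict non-stabilizability'' to be the genuine obstacle, exactly as flagged for Lemma~\ref{lemma:ng_key_lemma}.

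With the induction secured, $\{X_i\}$ decreases monotonically and is bounded below by $X^+$, hence converges to some $X_\infty\succeq X^+$. The first identity gives $X_i-X_{i+1}\succeq\alpha_iU_i^{\top}U_i$ (the $t=0$ summand, using $2I-\alpha_iE_i\succeq I$); since $E_i\preceq R_1+B_1^{\top}X_0B_1$ keeps $\alpha_i$ bounded away from $0$, I obtain $U_i\to0$, so in the limit $U_\infty=0$ and $X_\infty$ solves the algebraic Riccati equation. Being $\succeq X^+$ while $X^+$ is the maximal solution, $X_\infty=X^+$, and therefore $M_i\to M_*=K^+$.

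Finally, for the linear rate I would sandwich the value gap between two multiples of $\|M_i-M_*\|_F^2$ and show that the gap contracts geometrically. Since ${\bf\Sigma}\succ0$ (the $w_i$ span $\bb R^n$) and $R_1+B_1^{\top}X^+B_1\succ0$, the comparison-with-$M_*$ identity gives $\underline c\,\|M_i-M_*\|_F^2\le\Tr((X_i-X^+){\bf\Sigma})\le\overline c\,\|M_i-M_*\|_F^2$ with $\underline c,\overline c>0$ uniform over the compact order interval $[X^+,X_0]$, where $\|Y_M\|$ and the spectra of $R_1+B_1^{\top}XB_1$ are controlled. The standard LQR gradient-domination bound $\Tr((X_i-X^+){\bf\Sigma})\le c_{\mathrm{GD}}\|U_i\|_F^2$, whose constant depends only on $\|Y^+\|$, $\lambda_{\min}(R_1)$ and $\sigma_{\min}({\bf\Sigma})$ and hence is insensitive to the indefiniteness of $Q-L^{\top}R_2L$, combined with $\Tr((X_i-X_{i+1}){\bf\Sigma})\ge\alpha_i\sigma_{\min}({\bf\Sigma})\|U_i\|_F^2$ yields $\Tr((X_{i+1}-X^+){\bf\Sigma})\le(1-\kappa)\Tr((X_i-X^+){\bf\Sigma})$ for some $\kappa\in(0,1)$. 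Passing this through the two-sided bound produces $\|M_i-M_*\|_F^2\le q^{\,i}\|M_0-M_*\|_F^2$ with $q=1-\kappa$ (folding $\overline c/\underline c$ into the constant), which is the claim.
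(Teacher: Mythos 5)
Your proposal is correct and follows essentially the same route as the paper: the crux in both is the homotopy argument along the natural-gradient ray, extracting a limit point $Z$ of the value matrices (bounded in the order interval $[X^+, X_i]$ via the two comparison identities) and contradicting marginal stability through an eigenvector that annihilates the update direction, after which monotonicity plus the standard one-step/gradient-domination analysis yields the linear rate. The only cosmetic difference is that your eigenvector contradiction compares $Z$ against the previous iterate $X_i$ (forcing $U_i v = 0$, hence $A_{M_\sigma} v = A_{M_i} v$, contradicting Schur-ness of $A_{M_i}$), whereas the paper compares against $X^+$ (forcing $M_\zeta v = K^+ v$ via $R_1 \succ 0$, contradicting Schur-ness of $A_{K^+}$); both are valid, and you additionally spell out the rate argument that the paper delegates to its cited LQR analysis.
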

                  \begin{remark}
                    The difference between above theorem and the standard results treated in~\cite{fazel2018global,bu2019lqr} is that we are not assuming $Q$ to be positive definite. Mind that in the above problem, the matrix $Q-L_j^{\top} R_2 L_j$ corresponding to the penalization on states in the standard LQR, can be indefinite in our sequential algorithms. The one step progression would follow from Theorem in~\cite{bu2019lqr}. However, the important difference is that the cost function is no longer coercive, requiring a separate analysis for establishing the stability of the iterates.\footnote{We note that stability of the iterates in natural gradient update was not explicitly shown in~\cite{fazel2018global}. Some of the perturbation arguments for gradient descent in~\cite{fazel2018global} can however be applied to argue for this property. Such an argument would however rely on the strict positivity of the minimum eigenvalue of $Q$.}
                    \end{remark}
                  \begin{proof}
                    The analysis in~\cite{bu2019lqr} on the one-step progression of the natural gradient descent holds here and thus the convergence rate would remain the same if we can prove that the iterates remain stabilizing. \par
                    By induction, it suffices to argue that with the chosen stepsize, $M_i$ is stabilizing provided that $M_{i-1}$ is. Consider the ray $\{M_t = M_{i-1} -  t {\bf n}(M_{i-1}): t \ge 0\}$. Note that by openness of $\ca S$ and continuity of eigenvalues, there is a maximal interval $[0, \zeta)$ such that $M_{i-1} + t {\bf n}(M_{i-1}) $ is stabilizing for $t \in [0, \zeta)$ and $M_{i-1} + \zeta {\bf n}(M_{i-1})$ is marginally stabilizing. Now suppose that $\zeta \le \frac{1}{2\lambda_n(R_1+B_1^{\top} X_{i-1} B_1)}$ and take a sequence $t_l \in [0, \zeta)$ such that $t_l \to \zeta$. Consider the sequence of value matrices $\{X_{t_l}\}$ and denote by $\ca L$ as the set of all limit points of $\{X_{t_j}\}$. Note that $\ca L$ is nonempty since the sequence is bounded by $X^{+} \preceq X_{t_l} \preceq X_{i-1}$ for every $t_l$.\footnote{Note that it is not guaranteed that $X_{t_j}$ is convergent. The limit points are also not necessarily well-ordered in the ordering induced by the p.s.d. cone.} By continuity, any $Z \in \ca L$ solves,
                    \begin{align*}
                      (A-B_1 M_{\zeta} - B_2 L)^{\top} Z  (A-B_1 M_{\zeta}-B_2 L) -Z + Q - L^{\top} R_2 L + M_{\zeta}^{\top} R_1 M_{\zeta} = 0.
                    \end{align*}
                    But this is a contradiction, since by the Comparison Lemma~\ref{lemma:ng_comparison}, we have
                    \begin{align*}
                      (A-B_1 M_{\zeta}-B_2 L)^{\top} (Z-X^{+}) (A-B_1 M_{\zeta} - B_2 L) - (Z-X^{+}) + (M_{\zeta}-K^+)^{\top} R_1(M_{\zeta}-K^+) = 0.
                      \end{align*}
                      Suppose that $(\lambda, v)$ is the eigenvalue-eigenvector pair of $A-B_1 M_{\zeta}-B_2 L_1$ such that $(A-B_1 M_{\zeta}-B_2 L_j) v = \lambda v$ and $|\lambda| = 1$. Then as $Z-X^{+} \succeq 0$, we would have $M_{\zeta} v = K^+ v$. But this is a contradiction to the assumption that $K^+$ is a stabilizing solution. Hence $\{X_i\}$ is a monotonically non-increasing sequence bounded below by $X^{+}$. As such, the sequence of iterates $\{M_i\}$ would converge linearly to $K^+$ following the arguments in~\cite{bu2019lqr}.
                    \end{proof}
                    We mention that the above stability argument can be applied for the sequence generated by the quasi-Newton iteration as well. The quadratic convergence rate for such a sequence would then follow from the proof in~\cite{bu2019lqr}.
                    \begin{theorem}[Quasi-Newton Analysis]
                  \label{thrm:qn_inner_convergence}
                  Suppose that with a fixed $L$, the ARE
                  \begin{align*}
                    (A-B_2 L)^{\top} X (A-B_2 L) - X + (A-B_2L)^{\top}XB_1(R_1 + B_1^{\top} X B_1)^{-1} B_1^{\top} X(A-B_2L) + Q-L^{\top} R_2 L = 0,
                   \end{align*}
                   has a maximal stabilizing solution $X^+$. Then the update rule
                   \begin{align*}
                     M_{i+1} = M_i - \frac{1}{2} 2(R_1 + B_1^{\top} X_{i}B_1)^{-1}(R_1 M_i - B_1^{\top} X_i (A-B_2L-B_1 M_i)),
                    \end{align*}
                    where $X_i$ solves the Lyapunov equation
                    \begin{align*}
                      (A-B_2 L_j - B_1 M_{i})^{\top} X_i + X_i (A-B_2 L_j - B_1 M_i) + Q - L_j^{\top} R_2 L_j + K_i^{\top} R_1 K_i = 0,
                      \end{align*}
                    converges quadratically to $K^+$ provided $M_0 \in \ca S$. That is,
                    \begin{align*}
                      \|M_{i} - M_*\|_F \le q\|M_0 - M_*\|_F^2,
                      \end{align*}
                      for some $q > 0$.
                  \end{theorem}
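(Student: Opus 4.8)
The plan is to mirror the two-part structure of the proof of Theorem~\ref{thrm:ng_inner_convergence}: first show that the iterates $\{M_i\}$ stay stabilizing (equivalently, that the value matrices $\{X_i\}$ form a well-defined sequence squeezed between $X^+$ and $X_0$), and then invoke the one-step quadratic progression for the Newton--Kleinman iteration established in~\cite{bu2019lqr}. The starting observation is that, with $\eta = 1/2$, the update collapses to the pure Newton--Kleinman step: writing $A_{M_i, L} = A - B_1 M_i - B_2 L$ and using $R_1 M_i - B_1^{\top} X_i A_{M_i, L} = (R_1 + B_1^{\top} X_i B_1) M_i - B_1^{\top} X_i (A - B_2 L)$, the iteration telescopes to
\begin{align*}
  M_{i+1} = M_i - \tfrac{1}{2}\,{\bf qn}(M_i) = (R_1 + B_1^{\top} X_i B_1)^{-1} B_1^{\top} X_i (A - B_2 L).
\end{align*}
Thus $M_{i+1}$ is exactly the optimal feedback gain associated with the value matrix $X_i$ for the (single-player) LQR problem with system $(A - B_2 L, B_1)$ and the possibly indefinite state weight $Q - L^{\top} R_2 L$, which is precisely the setting in which Newton-type iterations are known to contract quadratically once stability is secured.

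The main obstacle, exactly as in the natural-gradient case, is that since $Q - L^{\top} R_2 L$ need not be positive semidefinite, the cost is not coercive and boundedness of the objective does \emph{not} certify that an iterate remains stabilizing; this must be argued separately. I would reprove stability by the same ray/limit-point device. Proceeding by induction, assume $M_{i-1} \in \ca S_L$ and consider the ray $\{M_t = M_{i-1} - t\,{\bf qn}(M_{i-1}) : t \ge 0\}$. By openness of $\ca S_L$ and continuity of eigenvalues there is a maximal half-open interval $[0, \zeta)$ on which $M_t$ is stabilizing, with $M_\zeta$ marginally stabilizing. Arguing by contradiction, suppose $\zeta \le \tfrac12$; taking $t_l \uparrow \zeta$, the value matrices are sandwiched, $X^+ \preceq X_{t_l} \preceq X_{i-1}$, hence bounded, so by Bolzano--Weierstrass the sequence $\{X_{t_l}\}$ admits a limit point $Z$ which, by continuity, solves the limiting Riccati/Lyapunov equation at $M_\zeta$.

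Applying Comparison Lemma~\ref{lemma:ng_comparison} to the pair $(M_\zeta, L)$ against the stabilizing $(K^+, L)$, and using that the natural gradient vanishes at $K^+$ (so the first-order terms drop and the $L$-terms cancel), yields the identity
\begin{align*}
  A_{M_\zeta, L}^{\top} (Z - X^+) A_{M_\zeta, L} - (Z - X^+) + (M_\zeta - K^+)^{\top} R_1 (M_\zeta - K^+) = 0.
\end{align*}
For a unit-modulus eigenpair $A_{M_\zeta, L} v = \lambda v$ with $|\lambda| = 1$, conjugating this identity by $v$ cancels the first two terms, and since $Z - X^+ \succeq 0$ and $R_1 \succ 0$ we are forced to have $(M_\zeta - K^+) v = 0$, i.e.\ $M_\zeta v = K^+ v$; but then $A_{K^+, L} v = \lambda v$ with $|\lambda| = 1$, contradicting that $K^+$ is stabilizing. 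Hence $\zeta > \tfrac12$, so $M_i$ is stabilizing, closing the induction; consequently $\{X_i\}$ is monotonically nonincreasing and bounded below by $X^+$, whence $X_i \to X^+$ and $M_i \to K^+ = M_*$. With stability of the iterates in hand, the quadratic rate $\|M_i - M_*\|_F \le q \|M_0 - M_*\|_F^2$ follows verbatim from the one-step Newton-type progression in~\cite{bu2019lqr}, whose contraction estimate does not use definiteness of the state weight once the iterates are known to remain stabilizing. I expect the stability step---and in particular checking that the limiting matrix inequality is preserved under the quasi-Newton (rather than natural-gradient) direction---to be the only genuinely delicate point, the remainder being routine bookkeeping.
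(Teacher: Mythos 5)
Your proposal is correct and follows essentially the same route as the paper: the paper proves stability of the iterates via exactly this ray/limit-point contradiction argument (openness of $\ca S_L$, a maximal interval $[0,\zeta)$, Bolzano--Weierstrass on the sandwiched value matrices $X^+ \preceq X_{t_l} \preceq X_{i-1}$, then the Comparison Lemma plus a unit-modulus eigenvector to force $M_\zeta v = K^+ v$ and contradict that $K^+$ is stabilizing), and then, as you do, delegates the one-step quadratic contraction to~\cite{bu2019lqr}. Your additional observation that the stepsize-$\tfrac12$ update collapses to the Newton--Kleinman gain $(R_1+B_1^{\top}X_iB_1)^{-1}B_1^{\top}X_i(A-B_2L)$ is a nice explicit touch consistent with, though not spelled out in, the paper's treatment.
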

                  The gradient policy analysis requires more work since the stepsize developed in~\cite{bu2019lqr} involves the smallest eigenvalue $\lambda_1(Q)$. However by carefully replacing ``$\lambda_1(Q)$ related quantities'' in~\cite{bu2019lqr}, one can still prove the global linear convergence rate as follows.
                    \begin{theorem}[Gradient Analysis]
                  \label{thrm:gd_inner_convergence}
                  Suppose that with a fixed $L$, the ARE
                  \begin{align*}
                    (A-B_2 L)^{\top} X (A-B_2 L) - X + (A-B_2L)^{\top}XB_1(R_1 + B_1^{\top} X B_1)^{-1} B_1^{\top} X(A-B_2L) + Q-L^{\top} R_2 L = 0,
                   \end{align*}
                   has a maximal stabilizing solution $X^+$. Then the update rule
                   \begin{align*}
                     M_{i+1} = M_i - \eta_i 2(R_1 M_i - B_1^{\top} X_i (A-B_2L-B_1 M_i)) Y_{M_i, L},
                    \end{align*}
                    where $X_i$ solves the Lyapunov equation
                    \begin{align*}
                      (A-B_2 L_j - B_1 M_{i})^{\top} X_i + X_i (A-B_2 L_j - B_1 M_i) + Q - L_j^{\top} R_2 L_j + K_i^{\top} R_1 K_i = 0,
                      \end{align*}
                    converges linearly to $K^+$ provided $M_0 \in \ca S$. That is,
                    \begin{align*}
                      \|M_{i} - M_*\|_F^2 \le q^i \|M_0 - M_*\|_F^2,
                      \end{align*}
                      for some $q \in (0,1)$.
                  \end{theorem}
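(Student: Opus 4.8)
The plan is to inherit the one-step contraction established for standard LQR in~\cite{bu2019lqr} and to supply the two ingredients that are no longer automatic once $Q - L^{\top} R_2 L$ is allowed to be indefinite: (i) the iterates $M_i$ stay in $\ca S_L$, so that $X_i = X_{M_i, L}$ and $Y_i = Y_{M_i, L}$ are well defined, and (ii) every constant in the~\cite{bu2019lqr} analysis that carried a factor $\lambda_1(Q)$ is replaced by one carrying $\lambda_1(R_1 + B_1^{\top} X^+ B_1) > 0$, the positivity coming from Theorem~\ref{thrm:dare_existence} since $X^+$ is maximal. The device behind this substitution is to pass to the \emph{shifted cost} $\tilde f(M) \coloneqq f(M, L) - f(K^+, L) = \Tr((X_{M, L} - X^+){\bf \Sigma})$, where $K^+ = M_*$ is the stabilizing gain attached to $X^+$. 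Since $K^+$ satisfies the stationarity relation $R_1 K^+ = B_1^{\top} X^+ (A - B_2 L - B_1 K^+)$, applying the Comparison Lemma~\ref{lemma:ng_comparison} to the pair $(M, L)$ and $(K^+, L)$ (the $L$-terms drop and the first-order term vanishes) yields the closed form
\begin{align*}
  X_{M, L} - X^+ = \sum_{\nu=0}^{\infty} (A_{M, L}^{\top})^{\nu} (M - K^+)^{\top} (R_1 + B_1^{\top} X^+ B_1)(M-K^+) A_{M, L}^{\nu},
\end{align*}
in which the effective penalty $R_1 + B_1^{\top} X^+ B_1 \succ 0$ is positive definite even though $Q - L^{\top} R_2 L$ need not be. This is exactly the quantity that will play the role formerly played by $\lambda_1(Q)$.

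With this identity in hand I would first record that $\tilde f \ge 0$, that $\tilde f(M) = 0$ iff $M = K^+$, and that $\tilde f$ is \emph{coercive} on $\ca S_L$: as $M \to \partial \ca S_L$ some mode of $A_{M, L}$ approaches the unit circle, and since the penalty is positive definite the corresponding term of the series diverges, forcing $\tilde f(M) = \langle (M-K^+)^{\top}(R_1 + B_1^{\top} X^+ B_1)(M - K^+),\, Y_{M,L}\rangle \to +\infty$ (recall $Y_{M,L} \succeq {\bf \Sigma} \succ 0$). Coercivity of $\tilde f$ substitutes for the coercivity that $Q \succ 0$ provided in~\cite{fazel2018global, bu2019lqr}: the sublevel set $\{M \in \ca S_L : \tilde f(M) \le \tilde f(M_0)\}$ is then a compact subset of $\ca S_L$, on which $\|X_{M,L}\|$, $\|Y_{M,L}\|$ and the spectral margin of $A_{M,L}$ are uniformly bounded. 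These uniform bounds are precisely the data needed to reproduce, with $\lambda_1(R_1 + B_1^{\top}X^+ B_1)$ in place of $\lambda_1(Q)$, the Lipschitz-smoothness estimate for ${\bf g}$, the gradient-domination inequality $\tilde f(M) \le c\,\|{\bf g}(M)\|_F^2$, and the admissible stepsize. Choosing $\eta_i$ by this modified prescription, the descent lemma of~\cite{bu2019lqr} guarantees that each step both decreases $\tilde f$ by a fixed multiplicative factor and keeps $M_{i+1}$ inside the sublevel set (hence stabilizing), closing the induction on stability and giving $\tilde f(M_i) \le (1-\kappa)^i \tilde f(M_0)$ for some $\kappa \in (0,1)$; gradient domination then upgrades this to $\|M_i - M_*\|_F^2 \le q^i \|M_0 - M_*\|_F^2$.

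For the stability step one may alternatively transplant the ray-and-contradiction argument of Theorem~\ref{thrm:ng_inner_convergence}: along $M_t = M_{i-1} - t\,{\bf g}(M_{i-1})$ let $[0,\sigma)$ be the maximal stabilizing interval; if the prescribed stepsize satisfied $\sigma \le \eta_{i-1}$, a bounded limiting value matrix $Z \succeq X^+$ would solve the Lyapunov equation at the marginally stabilizing $M_{\sigma}$, and Lemma~\ref{lemma:ng_comparison} applied to $(M_\sigma, K^+)$ would force $(M_\sigma - K^+)v = 0$ on a unit-modulus eigenvector $v$ of $A_{M_\sigma, L}$, whence $(A - B_2 L - B_1 K^+)v = \lambda v$ with $|\lambda| = 1$, contradicting that $K^+$ is stabilizing. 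The main obstacle throughout is the loss of coercivity: in the definite-$Q$ setting it is $\lambda_1(Q) > 0$ that simultaneously controls the sublevel sets, the gradient Lipschitz constant, and the stepsize, and the crux is to show that the shifted penalty $R_1 + B_1^{\top} X^+ B_1 \succ 0$ can carry all of these roles at once. The delicate part is therefore not any single estimate but verifying that this replacement stays consistent across the entire chain of inequalities imported from~\cite{bu2019lqr}.
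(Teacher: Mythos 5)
Your proposal is correct in substance but follows a genuinely different route from the paper's proof in Appendix~\ref{sec:gd_inner}. The paper deliberately avoids any coercivity or compactness claim --- it explicitly remarks that one cannot assume the cost is coercive --- and instead (i) proves an explicit perturbation bound $\|Y(\eta)\|_2 \le \beta_0\|Y\|_2$ by differencing Lyapunov equations, (ii) extracts from it a computable stepsize $\eta = \min(\eta_0, c_0)$ guaranteeing sufficient decrease, and (iii) establishes stability of the updated gain by a ray-and-contradiction argument in which boundedness of the value matrices along the ray comes from the two-sided bound $X^+ \preceq X_{t_i}$ combined with $\Tr(X_{t_i}{\bf \Sigma}) \le \Tr(X{\bf \Sigma})$ (the paper stresses that the trace bound alone would not suffice). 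You instead rehabilitate the compact-sublevel-set strategy of~\cite{fazel2018global, bu2019lqr}: your Comparison-Lemma identity $X_{M,L}-X^+=\sum_{\nu \ge 0}(A_{M,L}^{\top})^{\nu}(M-K^+)^{\top}(R_1+B_1^{\top}X^+B_1)(M-K^+)A_{M,L}^{\nu}$ shows the shifted cost is nonnegative and, once the boundary behavior is settled, coercive on $\ca S_L$, yielding compact sublevel sets and hence uniform (but non-constructive) bounds. That coercivity holds is a correct and somewhat striking observation: it is precisely the existence of a \emph{stabilizing} $X^+$ that makes it true, and it is stronger than what the paper's prose suggests is available. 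The trade-off is clear: your route imports the LQR machinery wholesale and is conceptually cleaner, but its stepsize and contraction factor come from maximizing continuous quantities over a compact set, whereas the paper's stepsize is an explicit function of current-iterate data. Both proofs ultimately rest on the same crux --- a unit-modulus eigenvector $v$ of $A_{M,L}$ would force $(M-K^+)v=0$ and hence contradict that $K^+$ is stabilizing --- which the paper deploys at the limit point of the ray and you deploy at the boundary of $\ca S_L$.

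One repair is needed before your coercivity paragraph stands on its own. You justify divergence at $\partial\ca S_L$ by saying ``the penalty is positive definite,'' but $(M-K^+)^{\top}(R_1+B_1^{\top}X^+B_1)(M-K^+)$ is only positive \emph{semi}definite (its rank is at most $m_1$), so positivity of the inner weight alone does not force the series to blow up: a marginal mode lying in the kernel of $M_\sigma-K^+$ would be invisible to the penalty, which is exactly the unobservability phenomenon the paper describes in its discussion of the behavior of $f$ on $\partial\ca S$. The fix is the kernel argument you state only later, in your alternative stability paragraph: if $A_{M_\sigma,L}v=\lambda v$ with $|\lambda|=1$ and $(M_\sigma-K^+)v=0$, then $A_{K^+,L}v=\lambda v$, contradicting that $K^+$ is stabilizing; hence $(M_\sigma-K^+)v\ne 0$, and a Fatou-type (or finite-truncation) estimate along any interior sequence $M_k\to M_\sigma$ then gives $\tilde f(M_k)\to+\infty$. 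With that argument moved into the coercivity step, the compactness of sublevel sets, the uniform bounds, and the descent and gradient-domination estimates built on them all go through.
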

                  The convergence analysis of gradient policy follows closely of the idea presented in~\cite{bu2019lqr}. In~\cite{bu2019lqr}, the compactness of sublevel sets was used to devise the stepsize rule to guarantee a sufficient decrease in the cost and stability of the iterates. The proof of compactness in~\cite{bu2019lqr} however, relies on the positive definiteness of $Q-L_j^{\top} R_2 L_j$.\footnote{Or the observability of $(Q-L_j^{\top} R_2 L_j, A)$.} It is also not valid to assume that the function is coercive. But, we can show that an analogous strategy adopted in~\cite{bu2019lqr} can be employed to derive a suitable stepsize for the game setup. The details analysis are defered to Appendix~\ref{sec:gd_inner}.
                  \section{Comments on Adopting Gradient Polices for $L$}
                  Gradient policy update for LQ games has been discussed in~\cite{zhang2019policy}, where a projection step is required in updating the policy $L$. In particular, in~\cite{zhang2019policy}, it has been stated that a projection step onto the set $ \Omega = \{L: Q - L^{\top} R_2 L \succeq 0\}$ would guarantee that $L$ is stabilizing. The key issue however is the stabilizability of $(A-B_2 L, B_1)$; as such, it is not valid to assume that every $L \in \Omega$ would yield a stabilizable pair $(A-B_2 L, B_1)$. In fact, the approach adopted in our work would not work for gradient policy either, as we rely on a monotonicity property of the corresponding value matrix; gradient policy would only decrease the cost function without any guarantees to decrease the value matrix (with respect to the p.s.d. ordering).
                  If we assume that the Nash equilibrium $L_* \in \Omega$ and could guarantee that $(A-B_2L_j, B_1)$ is stabilizable, then it would be warranted that $f(K, L_j)$ has a unique minimizer for every $L_j$; in this case, the approach adopted in this paper would provide a simpler proof for convergence of gradient policies for LQ games.
                  \section{Switching the Leader in the Sequential Algorithms}
                  We shall demonstrate in this section if the condition $a2$ in the assumption holds, it might not be guaranteed that $g(L)$ is differentiable in a neighborhood of $L_*$. In this case, however, choosing the leader to be player $K$ would converge. The analysis would proceed in a similar manner. First, we observe that we can define a value function,
\begin{align*}
  h(K) = \sup_{L \in \ca S_{K}} f(K, L).
  \end{align*}
  Following virtually the same argument, we can the establish the following.
\begin{proposition}
Suppose that $\ca U \subseteq \dom(h)$ is an open set such that for every $K \in \ca U$, there is a unique maximizer of $f(K, L)$ over $L$. Then $h(L)$ is differentiable and the gradient is given by
\begin{align*}
  \nabla h(K) = \nabla_{K} f(K, L_K), \text{ where } L_K = \argmax_{L \in \ca S_K} f(K, L).
  \end{align*}
\end{proposition}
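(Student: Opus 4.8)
The plan is to mirror the proof of Lemma~\ref{lemma:differentiable} line for line, since $h(K) = \sup_{L \in \ca S_K} f(K, L)$ is exactly the sup-analogue of $g(L) = \inf_{K \in \ca S_L} f(K, L)$. The cleanest way to reuse the same machinery is to write $h(K) = -\inf_{L \in \ca S_K} \big(-f(K,L)\big)$, so that maximizing $f$ over $L$ becomes minimizing $-f$ over $L$. This lets me invoke the very same variant of Danskin's theorem (Theorem~D2 in~\cite{bernhard1995theorem}) that was applied to $g$, now applied to the function $(K, L) \mapsto -f(K, L)$ with the roles of the two arguments interchanged.

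First I would record that $f(K, L)$ is $C^\infty$—indeed real analytic—jointly in $(K, L)$ on $\ca S$, hence Fr\'echet differentiable; this is precisely the regularity needed to satisfy Hypothesis~D2 in~\cite{bernhard1995theorem}. Next I would apply that theorem to conclude that $h$ is directionally differentiable at every $K \in \ca U$ in every direction, with the directional derivative expressed through the maximizer $L_K \in \argmax_{L \in \ca S_K} f(K, L)$. Then I would use the standing hypothesis on $\ca U$—that for each $K \in \ca U$ the maximizer $L_K$ is unique—to collapse the directional derivative to a single linear functional of the direction: the one-sided derivatives along $E$ and along $-E$ agree and equal $\langle \nabla_K f(K, L_K), E\rangle$. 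A directional derivative that is linear in the direction is exactly a Fr\'echet gradient, so $h$ is differentiable with $\nabla h(K) = \nabla_K f(K, L_K)$.

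The main obstacle is the same one already flagged for $g$: the inner optimization set $\ca S_K$ is neither compact nor common across $K$ (generically $\ca S_{K_1} \neq \ca S_{K_2}$), which rules out the textbook Danskin theorem that presumes a fixed compact feasible set. The entire purpose of the variant in~\cite{bernhard1995theorem} is to accommodate a moving, noncompact feasible region, provided the envelope is finite and the inner problem is well-posed; here that well-posedness is supplied on $\ca U$ by the assumed existence and uniqueness of $L_K$. Beyond verifying these hypotheses carefully, no new estimates are needed—the argument is genuinely a transcription of Lemma~\ref{lemma:differentiable} under the substitution $L \leftrightarrow K$ and $\inf \leftrightarrow \sup$.
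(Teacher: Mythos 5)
Your proposal is correct and matches the paper's approach: the paper gives no separate proof for this proposition, stating only that it follows ``virtually the same argument'' as Lemma~\ref{lemma:differentiable}, which is precisely the argument you transcribe (smoothness of $f$ verifying Hypothesis D2, the Danskin variant of~\cite{bernhard1995theorem} yielding directional differentiability, and uniqueness of the inner optimizer upgrading this to Fr\'echet differentiability with the stated gradient). Your explicit sign-flip reduction $h(K) = -\inf_{L}(-f(K,L))$ is a sensible way to make the $\sup$-to-$\inf$ transcription formal.
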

The algorithm for player $K$ using natural gradient policy can be described similarly to Algorithm~\ref{alg1}.
          \begin{algorithm}[H]
            \caption{Natural Gradient Policy for LQ Game}
            \label{alg3}
            \begin{algorithmic}[1]
              \State Initialize $K_0$ such that $(A-B_1 K_0, B_2)$ is stabilizable and the DARE 
\begin{align*}
  (A-B_1K_0) Z (A-B_1 K_0) - Q - K_0^{\top}R_1 K_0 - (A-B_1K_0)^{\top}ZB_2(R_2 + B_2^{\top}ZB_2)^{-1}B_2^{\top}Z(A-B_1 K_0) = Z.
\end{align*}
has a maximal symmetric solution $Z^+$ with $R_2 + B_2^{\top}Z^+ B_2 \succ 0$.
              \If{ $j \ge 1$}
              \State Set: $L_{j-1} \leftarrow \argmax_L f(K_{j-1}, L)$.
                  \State Set: $K_{j} = K_{j-1} - \eta_j N_h(L_j) \equiv K_{j-1} - \eta_j N_{f, K} (K_{j-1}, L_{j-1})$.
                \EndIf
              \end{algorithmic}
            \end{algorithm}
            We observe that for fixed $K'$, if $L'$ is the unique stabilizing maximizer of $f(K', L)$ over $L$, then substituting $\nabla_L f(K', L') = 0$ into the Lyapunov matrix equation, $L'$ solves the following ARE,
\begin{align}
  \label{eq:max_are}
  (A-B_1K') Z (A-B_1 K') + Q + K'^{\top}R_1 K' - (A-B_1K')^{\top}ZB_2(-R_2 + B_2^{\top}ZB_2)^{-1}B_2^{\top}Z(A-B_1 K') = Z.
\end{align}
To utilize the theory developed in standard ARE, which concerns a minimization problem, we may consider following modification:
\begin{align}
  \label{eq:max_mod_are}
  (A-B_1K') W (A-B_1 K') - Q - K'^{\top}R_1 K' - (A-B_1K')^{\top}WB_2(R_2 + B_2^{\top}WB_2)^{-1}B_2^{\top}W(A-B_1 K') = W.
\end{align}
We observe that if $W$ solves~\eqref{eq:max_mod_are}, then $-W$ solves~\eqref{eq:max_are}. Now the analysis can be done almost in parallel.
\begin{lemma}
              Suppose that $L_{j-1}$ is the unique stabilizing maximizer of $f(K_{j-1}, L)$, i.e., $L_{j-1} = \argmin_{L} f(K_{j-1}, L)$. Putting $\Delta = X_{K_{j-1}, L_{j-1}}$ and
\begin{align*}
  {\bf O}_{j-1} = R_1 + B_1^{\top} \Delta B_1 + B_1^{\top} \Delta B_2 (R_2 - B_2^{\top}\Delta B_2)^{-1} B_2^{\top} \Delta B_1,
\end{align*}
with stepsize $\eta_{j-1} \le \frac{1}{\lambda_n({\bf O}_{j-1})}$, we then have,
              \begin{enumerate}
                \item $(A-B_1 K_j, B_2)$ is stabilizable.
                \item $f(K_j, L)$ is bounded above over $L$ and there exists a unique stabilizing maximizer $L_j$, namely $(K_j, L_j)$ is a stabilizing pair.
                \item Putting $\Lambda = X_{K_j, L_j}$, we have
     \begin{align*}
       \Delta - \Lambda &\succeq
       A_{K_j, L_j}^{\top}(\Delta - \Lambda) A_{K_j, L_j} + {\bf U}_{K_{j-1}, L_{j-1}}^{\top} \left( -4 \eta_{j-1} I + 4 \eta_j^2 {\bf O}_{j-1} \right) {\bf U}_{K_{j-1}, L_{j-1}}.
       \end{align*}
\end{enumerate}
\end{lemma}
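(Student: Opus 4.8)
The plan is to mirror the inductive proof of Lemma~\ref{lemma:ng_key_lemma} with the two roles exchanged: now $K$ is the leader performing the natural gradient descent step $K_j = K_{j-1} - 2\eta_{j-1}{\bf U}_{K_{j-1}, L_{j-1}}$ of Algorithm~\ref{alg3}, while $L$ is the follower returning the unique stabilizing maximizer $L_{j-1}$ of $f(K_{j-1}, \cdot)$ with value matrix $\Delta$. Assuming all items hold at step $j-1$, the essential reduction is that, since $L_{j-1}$ maximizes, $\Delta$ is (up to the sign flip $W = -Z$) the maximal stabilizing solution of the Riccati equation~\eqref{eq:max_are}; passing to the modified equation~\eqref{eq:max_mod_are} lets us import the existence and closed-unit-disk machinery of Theorem~\ref{thrm:dare_existence}, which is phrased only for the standard minimization DARE. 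The monotonicity that drives convergence is reversed relative to Lemma~\ref{lemma:ng_key_lemma}: because the leader minimizes $h(K) = \sup_L f(K,L)$, the value matrices now \emph{decrease} toward the Nash value, so the relevant sandwich is $X_* \preceq \Lambda \preceq \Delta$.

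First I would establish a maximization counterpart of the Comparison Lemma~\ref{lemma:ng_comparison_2}, evaluating the Riccati residual of the follower's equation at $K_j$ against its value at $K_{j-1}$ along the natural gradient step. This produces a term ${\bf U}_{K_{j-1}, L_{j-1}}^{\top}(-4\eta_{j-1} I + 4\eta_{j-1}^2 {\bf O}_{j-1}){\bf U}_{K_{j-1}, L_{j-1}}$ which, under the stepsize rule $\eta_{j-1} \le 1/\lambda_n({\bf O}_{j-1})$ and the bound $4\eta_{j-1}^2 {\bf O}_{j-1} \preceq 4\eta_{j-1} I$, is sign-definite. That sign-definiteness is exactly the inequality hypothesis required to invoke Theorem~\ref{thrm:dare_existence} (after the $W=-Z$ modification), yielding a maximal solution whose gain places the closed-loop spectrum in the closed unit disk; item $(c)$ is then read off from the comparison identity by retaining only the leader's natural-gradient contribution and discarding the sign-definite term produced by the follower's optimality, which accounts for the $\succeq$ in the statement. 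To upgrade the closed-disk conclusion to the strict Schur stability asserted in item $(b)$, I would run the usual contradiction: a unit-modulus eigenvector $v$ of the closed-loop matrix forces ${\bf U}_{K_{j-1}, L_{j-1}} v = 0$, hence $K_j v = K_{j-1} v$, and then the identity of Lemma~\ref{lemma:ng_comparison} forces $L_j v = L_{j-1} v$, contradicting the Schur stability of $(K_{j-1}, L_{j-1})$.

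The main obstacle, exactly as in Lemma~\ref{lemma:ng_key_lemma}, is discharging item $(a)$, since the existence argument above presupposes stabilizability of $(A - B_1 K_j, B_2)$, which is itself part of the claim. I would resolve this with the same pseudo-trick: parametrize the step by the ray $K_t = K_{j-1} - 2t\,{\bf U}_{K_{j-1}, L_{j-1}}$, let $[0,\sigma)$ be the maximal interval of stabilizability (an open condition, cf.\ Proposition~\ref{prop:stabilizability_open}), and suppose $\sigma \le 1/\lambda_n({\bf O}_{j-1})$. Along $t_\ell \uparrow \sigma$ the value matrices $Z_{t_\ell}$ are sandwiched in the reversed order $X_* \preceq Z_{t_\ell} \preceq \Delta$ dictated by the minimizing leader, hence bounded; a convergent subsequence yields a limit $Z$ solving the Riccati equation at $\sigma$, and the comparison identity shows the associated gain is in fact Schur, contradicting the failure of stabilizability at $\sigma$. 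The care required throughout is purely the bookkeeping of sign conventions through the $W = -Z$ transformation, together with checking that condition $(a2)$---not $(a1)$---supplies the invertibility and definiteness needed here, so that ${\bf O}_{j-1} \succ 0$ and the threshold $1/\lambda_n({\bf O}_{j-1})$ is well-posed at every step.
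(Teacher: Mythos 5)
Your proposal is correct and takes essentially the same route as the paper: the paper's proof likewise reduces to Lemma~\ref{lemma:ng_key_lemma} by the sign flip $\tilde{\Delta} = -\Delta$, $\tilde{\Lambda} = -\Lambda$ (so that both solve the modified DARE~\eqref{eq:max_mod_are}, which has the form of the minimization DARE), notes that the update on $K$ becomes a natural-gradient step in the flipped variables, and then invokes the same machinery---Theorem~\ref{thrm:dare_existence} for existence, the unit-modulus eigenvector contradiction, and the ray/limit-point argument for stabilizability---before translating the resulting inequality back. Your write-up simply makes explicit the steps (reversed monotonicity sandwich $X_* \preceq \Lambda \preceq \Delta$, the role of condition $(a2)$, the mirrored comparison identity) that the paper compresses into ``by the same machinery we employed in Lemma~\ref{lemma:ng_key_lemma}.''
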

\begin{proof}
  The proof proceeds similarly to Lemma~\ref{lemma:ng_key_lemma}. Indeed, putting $\tilde{\Delta} = - \Delta$ and $\tilde{\Lambda} = -\Lambda$, we observe $\tilde{\Delta}$ and $\tilde{\Lambda}$ solves the DARE~\eqref{eq:max_mod_are} and also note the DARE~\eqref{eq:max_mod_are} has the same form with the DARE considered in Lemma~\ref{lemma:ng_key_lemma}. Further, note that the update rule is equivalent to
\begin{align*}
  K_{j} &= K_{j-1} - \eta_{j-1} 2 (R_1 K_{j-1} - B_1^{\top}\Delta A_{K_{j-1}, L_{j-1}}) = K_{j-1} - \eta_{j-1} 2(R_1 K_{j-1} + B_1^{\top} \tilde{\Delta} A_{K_{j-1}, L_{j-1}})\\
        &=K_{j-1} + 2 \eta_{j-1}(-R_1K_{j-1} - B_1^{\top}\tilde{\Delta}A_{K_{j-1}, L_{j-1}}).
  \end{align*}
  In view of these observations, by the same machinery we employed in Lemma~\ref{lemma:ng_key_lemma}, we conclude
\begin{align*}
  \tilde{\Lambda} - \tilde{\Delta} &\succeq  A_{K_j, L_j}^{\top}( \tilde{\Lambda} - \tilde{\Delta}) A_{K_j, L_j} + \tilde{{\bf U}}_{K_{j-1}, L_{j-1}}^{\top} \left( 4 \eta_{j-1} I - 4 \eta_j^2 \tilde{{\bf O}}_{j-1} \right) \tilde{{\bf U}}_{K_{j-1}, L_{j-1}},
  \end{align*}
  where
\begin{align*}
  \tilde{\bf U}_{K_{j-1}, L_{j-1}} &= -R_1 K_{j-1} - B^{\top} \tilde{\Delta} A_{K_{j-1}, L_{j-1}}, \\
  \tilde{{\bf O}}_{j-1} &= R_1 - B_1^{\top} \tilde{\Delta} B_1 + B_1^{\top} \tilde{\Delta} B_2 (R_2 + B_2^{\top}\tilde{\Delta} B_2)^{-1} B_2^{\top} \tilde{\Delta} B_1.
  \end{align*}
  It thus follows that,
\begin{align*}
  -\Lambda + \Delta &\succeq  
       A_{K_j, L_j}^{\top}(\Delta - \Lambda) A_{K_j, L_j} + {\bf U}_{K_{j-1}, L_{j-1}}^{\top} \left( -4 \eta_{j-1} I + 4 \eta_j^2 {\bf O}_{j-1} \right) {\bf U}_{K_{j-1}, L_{j-1}}.
  \end{align*}
\end{proof}
Now it is straightforward to conclude the sublinear convergence rate of Algorithm~\ref{alg3}.
\begin{lemma}
  Suppose $\{K_j\}$ are the iterates generated by Algorithm~\ref{alg3}. Then we have
  \begin{align*}
    \sum_{j=0}^{\infty} N_h(K_j) \le \eta \left( h(K_0) - h(K_*)\right),
    \end{align*}
    where $\eta > 0$ is some positive number.
\end{lemma}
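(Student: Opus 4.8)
The plan is to mirror the convergence argument used for the natural‑gradient theorem of Algorithm~\ref{alg1}, transporting it through the substitution $\tilde X = -X$ that underlies the preceding lemma. First I would fix the stepsize $\eta_{j-1} = 1/(2\lambda_n({\bf O}_{j-1}))$ and record that, by the preceding lemma, every iterate $(K_j, L_j)$ is a stabilizing pair with $L_j$ the unique stabilizing maximizer of $f(K_j, \cdot)$, so that $h(K_j) = f(K_j, L_j) = \Tr(X_{K_j, L_j}{\bf \Sigma})$ is well defined. The same lemma supplies the matrix recursion $\Delta - \Lambda \succeq A_{K_j, L_j}^{\top}(\Delta - \Lambda)A_{K_j, L_j} + {\bf U}_{K_{j-1}, L_{j-1}}^{\top}(4\eta_{j-1}I - 4\eta_{j-1}^2{\bf O}_{j-1}){\bf U}_{K_{j-1}, L_{j-1}}$, where $\Delta = X_{K_{j-1}, L_{j-1}}$ and $\Lambda = X_{K_j, L_j}$; since $A_{K_j, L_j}$ is Schur and the added term is positive semidefinite for the chosen stepsize, this forces $\Delta \succeq \Lambda$. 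Thus $\{X_{K_j, L_j}\}$ is monotonically non‑increasing (player $K$ minimizes) and bounded below by $X_*$, hence bounded; consequently the matrices ${\bf O}_j$ are uniformly bounded above and $\eta := \inf_j 1/\lambda_n({\bf O}_j) > 0$.

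Next I would extract a per‑step decrease of $h$. Since $A_{K_j, L_j}$ is Schur, solving the Lyapunov inequality above as a convergent Neumann series and pairing against ${\bf \Sigma}$ gives
\begin{align*}
  h(K_{j-1}) - h(K_j) = \Tr\big((\Delta - \Lambda){\bf \Sigma}\big) \ge \Tr\Big[ Y_{K_j, L_j}\, {\bf U}_{K_{j-1}, L_{j-1}}^{\top}\big(4\eta_{j-1}I - 4\eta_{j-1}^2{\bf O}_{j-1}\big){\bf U}_{K_{j-1}, L_{j-1}}\Big],
\end{align*}
using the identity $\Tr[(\sum_\nu (A^{\top})^\nu M A^\nu){\bf \Sigma}] = \Tr[M Y]$ with $Y = Y_{K_j, L_j}$ solving~\eqref{eq:lyapunov_matrix_Y}. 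With the chosen stepsize one has $4\eta_{j-1}I - 4\eta_{j-1}^2{\bf O}_{j-1} \succeq 2\eta_{j-1}I$, and since $Y_{K_j, L_j} \succeq {\bf \Sigma} \succ 0$ is uniformly positive definite, the right‑hand side is at least a fixed positive multiple of $\|{\bf U}_{K_{j-1}, L_{j-1}}\|_F^2$. Recalling $N_h(K_{j-1}) = N_{f,K}(K_{j-1}, L_{j-1}) = 2{\bf U}_{K_{j-1}, L_{j-1}}$, this reads $h(K_{j-1}) - h(K_j) \ge \eta\,\|N_h(K_{j-1})\|_F^2$ after absorbing the universal constants ($\lambda_{\min}({\bf \Sigma})$ and the $1/4$ factor) into $\eta$.

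Finally I would telescope: summing over $j$ and using $h(K_N) \ge h(K_*)$ yields $\sum_{j} \|N_h(K_j)\|_F^2 \le \tfrac{1}{\eta}\big(h(K_0) - h(K_*)\big) < \infty$, which is the claimed sublinear decay of the natural gradient (and hence convergence to the unique stationary point, the Nash equilibrium). The only place genuine care is needed, and thus the main obstacle, is the verification that $\eta > 0$: this is not automatic but follows from the monotonicity and boundedness of $\{X_{K_j, L_j}\}$ guaranteed by the preceding lemma, which itself encodes the delicate stabilizability argument (the analogue of the half‑open interval $[0,\sigma)$ construction in Lemma~\ref{lemma:ng_key_lemma}). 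Everything downstream—the Neumann‑series representation, the trace identity, and the telescoping—is routine and identical in spirit to the convergence proof for Algorithm~\ref{alg1}, the sign reversal having already been handled by the $\tilde X = -X$ reduction in the preceding lemma.
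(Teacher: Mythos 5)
Your proposal is correct and takes essentially the same route the paper intends: the paper offers no explicit proof here, treating the lemma as a ``straightforward'' consequence of the preceding $K$-leader comparison lemma followed by exactly the telescoping argument used for Algorithm~\ref{alg1}, which is precisely what you carry out (including reading the stated bound in its meaningful form $\sum_j \|N_h(K_j)\|_F^2 \le \tfrac{1}{\eta}\left(h(K_0)-h(K_*)\right)$). You also implicitly fix the sign typo in item (c) of the preceding lemma --- the recursion term should indeed be ${\bf U}_{K_{j-1},L_{j-1}}^{\top}\left(4\eta_{j-1}I - 4\eta_{j-1}^2{\bf O}_{j-1}\right){\bf U}_{K_{j-1},L_{j-1}}$, as the substitution $\tilde{X}=-X$ in that lemma's own proof shows, and as is needed for the monotonicity $\Delta \succeq \Lambda$ on which the telescoping rests.
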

The analysis of quasi-Newton method with $K$ as the leader proceeds in a similar manner as Algorithm~\ref{alg2}; as such we omit the details here.
                  %\section{Simulation Results}
                  \section{Concluding Remarks}
                  The papers considers sequential policy-based algorithms for LQ dynamic games. We prove global convergence of several mixed-policy algorithms as well as identifying the role of control theoretic constructs in their analysis. Moreover, we have clarified a number of intricate issues pertaining to stabilization for LQ games and indefinite cost structure, while removing restrictive assumptions and circumventing the projection step.
                  %\begin{center}
                   \section*{Acknowledgements}
                  %\end{center}
                  The authors thank Henk van Waarde and Shahriar Talebi for many helpful discussions.
                  %The authors thank xxxx
           %        
%%%%
\section*{Appendix}
       
                  \begin{appendix}
                    \section{Hessian of $g(L)$}
                    \label{sec:hessian}
                    In this section, we compute the Hessian of $g(L)$ at a point of differentiation of $L_0$. Indeed, we shall assume stronger assumptions of $L_0$: there is a unique stabilizing minimizer of $f(K, L_0)$ over $L_0$, denoted by $K_0$. Throughout the section, we denote $A_0 = A - B_1K_0 - B_2L_0$. Note, by assumption the DARE is solvable
                    \begin{align*}
                      (A-B_2L_0)^{\top}X(A-B_2L_0) + Q-L_0^{\top}R_2L_0 + (A-B_2L_0)^{\top}XB_1(R_1+B_1^{\top}XB_1)^{-1}B_1^{\top}X(A-B_2L_0)=X,
                      \end{align*}
                      and the maximal solution $X_0$ is stabilizing, i.e., $K_0 = (R_1+B_1^{\top}X_0B_1)^{-1}B_1^{\top}X(A-B_2L_0)$ is stabilizing the system $(A-B_2L_0, B_1)$. As we have noted, the gradient of $g(L_0)$ is given by
                      \begin{align*}
                        \nabla g(L_0) = 2(-R_2 L_0 - B_2^{\top} X_0 (A-B_2 L_0 - B_1 K_0))Y_0,
                        \end{align*}
                        where $Y_0$ is the solution to the Lyapunov matrix equation
                        \begin{align*}
                          A_0Y A_0^{\top} + {\bf \Sigma} = Y.
                          \end{align*}
                        We now compute the Fr\'{e}chet derivative of $\phi(L_0) = 2(-R_2 L_0 - B_2^{\top}X_0 A_0)$. Note $\phi: \bb R^{m_2 \times n} \to \bb R^{m_2 \times n}$, the Fr\'{e}chet derivative $D\phi(L_0)$ is a bounded linear map in $\ca L(\bb R^{m_2 \times n}, \bb R^{m_2 \times n})$. So the action of $D\phi(L_0)$ at any $E \in \bb R^{m_1 \times n}$, denoted by $D\phi(L_0)[E] \eqqcolon {\bf D} \in \bb R^{m_2 \times n}$ is given by
                        \begin{align*}
                          {\bf D} = 2(-R_2 E - B_2^{\top}X_0'(E)(A-B_2L_0 - B_1 K_0) -B_2^{\top}X_0(-B_2 E) - B_2^{\top}X_0 (-B_2 K_0'(E))),
                          \end{align*}
                          where $X_0'(E) \in \bb R^{n \times n}$ (respectively, $K_0'(E)$) is the action of the Fr\'{e}chet derivative of $X_0$ (respectively, $K_0$) with respect to $L_0$. Here we concern $X_0, K_0$ as maps of $L$. Now $X_0'(E)$ satisfies
                          \begin{align*}
                            A_{0}^{\top} X_0'(E) A_0 - X_0'(E) - E^{\top} R_2 L_0 - L_0^{\top}R_2 E + K_0'(E)^{\top}R_1 K_0 + K_0^{\top}R_1 K_0'(E)\\
                            -(B_2 E + B_1 K_0'(E))^{\top}X_0 A_0 -A_0^{\top}X_0 (B_1K_0'(E)+ B_2 E) = 0.
                            \end{align*}
                            Noting $R_1K_0 - B_1^{\top}X_0 A_0 =0$, we have $X_0'(E)$ is the solution to the following Lyapunov equation
                            \begin{align*}
                            A_{0}^{\top} X_0'(E) A_0 - X_0'(E) - E^{\top} (R_2 L_0 + B_2^{\top}X_0 A_0) - L_0^{\top}(R_2 E + A_0^{\top}X_0 B_2 E) = 0.
                              \end{align*}
                              As $A_0$ is Schur, the solution exists and is unique
                              \begin{align*}
                                X_0'(E) = \sum_{j=0}^{\infty} (A_0^{\top})^j [- E^{\top} (R_2 L_0 + B_2^{\top}X_0 A_0) - L_0^{\top}(R_2 E + A_0^{\top}X_0 B_2 E) ] A_0^j.
                                \end{align*}
                                Similarly, we may compute
                                \begin{align*}
                                  K_0'(E) &= (R_1 + B_1^{\top}X_0B_1)^{-1}B_1^{\top}X_0(-B_2E) + (R_1+B_1X_0B_1)^{-1}B_1^{\top}X_0'(E) (A-B_2L_0)\\
                                  &\quad + (R_1+B_1^{\top}X_0 B_1)^{-1} B_1^{\top}X_0'(E)B_1 (R_1 +B_1^{\top}X_0B_1)^{-1} B_1^{\top}X_0 (A-B_2L_0),
                                  \end{align*}
                                  and
                                  \begin{align*}
                                    Y_0'(E) = \sum_{j=0}^{\infty} A_0^j \left[ A_0Y_0(-B_2E)^{\top} + (-B_2E)Y_0 A_0^{\top}\right](A_0^{\top})^j.
                                    \end{align*}
                                  Combining the computations, we have the action of the Hessian is given by
                                  \begin{align*}
                                    \langle \nabla^2 g(L) E, E \rangle &= 2\langle -R_2 + B_2^{\top}X_0B_2 - B_2^{\top}X_0 B_1(R_1+B_1^{\top}X_0 B_1)^{-1}B_1^{\top}X_0B_2)E, E\rangle \\
                                                                       &\quad + 2\langle -B_2^{\top}X_0'(E)A_0-B_2^{\top}X_0 B_1 (R_1+B_1X_0B_1)^{-1}B_1^{\top}X_0'(E) (A-B_2L_0),  E\rangle \\
                                                                       &\quad + 2\langle B_2^{\top}X_0B_1(R_1+B_1^{\top}X_0 B_1)^{-1} B_1^{\top}X_0'(E) B_1 (R_1 +B_1^{\top}X_0B_1)^{-1} B_1^{\top}X_0 (A-B_2L_0),E\rangle \\
                                    &\quad + 2 \langle (-R_2 L_0 - B_2^{\top} X_0 A_0)Y_0'(E), E\rangle
                                    \end{align*}
                                  It is instructive to note that at $L_*$, $X_*'(E) = 0$ since $-R_2 L_* - B_2^{\top}X_* A_* = 0$. So the action of Hessian at $L_*$ is given by
                                  \begin{align*}
                                    \nabla^2 g(L_*)[E, E] = \langle \left( -R_2 + B_2^{\top}X_0B_2 - B_2^{\top}X_0 B_1(R_1+B_1^{\top}X_0 B_1)^{-1}B_1^{\top}X_0B_2)\right)EY_0, E\rangle.
                                    \end{align*}
                                    That is, $\nabla^2 g(L_*)$ is a positive definite operator by condition $(a1)$ in the assumption. Hence, $-g(L)$ is locally strongly convex in a convex neighborhood.
                                    \begin{remark}
                                      If we ignore the formality, the above computation is nothing but a linear approximation.
                                      \end{remark}
    \section{Gradient Policy Analysis for Nonstandard LQR}
    \label{sec:gd_inner}
    This section is devoted to the proof of Theorem~\ref{thrm:gd_inner_convergence}. As it was pointed out previously, the strategy for devising a stepsize guaranteeing linear convergence for the LQ game setup is similar to the one presented in~\cite{bu2019lqr}. However, the convergence analysis for the game setup is more involved as one can not estimate the needed quantities using the current function values due to the indefiniteness of the term $Q-L^{\top}R_2L$. However, as we will show, a perturbation bound would circumvent this issue and allows deriving the required stepsize.\par
In order to simplify the notation, let
\begin{align*}
\psi(M) \coloneqq f(M, L), \qquad {\bf U} = R_1M-B_1^{\top}X(A-B_2L-B_1M), \qquad A_M = A-B_2L-B_1 M.
\end{align*}
 Note that in our analysis $L$ is always fixed; we also adopt the notation $A_L \coloneqq A-B_2L$. \\
If $M_\eta = A-B_2 L - B_1 (M - \eta 2 {\bf U} Y)$ is stabilizing,     \begin{align*}
      \psi(M) - \psi(M_\eta) = 4\eta \Tr\left( {\bf U}^{\top} {\bf U} (Y Y(\eta) - \eta a Y Y_{\eta} Y) \right),
    \end{align*}
    where $a = \lambda_n(R_1 + B_1^{\top}M B_1)$, and $Y(\eta)$ solves the Lyapunov matrix equation
    \begin{align*}
      (A-B_2L-B_1 M_{\eta})^{\top} Y(\eta) (A-B_2L-B_1 M_{\eta}) + {\bf \Sigma} = Y(\eta).
    \end{align*}
    Now define a univariate function $\phi$ as,
    \begin{align*}
      \phi(\eta) = \Tr\left( {\bf U}^{\top} {\bf U} (Y Y(\eta) - \eta a Y Y_{\eta} Y) \right).
    \end{align*}
    We observe that $\phi(\eta)$ is well-defined locally around $0$ by openness of the set $\ca S_{L}$.\footnote{$\phi$ is well-defined only if $A-B_2 L - B_1 M_{\eta}$ is Schur.}
%Denote $[0, \eta_0)$ the largest half open interval such that $\phi$ is defined. Namely, $M_\eta$ is stabilizing for $\eta < \eta_0$ and $M_{\eta_0}$ is marginally stabilizing by continuity. 
Further, observe that $\phi(0) > 0$ if the gradient does not vanish at $M$. Now our goal is to characterize a step size such that $\phi(\eta) > 0$. In this direction, we first observe a perturbation bound on $Y(\eta)$.
    \begin{proposition}
Putting $\mu_1 = \|Y\|_2  \|B_1 {\bf U} Y\|_2^2/\lambda_1({\bf \Sigma})$ and $\mu_2 = \|Y\|_2 \|B_1 {\bf U}Y\|_2\|A-B_2L\|_2/\lambda_1({\bf \Sigma}) $, if we let
\begin{align*}
              \eta_0 = \frac{\sqrt{1+ \frac{\mu_2^2}{\mu_1}}}{2\sqrt{\mu_1}} - \frac{\mu_2}{2\mu_1},
  \end{align*} and supposing that
      $A_{\eta} = A-B_2L-B_1 M_{\eta}$ is Schur stable for every $\eta \le \eta_0$, then for all $\eta \le \eta_0$,
      \begin{align*}
        \|Y(\eta)\|_2 \le \beta_0 \|Y\|_2,
        \end{align*}
        where $\beta_0 = \frac{1}{1- 4 \mu_1 \eta_0^2 - 4\mu_2 \eta_0} > 0$.
      \end{proposition}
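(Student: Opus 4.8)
The plan is to turn the two Lyapunov equations into a single self-referential inequality for $\|Y(\eta)\|_2$ and then solve it, the step-size threshold $\eta_0$ being exactly what keeps the resulting geometric factor below one. Write $A_M \coloneqq A - B_2 L - B_1 M$ for the base closed-loop matrix, which is Schur because $M \in \ca S_L$. The gradient step perturbs it linearly: since $M_\eta = M - 2\eta {\bf U} Y$, one has $A_\eta = A - B_2 L - B_1 M_\eta = A_M + \Delta_\eta$ with $\Delta_\eta \coloneqq 2\eta B_1 {\bf U} Y$ and $\|\Delta_\eta\|_2 = 2\eta \|B_1 {\bf U} Y\|_2$. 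The standing hypothesis that $A_\eta$ is Schur for every $\eta \le \eta_0$ guarantees that $Y(\eta)$ is well defined and equals the convergent series $\sum_{k \ge 0} (A_\eta^\top)^k {\bf \Sigma} A_\eta^k$, which legitimizes the manipulations below.

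First I would subtract the equation $Y = {\bf \Sigma} + A_M^\top Y A_M$ from $Y(\eta) = {\bf \Sigma} + A_\eta^\top Y(\eta) A_\eta$. Adding and subtracting $A_M^\top Y(\eta) A_M$ gives $Y(\eta) - Y = A_M^\top (Y(\eta) - Y) A_M + G$, where the symmetric matrix $G \coloneqq A_M^\top Y(\eta) \Delta_\eta + \Delta_\eta^\top Y(\eta) A_M + \Delta_\eta^\top Y(\eta) \Delta_\eta$ collects all the perturbation terms. Because $A_M$ is Schur, this is itself a discrete Lyapunov equation in the unknown $Y(\eta) - Y$, so $Y(\eta) - Y = \sum_{k \ge 0} (A_M^\top)^k G A_M^k$.

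The crucial device is to replace this a priori uncontrolled series by $Y$ itself. Since ${\bf \Sigma} \succeq \lambda_1({\bf \Sigma}) I$, every symmetric $G$ obeys $G \preceq \|G\|_2 I \preceq \tfrac{\|G\|_2}{\lambda_1({\bf \Sigma})} {\bf \Sigma}$; applying the monotone Lyapunov map $W \mapsto \sum_k (A_M^\top)^k W A_M^k$ and invoking $Y = \sum_k (A_M^\top)^k {\bf \Sigma} A_M^k$ yields $Y(\eta) - Y \preceq \tfrac{\|G\|_2}{\lambda_1({\bf \Sigma})} Y$, and hence $\|Y(\eta)\|_2 \le \big(1 + \tfrac{\|G\|_2}{\lambda_1({\bf \Sigma})}\big)\|Y\|_2$. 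Estimating the two linear terms and the quadratic term of $G$ gives $\|G\|_2 \le \|Y(\eta)\|_2\,\|\Delta_\eta\|_2\,(2\|A - B_2 L\|_2 + \|\Delta_\eta\|_2)$; substituting $\|\Delta_\eta\|_2 = 2\eta\|B_1 {\bf U} Y\|_2$ and recognizing the combinations defining $\mu_1$ and $\mu_2$ turns the previous estimate into the self-consistent inequality $\|Y(\eta)\|_2 \le \|Y\|_2 + (4\mu_1 \eta^2 + 4\mu_2 \eta)\,\|Y(\eta)\|_2$.

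It remains to solve this inequality. Writing $c(\eta) \coloneqq 4\mu_1 \eta^2 + 4\mu_2 \eta$, which is increasing on $\eta \ge 0$, rearrangement gives $\|Y(\eta)\|_2 \le \|Y\|_2 / (1 - c(\eta))$ whenever $c(\eta) < 1$; choosing $\eta_0$ as the positive root of the controlling quadratic keeps the denominator positive, and monotonicity of $c$ propagates the bound to all $\eta \le \eta_0$, giving $\|Y(\eta)\|_2 \le \beta_0 \|Y\|_2$ with $\beta_0 = (1 - 4\mu_1\eta_0^2 - 4\mu_2 \eta_0)^{-1}$. I expect the main obstacle to be the absorption step: the naive series $\sum_k (A_M^\top)^k A_M^k$ cannot be bounded uniformly in the problem data, and it is precisely the lower bound ${\bf \Sigma} \succeq \lambda_1({\bf \Sigma}) I$ that lets us trade it for the already-available matrix $Y$ at the cost of the factor $\lambda_1({\bf \Sigma})^{-1}$. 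A secondary subtlety is that the estimate for $\|G\|_2$ reintroduces $\|Y(\eta)\|_2$ on the right, so the argument only closes as a fixed point once $\eta$ is restricted below $\eta_0$.
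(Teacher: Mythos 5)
Your proposal is correct and takes essentially the same route as the paper's own proof: difference the two Lyapunov equations, dominate the perturbation term via $G \preceq \|G\|_2 I \preceq \frac{\|G\|_2}{\lambda_1({\bf \Sigma})}{\bf \Sigma}$, use monotonicity of the Lyapunov map to absorb it into $Y$ itself, and close the self-referential inequality $\|Y(\eta)\|_2\left(1 - 4\mu_1\eta^2 - 4\mu_2\eta\right) \le \|Y\|_2$. The only deviations are notational (you use the $A^{\top} Y A$ form of the Gramian equation where the paper uses $A Y A^{\top}$, and your $G$ carries $A - B_2 L - B_1 M$ in the cross terms before you bound by $\|A-B_2L\|_2$ to match the statement's $\mu_2$, exactly the same identification the paper makes), so you also inherit the paper's own loose end that $1 - 4\mu_1\eta_0^2 - 4\mu_2\eta_0$ is exactly zero at the stated $\eta_0$, forcing $\eta$ strictly below $\eta_0$ for $\beta_0$ to be finite.
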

      \begin{proof}
        Taking the difference of the corresponding Lyapunov matrix equations, we have
        \begin{align*}
          Y(\eta) -Y - (A_L-B_1M) (Y(\eta)-Y) (A_L-B_1M)^{\top} &= A_L Y(\eta) 2 \eta (B_1{\bf U} Y)^{\top} + 2\eta B_1{\bf U} Y Y(\eta) A_L \\
          &\quad + 4\eta^2 B_1{\bf U}Y Y(\eta) (B_1 {\bf U} Y)^{\top}\\
                                                         &\preceq \|Y_{\eta}\|_2 \left( 4\eta \|B_1 {\bf U} Y\|_2\|A_L\| + 4\eta^2 \|B_1 {\bf U}Y\|_2^2 \right) I \\
          &\preceq \|Y_{\eta}\|_2 \left( 4\eta \|B_1 {\bf U} Y\|_2\|A_L\| + 4\eta^2 \|B_1 {\bf U}Y\|_2^2 \right) \frac{{\bf \Sigma}}{\lambda_1({\bf \Sigma})}.
        \end{align*}
        It thus follows that,
        \begin{align*}
          Y_{\eta} - Y \preceq \frac{\|Y_{\eta}\|_2 \left( 4\eta \|B_1 {\bf U} Y\|_2\|A_L\| + 4\eta^2 \|B_1 {\bf U}Y\|_2^2 \right) }{\lambda_1({\bf \Sigma})} Y.
          \end{align*}
        Hence,
          \begin{align*}
            \|Y_{\eta}\|_2 \left( 1- \frac{\|Y\|_2 \left( 4\eta \|B_1 {\bf U} Y\|_2\|A_L\| + 4\eta^2 \|B_1 {\bf U}Y\|_2^2 \right) } {\lambda_1({\bf \Sigma})} \right) \le \|Y\|_2.
            \end{align*}
            The proof is completed by a direct computation and noting that $1/\beta_0 = 1-\mu_1 \eta_0^2 - 4 \mu_2 \eta_0 > 0$ with the choice of $\eta_0$ and for every $\eta \le \eta_0$, 
            \begin{align*}
              1-4\mu_1 \eta^2 - 4 \mu_2 \eta \ge 1-4\mu_1 \eta_0^2 - 4 \mu_2 \eta_0.
            \end{align*}
        \end{proof}
        %We note if $\eta' \le \eta_0$ and $A_{\eta'}$ is Schur stable, then $\|Y_{\eta'}\|_2 \le \beta_{\eta'}\|Y\|_2 \le \beta_{\eta_0} \|Y\|_2$.\\
        We now present an important result for our analysis. The basic idea of this lemma is as follows: if $[0,c)$ is the largest interval such that $A_{t}$ is Schur stable for every $t \in [0, c)$ and $A_c$ is marginally Schur stable,\footnote{Such a $c$ exists by the openness of the set of stabilizing gains.} then we can find a number $c_0 < c$ such that $\psi(M_s) \le \psi(M)$ for every $s \in [0, c_0]$.
    \begin{lemma}
      Let $c$ be the largest real positive number such that $A_{t}$ is Schur stable for every $t \in [0, c)$ and $A_c$ is marginally Schur stable\footnote{Here we have assumed that $c$ is not $+\infty$. Of course, if $c=+\infty$, then any stepsize would remain stabilizing.}. Let
\begin{align*}
  a_1 &= a \beta_0\lambda_n(Y) + 4\|{\bf U}\|_2 \beta_0 [\lambda_n(Y)]^2,\\
  a_2 &= a 4\|{\bf U}\|_2 \beta_0 [\lambda_n(Y)]^2;
  \end{align*}
then with
      \begin{align*}
        \eta_1 \le \min(c-\varepsilon, \eta_0, c_0),
        \end{align*}
        where $\varepsilon > 0$ is an arbitrary positive real number and
        \begin{align*}
          c_0 < \sqrt{\frac{ 1}{a_2} + \frac{a_1^2}{4a_2^2}} - \frac{a_2}{2a_1},
          \end{align*}
        one has $\phi(\eta_1) \ge 0$.
      \end{lemma}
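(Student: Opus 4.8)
The plan is to read $\phi(\eta)\ge0$ as a \emph{sufficient-decrease} estimate anchored at the strictly positive value $\phi(0)$. Set $W={\bf U}^{\top}{\bf U}\succeq0$. First I would note that at $\eta=0$ the closed loop $A_{\eta}$ equals $A-B_2L-B_1M$, so $Y(0)=Y$ and the decrease identity stated above collapses to $\phi(0)=\Tr(WY^{2})=\|{\bf U}Y\|_F^{2}=\tfrac14\|\nabla\psi(M)\|_F^{2}$; thus $\phi(0)>0$ unless ${\bf U}=0$, in which case $\phi\equiv0$ and there is nothing to prove. Each of the three bounds on $\eta_1$ then plays a distinct role: $\eta_1\le c-\varepsilon<c$ forces $A_{\eta_1}$ to be Schur (by the definition of $c$), so $Y(\eta_1)\succ0$ exists and $\phi(\eta_1)$ is well-defined; $\eta_1\le\eta_0$ activates the perturbation bound $\|Y(\eta)\|_2\le\beta_0\|Y\|_2=\beta_0\lambda_n(Y)$ of the preceding Proposition, together with the linear-plus-quadratic estimate on $\|Y(\eta)-Y\|_2$ that its proof produces; and $\eta_1\le c_0$ will render the resulting scalar estimate nonnegative.

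Next I would expand around $\eta=0$ by writing $Y(\eta)=Y+(Y(\eta)-Y)$, obtaining the exact identity $\phi(\eta)=\phi(0)+\Tr(WY(Y(\eta)-Y))-\eta a\,\Tr(WYY(\eta)Y)$. It then remains to bound the two error terms from below. For the last term I would use $YWY\succeq0$ and $Y(\eta)\succeq0$ to write $0\le\Tr(WYY(\eta)Y)=\Tr((YWY)Y(\eta))\le\|Y(\eta)\|_2\,\Tr(WY^{2})$ and substitute $\|Y(\eta)\|_2\le\beta_0\lambda_n(Y)$; since $\Tr(WY^2)=\phi(0)$ this is an $O(\eta)$ multiple of $\phi(0)$. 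For the cross term I would apply a trace/operator-norm inequality (valid since $W\succeq0$) to control $|\Tr(WY(Y(\eta)-Y))|$ by $\lambda_n(Y)\|Y(\eta)-Y\|_2$ times a $W$-dependent factor, and then insert the Proposition's bound on $\|Y(\eta)-Y\|_2$, whose linear-in-$\eta$ and quadratic-in-$\eta$ parts generate, respectively, the $O(\eta)$ and $O(\eta^2)$ contributions. Assembling the $O(\eta)$ contributions into a constant $a_1$ and the $O(\eta^2)$ contributions into a constant $a_2$ (built from $a$, $\beta_0$, $\lambda_n(Y)$ and $\|{\bf U}\|_2$ as displayed) and requiring the total error not to exceed $\phi(0)$ reduces the claim to the scalar inequality $a_2\eta^{2}+a_1\eta\le1$.

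Finally, the positive root of $a_2\eta^{2}+a_1\eta=1$ is $\sqrt{1/a_2+a_1^{2}/(4a_2^{2})}-a_1/(2a_2)$, so taking $c_0$ below this root guarantees $a_2\eta^{2}+a_1\eta\le1$ for every $\eta\le c_0$, hence $\phi(\eta)\ge0$; choosing $\eta_1\le\min(c-\varepsilon,\eta_0,c_0)$ then meets the well-definedness, perturbation-regime, and sign requirements simultaneously. I expect the main obstacle to be the cross term $\Tr(WY(Y(\eta)-Y))$: because $Y(Y(\eta)-Y)$ is not symmetric it cannot be signed, so it must be controlled purely in operator norm through the preceding Proposition, and one must separate the $\eta$ and $\eta^{2}$ contributions so that they are routed correctly into $a_1$ and $a_2$; keeping track of which problem constants are absorbed into the displayed $a_1,a_2$ is the delicate part of the bookkeeping. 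A secondary point is that $Y(\eta)$ exists only for $\eta<c$, which is precisely why the strict margin $c-\varepsilon$, rather than $c$ itself, appears in the admissible stepsize.
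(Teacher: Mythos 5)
Your overall strategy is the same one the paper intends: its ``proof'' is only a pointer to Lemma 5.5 of the cited LQR paper, with the estimate of $Y(\eta)$ replaced by the bound of the preceding proposition, and that is precisely your plan --- anchor at $\phi(0)=\Tr(WY^2)=\|{\bf U}Y\|_F^2$, use $\eta_1\le c-\varepsilon$ for well-definedness of $Y(\eta_1)$, use $\eta_1\le\eta_0$ to invoke the perturbation bound, and let $\eta_1\le c_0$ enforce a scalar quadratic inequality. Your expansion $\phi(\eta)=\phi(0)+\Tr\bigl(WY(Y(\eta)-Y)\bigr)-\eta a\Tr\bigl(WYY(\eta)Y\bigr)$ is exact, your bound $\Tr\bigl(WYY(\eta)Y\bigr)=\Tr\bigl((YWY)Y(\eta)\bigr)\le\beta_0\lambda_n(Y)\phi(0)$ is valid, and your root $\sqrt{1/a_2+a_1^2/(4a_2^2)}-a_1/(2a_2)$ is the correct positive root of $a_2\eta^2+a_1\eta=1$ (the paper's displayed $-a_2/(2a_1)$ is evidently a typo, so this discrepancy is not a fault of yours).

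The genuine gap is the step you label ``delicate bookkeeping'' and then assert instead of carrying out: your routing of the error terms cannot produce the displayed constants. Concretely, the paper's $a_2=a\cdot4\|{\bf U}\|_2\beta_0[\lambda_n(Y)]^2$ is proportional to $a$, but in your decomposition the only $O(\eta^2)$ contribution comes from the quadratic part of $\|Y(\eta)-Y\|_2$ inside the cross term $\Tr\bigl(WY(Y(\eta)-Y)\bigr)$, which contains no factor of $a$ at all ($a$ enters only through the term you bounded purely linearly in $\eta$); so your $a_2'$ cannot equal the displayed $a_2$. Moreover, any estimate of $\|Y(\eta)-Y\|_2$ obtained from the proposition necessarily carries $\mu_1,\mu_2$, i.e.\ $\|B_1{\bf U}Y\|_2$, $\|A-B_2L\|_2$ and $\lambda_1({\bf \Sigma})$, and converting your $W$-dependent factor $\Tr(W)=\|{\bf U}\|_F^2$ into a multiple of $\phi(0)=\|{\bf U}Y\|_F^2$ costs an additional factor of order $1/\lambda_1(Y)^2$; none of these quantities appears in $a_1,a_2$. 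As written, then, your argument proves a lemma of the same form with some other constants $a_1',a_2'$ (generally larger, hence a smaller admissible $c_0$), not the statement with the displayed $a_1,a_2$. To close the gap you must either exhibit bookkeeping in which these extra constants genuinely cancel (they do not appear to), or prove the statement with your own explicit constants and note that the displayed $a_1,a_2$ need amending --- which, given that the paper itself only cites the computation, is probably the honest resolution.
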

      \begin{proof}
        The computations follow a similar method used in~\cite{bu2019lqr} by replacing the estimate of $Y(\theta)$ by the bound in the above proposition (see details in Lemma $5.5$ in~\cite{bu2019lqr}). 
        %Putting $b = $ and $c = a 4\|{\bf U}\|_2 \beta_0 [\lambda_n(Y)]^2$.
        \end{proof}
        If one could explicitly compute the $c$ in the above result, then a deterministic choice of stepsize could be chosen; however, this is not feasible. Fortunately, we can show that $c > \min(\eta_0, c_0)$. This would then imply that one can choose the stepsize $\eta = \min(\eta_0, c_0)$.
        \begin{theorem}
          With the stepsize $\eta = \min(\eta_0, c_0)$,  $M_{\eta}$ remains stabilizing and $\phi(\eta) \ge 0$.
          \end{theorem}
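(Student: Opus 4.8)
The strategy, as anticipated in the paragraph preceding the statement, is to reduce everything to the single inequality $c > \min(\eta_0, c_0)$, where $c$ is the stability horizon introduced in the preceding Lemma. Once that is established, both conclusions drop out: with $\eta = \min(\eta_0, c_0) < c$ the point $\eta$ lies in the open interval $[0,c)$ on which $A_t$ is Schur, so $A_\eta$ is Schur and $M_\eta$ remains stabilizing; and the Lemma delivers $\phi(\eta)\ge 0$ after a harmless choice of $\varepsilon$. First I would record the setup: since $M\in\ca S_L$, the matrix $A_0 = A-B_2L-B_1M$ is Schur, so by openness of the set of stabilizing gains $c$ is well-defined and positive, with $A_t$ Schur for $t\in[0,c)$ and $A_c$ marginally stable when $c<\infty$. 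The case $c=+\infty$ is trivial (every stepsize is stabilizing and $\min(c-\varepsilon,\eta_0,c_0)=\min(\eta_0,c_0)$), so I assume $c<\infty$.

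The heart of the proof is a contradiction establishing $c>\min(\eta_0,c_0)$. Suppose instead $c\le\min(\eta_0,c_0)$, so in particular $c\le\eta_0$. For each fixed $t<c$ the matrix $A_t$ is Schur and $t<\eta_0$; since the estimate $\|Y(\eta)\|_2\le\beta_0\|Y\|_2$ of the preceding perturbation bound is obtained pointwise for any single stabilizing $\eta\le\eta_0$, with $\beta_0$ depending only on $\eta_0$ and the frozen data, this yields the uniform bound $\|Y(t)\|_2\le\beta_0\|Y\|_2$ for all $t<c$. Choosing a sequence $t_l\uparrow c$, the bounded sequence $\{Y(t_l)\}$ admits, by Bolzano--Weierstrass, a limit point $Y_c$; passing to the limit in the Lyapunov identity $A_{t_l}^{\top}Y(t_l)A_{t_l}+{\bf\Sigma}=Y(t_l)$ and using continuity of $t\mapsto A_t$ gives $A_c^{\top}Y_cA_c+{\bf\Sigma}=Y_c$.

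The contradiction is then extracted from the marginal stability of $A_c$. Let $v\neq 0$ satisfy $A_cv=\mu v$ with $|\mu|=1$. Multiplying the limiting identity on the left by $v^{*}$ and on the right by $v$ gives $v^{*}Y_cv = |\mu|^2 v^{*}Y_cv + v^{*}{\bf\Sigma}v = v^{*}Y_cv + v^{*}{\bf\Sigma}v$, hence $v^{*}{\bf\Sigma}v=0$; since ${\bf\Sigma}=\sum_i w_iw_i^{\top}$ is positive definite this is impossible. I expect this step to be the main obstacle: it is precisely the statement that the covariance matrix $Y$ blows up as the closed-loop matrix approaches the stability boundary, so that a uniform a priori bound on $\|Y(t)\|_2$ forces the ray of iterates to stay strictly interior to $\ca S_L$. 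It is exactly here that the positive definiteness of ${\bf\Sigma}$ (equivalently, the spanning property of $\{w_i\}$) is indispensable, and the device mirrors the boundary arguments already used in \cref{lemma:ng_key_lemma} and \cref{thrm:ng_inner_convergence}. Thus $c>\min(\eta_0,c_0)$.

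It remains to convert this into the two assertions. Since $\eta=\min(\eta_0,c_0)<c$, we have $\eta\in[0,c)$, so $A_\eta$ is Schur and $M_\eta$ is stabilizing. For the descent inequality I would pick any $\varepsilon\in(0,\,c-\eta)$, which is possible because $c-\eta>0$; then $\eta<c-\varepsilon$ while $\eta\le\eta_0$ and $\eta\le c_0$ hold by definition of $\eta$, so $\eta\le\min(c-\varepsilon,\eta_0,c_0)$ and the preceding Lemma yields $\phi(\eta)\ge 0$. This completes the plan.
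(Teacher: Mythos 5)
Your proposal is correct, and it reaches the two conclusions by the same outer logic as the paper (show the stability horizon $c$ exceeds $\min(\eta_0,c_0)$, then invoke the preceding lemma with a small $\varepsilon$), but the mechanism of the key contradiction is genuinely different. The paper's proof works with the \emph{value matrices}: along $t_i \to \eta'$ it uses $\phi(s)\ge 0$ (from the lemma) to get $\Tr(X^+{\bf \Sigma}) \le \Tr(X_{t_i}{\bf \Sigma}) \le \Tr(X{\bf \Sigma})$, and then --- crucially, since trace bounds alone do not bound a sign-indefinite sequence --- the maximal-solution lower bound $X_{t_i}\succeq X^+$ to conclude boundedness; a limit point solves the Lyapunov equation at the marginally stable closed-loop matrix, and the contradiction is extracted as in Theorem~\ref{thrm:ng_inner_convergence}, via the Comparison Lemma~\ref{lemma:ng_comparison} and the stabilizing gain $K^+$, which forces $M_{\eta'}v = K^+v$ on a unit-circle eigenvector. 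You instead work with the \emph{Gramians} $Y(t)$: the perturbation proposition, applied pointwise (a reading its proof does support, since the derivation of $\|Y(\eta)\|_2\bigl(1-4\mu_1\eta^2-4\mu_2\eta\bigr)\le \|Y\|_2$ only needs that particular $\eta$ to be stabilizing, not all $\eta \le \eta_0$), yields a uniform bound for $t<c$, and the contradiction then comes directly from ${\bf \Sigma}\succ 0$ at the marginally stable limit. Two remarks. First, a sign-of-transpose fix: $Y$ solves the equation $A_{K,L}\,Y\,A_{K,L}^{\top}+{\bf \Sigma}=Y$ (cf.~\eqref{eq:lyapunov_matrix_Y}), so in your eigenvector step you should take $v$ with $A_c^{\top}v=\mu v$, $\abs{\mu}=1$, giving $v^{*}Y_cv = \abs{\mu}^2 v^{*}Y_cv + v^{*}{\bf \Sigma}v$ and hence $v^{*}{\bf \Sigma}v=0$; the argument is otherwise unchanged. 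Second, on what each approach buys: yours is more self-contained for the stabilization claim --- it needs neither $\phi(s)\ge 0$, nor $X^+$, nor the comparison lemma, and it isolates a fact independent of the indefiniteness of $Q-L^{\top}R_2L$, namely that the closed-loop Gramian must diverge at the stability boundary whenever ${\bf \Sigma}\succ 0$; the paper's route, by contrast, recycles the boundary argument of Theorem~\ref{thrm:ng_inner_convergence} and produces along the way the two-sided cost bounds used in the ensuing rate analysis. Both proofs share the caveat that $\beta_0$ (yours through the uniform Gramian bound, the paper's through the constants $a_1,a_2$ and $c_0$) must be finite, which requires reading $\eta_0$ as lying strictly inside the region where $1-4\mu_1\eta^2-4\mu_2\eta>0$.
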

\begin{proof}
 Let $\eta = \min(\eta_0,c_0)$. It suffices to prove that for every $t \in [0, \eta]$, $A_t$ is Schur stabilizing and $\phi(t) \ge 0$. We prove this by contradiction. Suppose that this is not the case. Then by continuity of eigenvalues, there exists a number $\eta' \le \eta$ such that $A_s$ is stabilizing for every $s \in [0, \eta')$ and $M_{\eta'}$ is stabilizing. If this is the case, the choice of $\eta_0, c_0$ guarantees that for every $s \in [0, \eta')$, $\phi(s)$ is well-defined and $\phi(s) \ge 0$.
  %\begin{enumerate}
    %\item $M_{\eta'}$ is stabilizing but $M_{\eta'}$ leaves the sublevel set.
    %\item $M_{\eta'}$ is not stabilizing.
    %\end{enumerate}
                          %We shall prove that both scenarios can not occur by establishing a contradiction.
                          %Suppose that scenario $(a)$ holds; note that the function $t \mapsto h(M - t \nabla h(M)) \eqqcolon \phi(t)$ is well-defined and continuous over $[0, \eta']$. Also observe that for sufficiently small $t>0$, we have $\phi(t) < \phi(0)$. Thus by the Intermediate Value Theorem, there must be some $\zeta < \eta'$ for which $h(M - \zeta \nabla h(M)) = h(M)$, establishing is a contradiction to the strict inequality. \\
                         % Now suppose that scenario $(b)$ holds; by continuity of eigenvalues, we can assume that there exists some $\zeta \le r$ for which $M_{i+1}(t) \coloneqq M_i - t \nabla_K h(M_i)$ is stabilizing while $M_{i+1}(\zeta)$ is marginally stabilizing. By the argument presented under the first scenario, $M_{i+1}(t)$ is contained in the sublevel set for every $t < \zeta$. 
Now take a sequence $t_i \to \eta'$ and consider the corresponding sequence of value matrices $\{X_{t_i}\}$. Note that the sequence of function values $\Tr(X_{t_i} {\bf \Sigma})$ satisfies,
\begin{align*}
  \Tr(X^+{\bf \Sigma}) \le \Tr(X_{t_i}{\bf \Sigma}) \le \Tr(X{\bf \Sigma}),
  \end{align*}
  since $\phi(t) \ge 0$.
But this implies that $\{X_{t_i}\}$ is a bounded sequence (note that the above inequality on function values does not guarantee the boundedness of the sequence; it is crucial that $X_{t_i} \succeq X^+$). 
%The sequence is bounded below by $X^+$ and also bounded above since,
                          %\begin{align*}
                            %\|X_{t_i}\|_2 \Tr({\bf \Sigma}) \le \Tr\left( X_{t_i} {\bf \Sigma} \right) \le \Tr \left( X_{M_i} {\bf \Sigma}\right).
                            %\end{align*}
                            Hence by a similar argument adopted in the proof of Theorem~\ref{thrm:ng_inner_convergence}, these observations establish
                             a contradiction, and as such, the proposed stepsize guarantees stabilization.
  \end{proof}
  It is now straightforward to conclude the convergence rate.
  \section{A Useful Control Theoretic Observation}
  In the proof to Lemma~\ref{lemma:ng_key_lemma} and~\ref{lemma:qn_key_lemma}, we have used the fact that the set $\{L: (A-B_2L, B_1) \text{ is stabilizable}\}$ is open; here is the justification.
  \begin{proposition}
    \label{prop:stabilizability_open}
    Suppose $A \in \bb R^{n \times n}$, $B_1 \in \bb R^{n \times m_1}$ and $B_2 \in \bb R^{n \times m_2}$ are fixed. Then the set
    \begin{align*}
      \ca L = \{L \in \bb R^{m_2 \times n}: (A-B_2 L, B_1) \text{ is stabilizable}\}
      \end{align*}
      is open in $\bb R^{m_2 \times n}$.
    \end{proposition}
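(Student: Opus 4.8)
The plan is to use the feedback characterization of stabilizability together with the fact that the set of Schur matrices is open, which lets me reuse a single stabilizing gain on a whole neighborhood rather than track how the optimal gain varies with $L$. First I would recall that, by the definition given in \S\ref{sec:notations}, the pair $(A - B_2 L, B_1)$ is stabilizable precisely when there exists a gain $K \in \bb R^{m_1 \times n}$ such that $(A - B_2 L) - B_1 K$ is Schur, i.e. $\rho\big((A - B_2 L) - B_1 K\big) < 1$.

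Next, fix an arbitrary $L_0 \in \ca L$ and, by stabilizability, choose a single gain $K_0$ with $\rho(A - B_2 L_0 - B_1 K_0) < 1$. The key observation is that, holding $K_0$ fixed, the map $L \mapsto A - B_2 L - B_1 K_0$ is affine (hence continuous) in $L$, while the spectral radius $\rho(\cdot)$ is a continuous function of the matrix entries (the eigenvalues are roots of the characteristic polynomial, whose coefficients depend continuously on the entries). Therefore the set $U = \{L : \rho(A - B_2 L - B_1 K_0) < 1\}$ is the preimage of the open set $[0,1)$ under a continuous map, hence open, and it contains $L_0$.

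Finally, I would conclude by noting that for every $L \in U$ the single fixed gain $K_0$ already renders $A - B_2 L - B_1 K_0$ Schur, so $(A - B_2 L, B_1)$ is stabilizable and thus $U \subseteq \ca L$. Consequently every point $L_0 \in \ca L$ admits an open neighborhood $U$ contained in $\ca L$, which is exactly the assertion that $\ca L$ is open. There is essentially no obstacle here; the only point worth flagging is the conceptual one, namely that stabilizability requires merely \emph{some} stabilizing feedback, so one may hold $K_0$ fixed across a whole neighborhood. An alternative route through the PBH rank test, $\rank[\lambda I - (A - B_2 L) \mid B_1] = n$ for all $|\lambda| \ge 1$, would be more delicate, since the offending eigenvalues $\lambda$ move with $L$ and range over the noncompact set $\{|\lambda| \ge 1\}$; the fixed-gain argument sidesteps this entirely and is the cleanest path.
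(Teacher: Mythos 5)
Your proof is correct. The paper establishes the same fact by a related but technically different route: instead of working with the spectral radius directly, it invokes the Lyapunov characterization of Schur stability --- $(A-B_2L,B_1)$ is stabilizable iff there exist $X \succ 0$ and $F$ with $(A-B_2L-B_1F)^{\top}X(A-B_2L-B_1F) - X \prec 0$ --- and then considers the continuous map $(X,L,F) \mapsto \lambda_{\max}\bigl((A-B_2L-B_1F)^{\top}X(A-B_2L-B_1F) - X\bigr)$; the set $\ca L$ is the projection onto the $L$-coordinate of the preimage of $(-\infty,0)$, and since projections are open maps, $\ca L$ is open. Both arguments share the essential trick of freezing a stabilizing certificate (your fixed gain $K_0$; the paper's triple $(X,L,F)$) so that only $L$ varies, but they differ in the continuity machinery: you rely on continuity of the spectral radius of a nonsymmetric matrix, which reduces to continuous dependence of polynomial roots on coefficients, whereas the paper trades this for the Lyapunov inequality, so that the only eigenvalue fact needed is continuity of $\lambda_{\max}$ on symmetric matrices (a more elementary fact, e.g.\ via the variational characterization), at the cost of invoking the Lyapunov theorem and the open-mapping property of projections. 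Your version is shorter and more self-contained; the paper's version is the one a control theorist would write when wishing to stay inside the LMI toolbox. Your closing remark about why the PBH test would be awkward here is apt and shows you chose the right characterization.
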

    \begin{proof}
      Recall a pair $(A-B_2L, B_1)$ is stabilizable if and only if there exists some $F \in \bb R^{m_1 \times n}$ such that $A-B_2 L-B_1 F$ is Schur. So $(A-B_2L, B_1)$ is stabilizable if and only if there exists $X \succ 0$ and $F \in \bb R^{m_1 \times n}$ such that
      $$(A-B_2L-B_1F)^{\top} X (A-B_2L-B_1F) - X \prec 0.$$
      Now consider the map $\psi: \bb S_n^{++}  \times \bb R^{m_2 \times n} \times \bb R^{m_1 \times n} \to \bb R$ by
      \begin{align*}
        (X, L, F) \mapsto (A-B_2L-B_1F)^{\top} X (A-B_2L-B_1F) - X \\
        \mapsto \lambda_{\max}\left( (A-B_2L-B_1F)^{\top} X (A-B_2L-B_1F) - X \right).
        \end{align*}
        The map $\psi$ is continuous as it is a composition of continuous maps. It thus follows that $\psi^{-1}( (-\infty, 0) )$ is open. We now observe that $\ca L \equiv \pi_2(\psi^{-1}(-\infty, 0))$ where $\pi_2$ is the projection onto the second coordinate. Since projection map is open map\footnote{A map $f: X \to Y$ is open if $f(U)$ is open in $Y$ whenever $U \subseteq X$ is an open set.}, $\ca L$ is open.
            \end{proof}
                    \end{appendix}
                    
\bibliographystyle{alpha}
\bibliography{ref}
\end{document}